\definecolor{DarkRed}{rgb}{0.5,0.1,0.1}
\definecolor{DarkBlue}{rgb}{0.1,0.1,0.5}
\def\BState{\State\hskip-\ALG@thistlm}
\newtheorem{theorem}{Theorem}
\newtheorem{lemma}{Lemma}[section]
\newtheorem{proposition}[lemma]{Proposition}
\newtheorem{corollary}[theorem]{Corollary}
\newtheorem{claim}[lemma]{Claim}
\newtheorem{fact}[lemma]{Fact}
\newtheorem{definition}[lemma]{Definition}
\newtheorem{problem}{Problem}
\newtheorem{remark}[lemma]{Remark}
\newtheorem{observation}[lemma]{Observation}
\newtheorem*{claim*}{Claim}
\newtheorem*{proposition*}{Proposition}
\newtheorem*{lemma*}{Lemma}
\newtheorem*{problem*}{Problem}
\newtheorem*{definition*}{Definition}
\newtheorem{mdresult}{Result}
\newenvironment{result}{\begin{mdframed}[backgroundcolor=lightgray!40,topline=false,rightline=false,leftline=false,bottomline=false,innertopmargin=2pt]\begin{mdresult}}{\end{mdresult}\end{mdframed}}
\DeclareMathOperator*{\argmax}{arg\,max}
\newcommand{\mi}[2]{\ensuremath{\II(#1 \,; #2)}}
\newcommand{\en}[1]{\ensuremath{\HH(#1)}}
\renewcommand{\qed}{\nobreak \ifvmode \relax \else
      \ifdim\lastskip<1.5em \hskip-\lastskip
      \hskip1.5em plus0em minus0.5em \fi \nobreak
      \vrule height0.75em width0.5em depth0.25em\fi}
\newcommand{\ourinfo}[1]{Department of Computer and Information Science, University of Pennsylvania. Supported in part by National Science Foundation grants CCF-1552909 and CCF-1617851.  \noindent Email: \texttt{#1}.}
\newcommand{\toShrink}{-.20cm}
\newcommand{\toShrinkEnu}{-.2cm}
\newenvironment{tbox}{\begin{tcolorbox}[
		enlarge top by=5pt,
		enlarge bottom by=5pt,
		breakable,%
		 boxsep=0pt,
                  left=4pt,
                  right=4pt,
                  top=10pt,
                  arc=0pt,
                  boxrule=1pt,toprule=1pt,
                  colback=white
                  ]%%
	}
{\end{tcolorbox}}
\newcommand{\textbox}[2]{
{
\begin{tbox}
\textbf{#1}
{#2}
\end{tbox}
}
}
\newcommand{\itfacts}[1]{Fact~\ref{fact:it-facts}-(\ref{part:#1})\xspace}
\newcommand{\Leq}[1]{\ensuremath{\underset{\textnormal{#1}}\leq}}
\newcommand{\Geq}[1]{\ensuremath{\underset{\textnormal{#1}}\geq}}
\newcommand{\Eq}[1]{\ensuremath{\underset{\textnormal{#1}}=}}
\newcommand{\tvd}[2]{\ensuremath{\norm{#1 - #2}_{tvd}}}
\newcommand{\Ot}{\ensuremath{\widetilde{O}}}
\newcommand{\eps}{\ensuremath{\varepsilon}}
\newcommand{\Paren}[1]{\Big(#1\Big)}
\newcommand{\Bracket}[1]{\Big[#1\Big]}
\newcommand{\bracket}[1]{\left[#1\right]}
\newcommand{\paren}[1]{\ensuremath{\left(#1\right)}\xspace}
\newcommand{\card}[1]{\left\vert{#1}\right\vert}
\newcommand{\Omgt}{\ensuremath{\widetilde{\Omega}}}
\newcommand{\IR}{\ensuremath{\mathbb{R}}}
\newcommand{\norm}[1]{\ensuremath{\|#1\|}}
\newcommand{\ceil}[1]{{\left\lceil{#1}\right\rceil}}
\newcommand{\set}[1]{\ensuremath{\left\{ #1 \right\}}}
\newcommand{\poly}{\mbox{\rm poly}}
\newcommand{\OPT}{\ensuremath{\textnormal{\mbox{\textsf{OPT}}}}\xspace}
\newcommand{\opt}{\textnormal{\ensuremath{\mbox{opt}}}\xspace}
\DeclareMathOperator*{\Exp}{\ensuremath{{\mathbb{E}}}}
\DeclareMathOperator*{\Prob}{\ensuremath{\textnormal{Pr}}}
\renewcommand{\Pr}{\Prob}
\newcommand{\Ex}{\Exp}
\newcommand{\etal}{et al.\xspace}
\newcommand{\dist}{\ensuremath{\mathcal{D}}}
\newcommand{\Yes}{\ensuremath{\textnormal{\textsf{Y}}}}
\newcommand{\No}{\ensuremath{\textnormal{\textsf{N}}}}
\newcommand{\distY}{\ensuremath{\mathcal{D}^{\Yes}}}
\newcommand{\distN}{\ensuremath{\mathcal{D}^{\No}}}
\newcommand{\FF}{\ensuremath{\mathcal{F}}}
\newcommand{\jstar}{\ensuremath{j^{\star}}}
\newcommand{\Prot}{\ensuremath{\Pi}}
\newcommand{\prot}{\ensuremath{\pi}}
\newcommand{\II}{\ensuremath{\mathbb{I}}}
\newcommand{\HH}{\ensuremath{\mathbb{H}}}
\newcommand{\Istar}{\ensuremath{I^{\star}}}
\newcommand{\event}{\ensuremath{\mathcal{E}}}
\renewcommand{\tvd}[2]{\ensuremath{\norm{#1-#2}}}
\newcommand{\errs}{\ensuremath{\textnormal{errs}}\xspace}
\newcommand{\DD}[2]{\ensuremath{\mathbb{D}(#1~||~#2)}}
\newcommand{\distribution}[1]{\ensuremath{\textnormal{dist}(#1)}}
\newcommand{\XX}{\ensuremath{\mathcal{X}}}
\renewcommand{\SS}{\ensuremath{\mathcal{S}}}
\newcommand{\CC}{\ensuremath{\mathcal{C}}}
\newcommand{\PP}{\ensuremath{\mathcal{P}}}
\newcommand{\cross}{\ensuremath{\textnormal{\textsf{X}}}}
\newcommand{\distp}{\dist^{\otimes}}
\renewcommand{\Yes}{\ensuremath{\textnormal{\textsf{Yes}}}\xspace}
\renewcommand{\No}{\ensuremath{\textnormal{\textsf{No}}}\xspace}
\newcommand{\rv}[1]{\ensuremath{{\mathsf{#1}}}}
\newcommand{\rProt}{\rv{\Prot}}
\newcommand{\rPhi}{\rv{\Phi}}
\newcommand{\rJ}{\rv{J}}
\newcommand{\rI}{\rv{I}}
\newcommand{\rIstar}{\rv{I}^{\star}}
\newcommand{\rRpub}{\ensuremath{\rv{R}^{\textnormal{\textsf{pub}}}}}
\newcommand{\rRpri}{\ensuremath{\rv{R}^{\textnormal{\textsf{pri}}}}}
\newcommand{\packing}{\ensuremath{\sigma}}
\newcommand{\labeling}{\ensuremath{\phi}}
\newcommand{\Labeling}{\ensuremath{\Phi}}
\newcommand{\Astar}{\ensuremath{A^{\star}}}
\newcommand{\rs}{Ruzsa-Szemer\'{e}di\xspace}
\title{Tight Bounds on the Round Complexity of the Distributed Maximum Coverage Problem}
\author{Sepehr Assadi\thanks{\ourinfo{\{sassadi,sanjeev\}@cis.upenn.edu}}  \and Sanjeev Khanna\footnotemark[1]}
\date{}
\begin{document}
\maketitle

\thispagestyle{empty}
\begin{abstract}
We study the maximum $k$-set coverage problem in the following distributed setting. A collection of sets $S_1,\ldots,S_m$ over a universe $[n]$ is partitioned
across $p$ machines and the goal is to find $k$ sets whose union covers the most number of elements. The computation proceeds in synchronous rounds. In each round, all machines simultaneously send a message to a central coordinator who then communicates back to all machines a summary to guide the computation for the next round. At the end of the last round, the coordinator outputs the answer. The main measures of efficiency in this setting are the approximation ratio of the returned solution, the communication cost of each machine, and the number of rounds of computation.

Our main result is an asymptotically tight bound on the tradeoff between these three measures for the distributed maximum coverage problem. We first show that any $r$-round protocol for this problem
either incurs a communication cost of $ k \cdot m^{\Omega(1/r)}$ or only achieves an approximation factor of $k^{\Omega(1/r)}$. This in particular implies that any protocol that simultaneously achieves good approximation ratio ($O(1)$ 
approximation) and good communication cost ($\Ot(n)$ communication per machine), essentially requires \emph{logarithmic} (in $k$) number of rounds. 
We complement our lower bound result by showing that there exist an $r$-round protocol that achieves an $\frac{e}{e-1}$-approximation (essentially best possible) with a communication cost of $k \cdot m^{O(1/r)}$ as well 
as an $r$-round protocol that achieves a $k^{O(1/r)}$-approximation with only $\Ot(n)$ communication per each machine (essentially best possible).

We further use our results in this distributed setting to obtain new bounds for the maximum coverage problem in two other main models of computation for massive
datasets, namely, the dynamic streaming model and the MapReduce model. 

\end{abstract}
\clearpage
\setcounter{page}{1}

\section{Introduction}\label{sec:intro}

A common paradigm for designing scalable algorithms for problems on massive data sets is to distribute the computation by partitioning the data across multiple machines interconnected via a communication network. The machines can then jointly compute a function on the union of their inputs by exchanging messages. A well-studied and important case of this paradigm is the \emph{coordinator model} (see, e.g.,~\cite{DolevF92,PhillipsVZ12,WoodruffZ13}). 
In this model, the computation proceeds in rounds, and in each round, all machines simultaneously send a message to a central coordinator who then communicates back to all machines a summary to guide the computation for the next round. At the end of the last round, the coordinator outputs the answer.  
Main measures of efficiency in this setting are the \emph{communication cost}, i.e., the total number of bits communicated by each machine, and the \emph{round complexity}, i.e., the number of rounds of 
computation. 

The distributed coordinator model (and the closely related \emph{message-passing} model\footnote{In absence of any restriction on round complexity, these two models are equivalent; see, e.g.,~\cite{PhillipsVZ12}.}) has been studied extensively in recent years (see, e.g.,~\cite{PhillipsVZ12,BravermanEOPV13,WoodruffZ12,WoodruffZ13,WoodruffZ14}, and references therein). Traditionally, the focus in this model has been on optimizing the communication cost and round complexity issues have been ignored. However, in recent years, motivated by application to big data analysis such as MapReduce computation, there have
been a growing interest in obtaining round efficient protocols for various problems in this model (see, e.g.,~\cite{AhnGM12Linear,AhnGM12,KapralovW14,IndykMMM14,GuchtWWZ15,MirrokniZ15,BarbosaENW15,AssadiKLY16,GuhaLZ17,AssadiK17}).  

In this paper, we study the \emph{maximum coverage} problem in the coordinator model: A collection of input sets $\SS:=\set{S_1,\ldots,S_m}$ over a
universe $[n]$ is arbitrarily partitioned across $p$ machines, and the goal is to select $k$ sets whose union covers the most number of elements from the universe. 
Maximum coverage is a fundamental optimization problem with a wide range of applications in various domains (see, e.g.,~\cite{KrauseG07,KempeKT03,SahaG09,EpastoLVZ17} for some applications). 
As an illustrative example of submodular maximization, the maximum coverage problem has been 
studied in various recent works focusing on scalable algorithms for massive data sets including in the coordinator model (e.g.,~\cite{IndykMMM14,MirrokniZ15}), MapReduce framework (e.g.,~\cite{ChierichettiKT10,KumarMVV13}), and the streaming model (e.g.~\cite{BateniEM17,McGregorV17}); see Section~\ref{sec:results} for a more comprehensive summary of previous results. 

Previous results for maximum coverage in the distributed model can be divided into two main categories: one on hand, we have \emph{communication efficient} protocols that only need $\Ot(n)$ communication and achieve a constant factor approximation, but require a \emph{large} number of rounds of $\Omega(p)$~\cite{BadanidiyuruMKK14,McGregorV17}\footnote{We remark that the algorithms of~\cite{BadanidiyuruMKK14,McGregorV17} are 
originally designed for the streaming setting and in that setting are quite efficient as they only require one or a constant number of passes over the stream. 
However, implementing one pass of a streaming algorithm in the coordinator model directly requires $p$ rounds of communication.}. On the other hand, 
we have \emph{round efficient} protocols that achieve a constant factor approximation in $O(1)$ rounds of communication, but incur a \emph{large} communication cost $k \cdot m^{\Omega(1)}$~\cite{KumarMVV13}.  

This state-of-the-affairs, namely, communication efficient protocols that require a large number of rounds, or round efficient protocols that require a large communication cost, raises the following
natural question: \emph{Does there exist a truly efficient distributed protocol for maximum coverage, that is, a protocol that simultaneously achieves $\Ot(n)$ communication cost, $O(1)$ round complexity, and gives a constant factor approximation?} This is the precisely the question addressed in this work. 

\subsection{Our Contributions}\label{sec:results}

Our first result is a \emph{negative} resolution of the aforementioned question. In particular, we show that,

\begin{result}\label{res:dist-lower}
For any integer $r \geq 1$, any $r$-round protocol for distributed maximum coverage either incurs $k \cdot m^{\Omega(1/r)}$ communication per machine or has an approximation factor of $k^{\Omega(1/r)}$. 
\end{result}

Prior to our work, the only known lower bound for distributed maximum coverage was due to McGregor and Vu~\cite{McGregorV17} who showed an $\Omega(m)$ communication lower bound for any protocol 
that achieves a better than $\paren{\frac{e}{e-1}}$-approximation (regardless of number of rounds and even if the input is \emph{randomly} distributed). 
Indyk~\etal~\cite{IndykMMM14} also showed that no composable coreset (a restricted family of single round protocols) can achieve 
a better than $\Omgt(\sqrt{k})$ approximation without communicating essentially the whole input (which is known to be tight~\cite{BarbosaENW15}). However, no super constant lower bounds on approximation ratio were known for this problem for arbitrary protocols even for one round of communication. Our result on the other hand implies that to achieve any constant factor approximation with any $O(n^{c})$ communication protocol (for a fixed constant $c > 0$), 
$\Omega\paren{\frac{\log{k}}{\log\log{k}}}$ rounds of communication are required.

In establishing Result~\ref{res:dist-lower}, we introduce a general framework for proving communication complexity lower bounds for \emph{bounded round} protocols in the distributed coordinator model. 
This framework, formally introduced in Section~\ref{sec:dgl}, captures many of the existing multi-party communication complexity lower bounds in the
literature for {bounded-round} protocols including~\cite{DobzinskiNO14,Konrad15,AssadiKLY16,AssadiKL17} (for one round a.k.a simultaneous protocols), and~\cite{AlonNRW15,Assadi17ca} (for multi-round protocols). We believe our framework will prove useful for establishing distributed lower bound results for other problems, and is thus interesting in its own right. 

We complement Result~\ref{res:dist-lower} by giving protocols that show that its bounds are essentially \emph{tight}. 

\begin{result}\label{res:dist-upper}
For any integer $r \geq 1$, there exist $r$-round protocols that achieve: 
\begin{enumerate}
	\item\label{part:dist-upper-one} an approximation factor of (almost) $\frac{e}{e-1}$ with $k \cdot m^{O({1/r})}$ communication per machine, or
	\item\label{part:dist-upper-two} an approximation factor of $O(r \cdot k^{1/r+1})$ with $\Ot(n)$ communication per machine. 
\end{enumerate}
\end{result}

Results~\ref{res:dist-lower} and~\ref{res:dist-upper} together provide a near complete understanding of the tradeoff between the approximation ratio, the communication cost, and the round complexity of 
protocols for the distributed maximum coverage problem for any fixed number of rounds. 

The first protocol in Result~\ref{res:dist-upper} is quite general in that it works for maximizing any monotone submodular function subject to a cardinality constraint. 
Previously, it was known how to achieve a $2$-approximation distributed algorithm for this problem with $m^{O(1/r)}$ communication and $r$ rounds of communication~\cite{KumarMVV13}. However, the previous best $\paren{\frac{e}{e-1}}$-approximation distributed algorithm for this problem with \emph{sublinear} in $m$ communication due to Kumar~\etal~\cite{KumarMVV13} requires at least $\Omega(\log{n})$ rounds of communication. As noted above, the $\paren{\frac{e}{e-1}}$ is {information theoretically} the best approximation ratio possible for any protocol that uses sublinear in $m$ communication~\cite{McGregorV17}. 

The second protocol in Result~\ref{res:dist-upper} is however tailored heavily to the maximum coverage problem. Previously, it was known that an $O(\sqrt{k})$ approximation can
be achieved via $\Ot(n)$ communication~\cite{BarbosaENW15} per machine, but no better bounds were known for this problem in multiple rounds under $\poly(n)$
communication cost. It is worth noting that since an adversary may assign all sets to a single machine, a communication cost of $\tilde{O}(n)$ is essentially best possible bound.
We now elaborate on some applications of our results. 

\paragraph{Dynamic Streams.} In the \emph{dynamic (set) streaming} model, at each step, either a new set is inserted or a previously inserted set is deleted from the stream. 
The goal is to solve the maximum coverage problem on the sets that are present at the end of the stream. A \emph{semi-streaming} algorithm is allowed to make one or a small number of passes over the stream and use 
only $O(n \cdot \poly\set{\log{m},\log{n}})$ space to process the stream and compute the answer. The streaming setting for the maximum coverage problem and the closely related set cover problem has been studied 
extensively in recent years
\cite{SahaG09,CormodeKW10,AusielloBGLP12,ChakrabartiK14,EmekR14,DemaineIMV14,BadanidiyuruMKK14,HarPeledIMV16,ChakrabartiW16,AssadiKL16,BateniEM17,ChenNZ16,McGregorV17,Assadi17sc,EpastoLVZ17}.
Previous work considered this problem in \emph{insertion-only} streams and more recently in the \emph{sliding window} model; to the best of our knowledge, no non-trivial results were
known for this problem in dynamic streams\footnote{A related problem of maximum $k$-vertex coverage, corresponding to picking $k$ vertices in a graph to cover the most number of 
edges, was very recently studied in~\cite{McGregorV17}. In this problem, the edges of the graph (corresponding to elements in maximum coverage) are being presented in a dynamic stream.}. 
Our Results~\ref{res:dist-lower} and~\ref{res:dist-upper} imply the first upper and lower bounds
for maximum coverage in dynamic streams.

Result~\ref{res:dist-lower} together with a recent characterization of multi-pass dynamic streaming algorithms~\cite{AiHLW16} proves that 
any semi-streaming algorithm for maximum coverage in dynamic streams that achieves any \emph{constant approximation} requires $\Omega\paren{\frac{\log{n}}{\log\log{n}}}$ passes over the stream.  
This is in sharp contrast with insertion-only streams in which semi-streaming algorithms can achieve (almost) $2$-approximation in only a single pass~\cite{BadanidiyuruMKK14} or (almost) $\paren{\frac{e}{e-1}}$-approximation
in a constant number of passes~\cite{McGregorV17} (constant factor approximations are also known in the sliding window model~\cite{ChenNZ16,EpastoLVZ17}). 
To our knowledge, this is the first \emph{multi-pass} dynamic streaming lower bound that is based on the characterization of~\cite{AiHLW16}. 
Moreover, as maximum coverage is a special case of submodular maximization (subject to cardinality constraint), our lower bound immediately extends to this problem and 
settles an open question of~\cite{EpastoLVZ17} on the space complexity of submodular maximization in dynamic streams.% in a small number of passes. 

We complement this result by showing that one can implement the first algorithm in Result~\ref{res:dist-upper} using proper \emph{linear sketches} in dynamic streams, which imply 
an (almost) $\paren{\frac{e}{e-1}}$-approximation semi-streaming algorithm for maximum coverage (and monotone submodular maximization) in $O(\log{m})$ passes. 
As a simple application of this result, we can
also obtain an $O(\log{n})$-approximation semi-streaming algorithm for the set cover problem in dynamic stream that requires $O(\log{m}\cdot\log{n})$ 
passes over the stream.

\paragraph{MapReduce Framework.} In the MapReduce model, there are $p$ machines each with a memory of size $s$ such that $p \cdot s = O(N)$, where $N$ is the total memory required to represent the input. 
MapReduce computation proceeds in synchronous rounds where in each round, each machine performs some local computation, and at the end of the round sends messages to other machine to guide the computation for the next round. The total size of messages received by each machine, however, is restricted to be $O(s)$. Following~\cite{KarloffSV10}, we require both $p$ and $s$ to at be at most $N^{1-\Omega(1)}$. 
The main complexity measure of interest in this model is typically the number of rounds. Maximum coverage and submodular maximization have also been extensively studied in the MapReduce model~\cite{ChierichettiKT10,BlellochPT11,KumarMVV13,MirzasoleimanKSK13,IndykMMM14,MirrokniZ15,BarbosaENW15,BarbosaENW16,BateniEM16a}. 

Proving round complexity lower bounds in the MapReduce framework turns out to be a challenging task (see, e.g.,~\cite{RoughgardenVW16} for implication of such lower bounds to long standing open problems
in complexity theory). As a result, most previous work on lower bounds concerns either communication cost (in a fixed number of rounds) or specific classes
of algorithms (for round lower bounds); see, e.g.,~\cite{AfratiSSU13,BeameKS13,PietracaprinaPRSU12,JacobLS14} (see~\cite{RoughgardenVW16} for more details). Our results
contribute to the latter line of work by characterizing the power of a large family of MapReduce algorithms for maximum coverage. 

Many existing techniques for MapReduce algorithms utilize the following paradigm which we call the \emph{sketch-and-update} approach: 
each machine sends a summary of its input, i.e., a sketch, to a \emph{single} designated machine which processes these sketches and computes a single combined sketch; 
the original machines then receive this combined sketch and update their sketch computation accordingly; this process is then continued on the updated sketches. Popular algorithmic techniques belonging
to this framework include composable coresets (e.g.,~\cite{BadanidiyuruMKK14,BalcanEL13,BateniBLM14,IndykMMM14}), the filtering method (e.g.,~\cite{LattanziMSV11}), linear-sketching
algorithms (e.g.,~\cite{AhnGM12Linear,AhnGM12,KapralovW14,AhnG15}), and the sample-and-prune technique (e.g.,~\cite{KumarMVV13,ImM15}), among many others. 
 
We use Result~\ref{res:dist-lower} to prove a lower bound on the power of this approach for solving maximum coverage in the MapReduce model. 
We show that any MapReduce algorithm for maximum coverage in the sketch-and-update framework that uses $s = m^{\delta}$ memory per machine
requires $\Omega(\frac{1}{\delta})$ rounds of computation. Moreover, both our algorithms in Result~\ref{res:dist-upper} belong to the sketch-and-update framework and can be implemented in the MapReduce model. 
In particular, the round complexity of our first algorithm for monotone submodular
maximization (subject to cardinality constraint) in Result~\ref{res:dist-upper} matches the best known algorithm of~\cite{BarbosaENW16} with the benefit of using
sublinear communication (the algorithm of~\cite{BarbosaENW16}, in each round, incurs a linear (in input size) communication cost). We remark that the algorithm in~\cite{BarbosaENW16} is however more general
in that it supports a larger family of constraints beside the cardinality constraint we study in this paper. 

%We present our formal results for dynamic streams and MapReduce framework and further applications in Section~\ref{sec:applications}. 
%\input{technical}

\section{Preliminaries}\label{sec:prelim}

\paragraph{Notation.} For a collection of sets $\CC = \set{S_1,\ldots,S_t}$, we define $c(\CC) := \cup_{i \in [t]} S_i$, i.e., the set of elements covered by $\CC$. 
For a tuple $X= (X_1,\ldots,X_t)$ and index $i \in [t]$, $X^{<i} := (X_1,\ldots,X_{i-1})$ and $X^{-i} := (X_1,\ldots,X_{i-1},X_{i+1},\ldots,X_t)$. We use sans serif fonts to
denote random variables, i.e., $\rv{X}$. 

For a random variable $\rv{X}$ over a support $\Omega_{\rv{X}}$, $\distribution{\rv{X}}$ denotes the distribution of $\rv{X}$ and $\card{\rv{X}} := \log{\card{\Omega_{\rv{X}}}}$.
We use $\en{\rv{X}}$ and $\mi{\rv{X}}{\rv{Y}}$ to denote the \emph{Shannon entropy} of $\rv{X}$ and
\emph{mutual information} of $\rv{X}$ and $\rv{Y}$, respectively. For any two distributions $\mu$ and $\nu$ over the same probability space, 
$\DD{\mu}{\nu}$ and $\tvd{\mu}{\nu}$ denote the \emph{Kullback-Leibler divergence} and the \emph{total variation distance} between $\mu$ and $\nu$, respectively. 
 A summary of information theory facts used in this paper appears in Appendix~\ref{app:info}. 

\subsection{Communication Complexity Model}\label{app:communication} 

We prove our lower bound for distributed protocols using the framework of communication complexity, and in particular in the (number-in-hand) multiparty communication model
with shared blackboard: there are $p$ players (corresponding to machines) receiving inputs $(x_1,\ldots,x_p)$ from a prior distribution $\dist$ on $\XX_1 \times \ldots \XX_p$. 
The communication happens in rounds and in each round, the players \emph{simultaneously} write a message to a \emph{shared blackboard} visible to all parties. The message sent
by any player $i$ in each round can only depend on the input of the player, i.e., $x_i$, the current content of the blackboard, i.e., the messages communicated in previous rounds, and
public and private randomness. In addition to $p$ players, there exists a central party called the referee (corresponding to the coordinator) who only sees the content of the blackboard and public randomness and is 
responsible for outputting the answer in the final round. 

For a protocol $\prot$, we use $\Prot = (\Prot_1,\ldots,\Prot_p)$ to denote the transcript of the messages communicated by all players, i.e., the content of the blackboard. 
The \emph{communication cost} of a protocol $\prot$, denoted by $\norm{\prot}$, is the sum of worst-case length of the messages communicated by all players, i.e., $\norm{\prot} := \sum_{i=1}^{p} \card{\Prot_i}$. 
We further refer to $\max_{i \in [p]} \card{\Prot_i}$ as the \emph{per-player communication cost} of $\prot$. We remark that this model is identical to the distributed setting introduced earlier if we allow the coordinator to communicate with machines free of charge. As a result, communication lower bounds in this model
imply identical communication lower bounds for distributed protocols. We refer the reader to the excellent text by Kushilevitz and Nisan~\cite{KN97} for more details on communication complexity.

\subsection{Submodular Maximization with Cardinality Constraint}\label{app:submodular}

Let $V = \set{a_1,\ldots,a_m}$ be a ground set of $m$ items. For any set function $f: 2^V \rightarrow \IR$ and any $A \subseteq V$, we define 
the \emph{marginal contribution} to $f$ as a set function $f_A: 2^{V} \rightarrow \IR$ such that for all $B \subseteq V$, $f_A(B) = f(A \cup B) - f(A)$. 
When clear from the context, we abuse the notation and for $a \in V$, use $f(a)$ and $f_A(a)$ instead of $f(\set{a})$ and $f_A(\set{a})$, respectively. 
A function $f$ is \emph{submodular} iff for all $A \subseteq B \subseteq V$ and for all $a \in V$, $f_B(a) \leq f_A(a)$. 
A submodular function $f$ is additionally \emph{monotone} iff $\forall A \subseteq B \subseteq V$, $f(A) \leq f(B)$. 

The maximum coverage problem is a special case of {maximizing a monotone submodular function} subject to a cardinality
constraint of $k$, i.e., finding $\Astar \in \argmax_{A: \card{A} = k} f(A)$: for any set $S$ in maximum coverage we can have an item $a_S \in V$ and 
for each $A \subseteq V$, define $f(A) = \card{\bigcup_{a_S \in A} S}$. It is easy to verify that $f(\cdot)$ is monotone submodular. 

We use the following standard facts about monotone submodular functions in our proofs. 

\begin{fact}\label{fact:submodular-greedy}
	Let $f(\cdot)$ be a monotone submodular function, then: 
	\begin{align*}
		\forall A \subseteq V, B \subseteq V ~~ f(B) \leq f(A) + \sum_{a \in B \setminus A} f_A(a). 
	\end{align*}
\end{fact}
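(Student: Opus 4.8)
The plan is to prove Fact~\ref{fact:submodular-greedy}, the statement that for a monotone submodular $f$ and any $A, B \subseteq V$,
\[
f(B) \leq f(A) + \sum_{a \in B \setminus A} f_A(a).
\]

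First I would observe that by monotonicity $f(B) \leq f(A \cup B)$, so it suffices to bound $f(A \cup B)$. The idea is then to build up from $A$ to $A \cup B$ one element at a time, using a telescoping sum. Enumerate the elements of $B \setminus A$ as $b_1, b_2, \ldots, b_t$, and for $0 \leq i \leq t$ set $A_i := A \cup \set{b_1, \ldots, b_i}$, so that $A_0 = A$ and $A_t = A \cup B$. Then
\[
f(A \cup B) - f(A) = \sum_{i=1}^{t} \bigl( f(A_i) - f(A_{i-1}) \bigr) = \sum_{i=1}^{t} f_{A_{i-1}}(b_i).
\]
Now for each $i$, we have $A \subseteq A_{i-1}$, so by submodularity (applied in the ``diminishing returns'' form $f_B(a) \leq f_A(a)$ whenever $A \subseteq B$, which is exactly the definition given in the excerpt) we get $f_{A_{i-1}}(b_i) \leq f_A(b_i)$. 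Summing these inequalities yields $f(A \cup B) - f(A) \leq \sum_{a \in B \setminus A} f_A(a)$, and combining with $f(B) \leq f(A \cup B)$ gives the claim.

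I do not expect any real obstacle here — this is the standard decomposition argument for submodular functions, and every ingredient (monotonicity, the diminishing-returns definition of submodularity, telescoping) is already available from the preliminaries. The only minor care needed is to handle the edge case $B \setminus A = \emptyset$ (where the sum is empty and the inequality reduces to $f(B) \leq f(A)$, which holds by monotonicity since $B \subseteq A$), and to note that the order in which the elements $b_1, \ldots, b_t$ are enumerated is irrelevant since each marginal term is bounded by $f_A(b_i)$ regardless of the prefix.
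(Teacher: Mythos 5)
Your proof is correct: the paper states this as a standard fact without proof, and your telescoping argument (building from $A$ to $A\cup B$ one element at a time, bounding each marginal by $f_A(b_i)$ via the diminishing-returns definition of submodularity, and invoking monotonicity for $f(B)\leq f(A\cup B)$) is exactly the canonical derivation, with the empty-case and order-independence remarks handled properly.
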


\begin{fact}\label{fact:submodular-subadditive}
	Let $f(\cdot)$ be a submodular function, then, for any $A \subseteq V$, $f_A(\cdot)$ is subadditive, i.e., $f_A(B \cup C) \leq f_A(B) + f_A(C)$ for all $B,C \subseteq V$. 
\end{fact}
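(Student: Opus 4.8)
The plan is to prove this by a one-step telescoping argument built on the ``diminishing returns'' characterization of submodularity from Section~\ref{app:submodular}. The starting point is the purely definitional identity
\[
f_A(B \cup C) = f(A \cup B \cup C) - f(A) = \big(f(A \cup B) - f(A)\big) + \big(f(A \cup B \cup C) - f(A \cup B)\big) = f_A(B) + f_{A \cup B}(C),
\]
which holds for every set function. So it suffices to show $f_{A \cup B}(C) \le f_A(C)$, i.e. that marginal contributions only shrink as the base set grows — the ``set-valued'' form of the diminishing returns property that the preliminaries state only for single elements.

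To get this set-valued form, I would enumerate the new elements $C \setminus (A \cup B) = \{c_1, \ldots, c_t\}$ and write $f_{A \cup B}(C) = \sum_{i=1}^{t} f_{(A \cup B) \cup \{c_1, \ldots, c_{i-1}\}}(c_i)$; likewise I would expand $f_A(C)$ as a telescoping sum over $C \setminus A$, ordering the elements of $C \setminus (A \cup B)$ first so that its first $t$ terms are $f_{A \cup \{c_1, \ldots, c_{i-1}\}}(c_i)$. Comparing term by term, $f_{(A \cup B) \cup \{c_1, \ldots, c_{i-1}\}}(c_i) \le f_{A \cup \{c_1, \ldots, c_{i-1}\}}(c_i)$ by the single-element diminishing returns property, since $A \cup \{c_1, \ldots, c_{i-1}\} \subseteq (A \cup B) \cup \{c_1, \ldots, c_{i-1}\}$; the remaining terms in the expansion of $f_A(C)$ — those indexed by the elements of $(C \cap B) \setminus A$ — are nonnegative because $f$ is monotone, so dropping them only decreases the sum. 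This gives $f_{A \cup B}(C) \le f_A(C)$, and combining with the identity above yields $f_A(B \cup C) = f_A(B) + f_{A \cup B}(C) \le f_A(B) + f_A(C)$.

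There is essentially no obstacle here — it is a standard manipulation — so the ``hard part'' is only to be careful about one point: the last step above uses nonnegativity of marginal contributions (equivalently, monotonicity of $f$). This is automatic for every submodular function that this paper actually applies the fact to (maximum coverage and its monotone submodular generalization), but it is genuinely needed, since a non-monotone submodular $f$ can fail to have $f_A(\cdot)$ subadditive. An alternative route that avoids the explicit telescoping is to invoke the equivalent ``union/intersection'' form of submodularity, $f(X) + f(Y) \ge f(X \cup Y) + f(X \cap Y)$, with $X = A \cup B$ and $Y = A \cup C$, and then use monotonicity to replace $f(X \cap Y) = f(A \cup (B \cap C))$ by $f(A)$; this gives $f(A \cup B \cup C) + f(A) \le f(A \cup B) + f(A \cup C)$, which rearranges to exactly the claimed inequality.
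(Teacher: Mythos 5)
Your proof is correct, and in fact the paper offers no proof to compare it against: Fact~\ref{fact:submodular-subadditive} is asserted as one of the "standard facts about monotone submodular functions" and used without justification. Both of your routes are valid — the telescoping decomposition $f_A(B \cup C) = f_A(B) + f_{A\cup B}(C)$ followed by a term-by-term application of the single-element diminishing-returns property, and the shortcut via $f(X)+f(Y) \ge f(X\cup Y)+f(X\cap Y)$ with $X = A\cup B$, $Y = A\cup C$ — and they are the standard arguments one would write here.

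Your side remark about monotonicity is also a genuine (if minor) catch: the fact as literally stated hypothesizes only submodularity, but subadditivity of $f_A(\cdot)$ can fail without monotonicity (e.g., $f(S) = -\card{S}$ with $A = \emptyset$ and $B = C = \set{a}$ gives $f_A(B\cup C) = -1 > -2 = f_A(B)+f_A(C)$). The hypothesis is supplied implicitly by the surrounding sentence restricting attention to monotone submodular functions, and every application in the paper (coverage functions and their monotone submodular generalization in Claim~\ref{clm:ub-e-apx-contribution}) satisfies it, so nothing downstream is affected — but your proof correctly isolates exactly where nonnegativity of marginals is used.
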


\section{Technical Overview}\label{sec:techniques}

\paragraph{Lower Bounds (Result~\ref{res:dist-lower}).} Let us start by sketching our proof for simultaneous protocols. We provide each machine with a collection of sets
from a family of sets with small pairwise intersection such that \emph{locally}, i.e., from the perspective of each machine, all these sets look alike. At the same time, we ensure that \emph{globally}, one set in each machine
is \emph{special}; think of a special set as covering a \emph{unique} set of elements across the machines while all other sets are mostly covering a set of \emph{shared} elements. 
The proof now consists of two parts: $(i)$ use the simultaneity of the communication to argue that as each machine is oblivious to identity of its special set, it cannot convey enough information about
this set using limited communication, and $(ii)$ use the bound on the size of the intersection between the sets to show that this prevents the coordinator to find a good solution. %large $k$-cover. 

The strategy outlined above is in fact at the core of many existing lower bounds for simultaneous protocols in the coordinator model including~\cite{DobzinskiNO14,Konrad15,AssadiKLY16,AssadiKL17} (a notable exception is the 
lower bound of~\cite{AssadiKL17} on estimating matching size in \emph{sparse} graphs). For example, to obtain the hard input distributions in~\cite{Konrad15,AssadiKLY16} for the maximum matching problem, we just need
to switch the sets in the small intersecting family above with induced matchings in a \rs graph~\cite{RuszaS78} (see also~\cite{AlonMS12} for more details on these graphs). 
The first part of the proof that lower bounds the communication cost required
for finding the special induced matchings (corresponding to special sets above), remains quite similar; however, we now need an entirely 
different argument for proving the second part, i.e., the bound obtained on the approximation ratio. This observation raises the following question: can we somehow ``automate'' the task of proving a communication lower 
bound in the arguments above so that one can focus solely on the second part of the argument, i.e., proving the approximation lower bound subject to each machine not being able to find its special entity, e.g., sets in the coverage 
problem and induced matchings in the maximum matching problem? 

We answer this question in the affirmative by designing a framework for proving communication lower bounds of the aforementioned type. 
We design an abstract hard input distribution using the ideas above and prove a general communication lower bound in this abstraction. This 
reduces the task of proving a communication lower bound for any specific problem to designing suitable combinatorial objects that roughly speaking enforce the importance of 
``special entities'' discussed above. We emphasize that this second part may still be a non-trivial challenge; for instance, lower bounds for matchings in~\cite{Konrad15,AssadiKLY16} rely on \rs graphs to prove
this part. Nevertheless, automating the task of proving a communication lower bound in our framework allows one to focus solely on a combinatorial problem
and entirely bypass the communication lower bounds argument. 

We further extend our framework to multi round protocols by building on the recent multi-party round elimination technique of~\cite{AlonNRW15} and its extension in~\cite{Assadi17ca}. 
At a high level, in the hard instances of $r$-round protocols, each machine is provided with a collection of instances of the same problem but on a  ``lower dimension'', i.e., defined on a smaller number of machines and input size. 
One of these instances is a special one in that it needs to be solved by the machines in order to solve the original instance. 
Again, using the simultaneity of the communication in one round, we show that the first round of communication cannot reveal enough information about this special instance and hence the machines
need to solve the special instance in only $r-1$ rounds of communication, which is proven to be hard inductively. Using the abstraction in our framework allows us to solely focus on the communication aspects of this argument,
independent of the specifics of the problem at hand. This allows us to provide a more direct and simpler proof than~\cite{AlonNRW15,Assadi17ca}, which is also applicable to a wider range of problems (the results
in~\cite{AlonNRW15,Assadi17ca} are for the setting of combinatorial auctions). However, although simpler than~\cite{AlonNRW15,Assadi17ca}, this proof is still
far from being simple - indeed, it requires a delicate information-theoretic argument (see Section~\ref{sec:dgl} for further details). This complexity of proving a multi-round lower bound in this model
is in fact another motivation for our framework. To our knowledge, the only previous lower bounds specific to bounded round protocols in the
coordinator model are those of~\cite{AlonNRW15,Assadi17ca}; we hope that our framework facilitates proving such lower bounds
in this model (understanding the power of bounded round protocols in this model is regarded as an interesting open question in the literature; see, e.g.,~\cite{WoodruffZ13}). 

Finally, we prove the lower bound for maximum coverage using this framework by designing a family of sets which we call \emph{randomly nearly disjoint}; roughly speaking the sets in this family have the property that
any suitably small \emph{random} subset of one set is essentially disjoint from any other set in the family. A reader familiar with~\cite{ChakrabartiW16} may realize that this definition is similar to the \emph{edifice} set-system
introduced in~\cite{ChakrabartiW16}; the main difference here is that we need every \emph{random} subsets of each set in the family to be disjoint from other sets, as opposed to a \emph{pre-specified} collection of
sets as in edifices~\cite{ChakrabartiW16}. As a result, the algebraic techniques of~\cite{ChakrabartiW16} do not seem suitable for our purpose and we prove our results using different techniques. The lower bound then follows by instantiating the hard distribution in our framework with this family for maximum coverage and proving the approximation lower bound.

\paragraph{Upper Bounds (Result~\ref{res:dist-upper}).} 
We achieve the first algorithm in Result~\ref{res:dist-upper}, namely an $\paren{\frac{e}{e-1}}$-approximation algorithm for maximum coverage (and submodular maximization), 
via an implementation of a thresholding greedy algorithm (see, e.g.,~\cite{BadanidiyuruV14,ChakrabartiW16}) in the distributed setting using the 
sample-and-prune technique of~\cite{KumarMVV13} (a similar thresholding greedy algorithm was used recently in~\cite{McGregorV17} for streaming maximum coverage). 
The main idea in the sample-and-prune technique is to sample a collection of sets from the machines in each round and send them to the coordinator who can build a partial greedy solution on those sets; the coordinator
then communicates this partial solution to each machine and in the next round the machines only sample from the sets that can have a substantial marginal contribution to the partial greedy solution maintained by the coordinator. 
Using a different greedy algorithm and a more careful choice of the threshold on the necessary marginal contribution from each set, we show that an $\paren{\frac{e}{e-1}}$-approximation can be obtained in constant number of rounds and sublinear communication (as opposed to the original approach of~\cite{KumarMVV13} which requires $\Omega(\log{n})$ rounds).

The second algorithm in Result~\ref{res:dist-upper}, namely a $k^{O(1/r)}$-approximation algorithm for any number of rounds $r$, 
however is more involved and is based on a new iterative sketching method specific to the maximum coverage problem. Recall that in 
our previous algorithm the machines are mainly ``observers'' and simply provide the coordinator with a sample of their input; our second algorithm is in some sense on the other extreme. 
In this algorithm, each machine is responsible for computing a suitable sketch of its input, which roughly speaking, is a collection of sets that tries to ``represent'' each optimal set in the input of this machine. 
The coordinator is also maintaining a greedy solution that is updated based on the sketches received from each machine. The elements covered by this collection are shared by the machines
to guide them towards the sets that are ``misrepresented'' by the sketches computed so far, and the machines update their sketches for the next round accordingly. We show that either the greedy solution
maintained by the coordinator is already a good approximation or the final sketches computed by the machines are now a good representative of the optimal sets and hence contain a good 
solution.

\section{A Framework for Proving Distributed Lower Bounds}\label{sec:dgl} 

We introduce a general framework
for proving communication complexity lower bounds for \emph{bounded round} protocols in the distributed coordinator model. 
Consider a \emph{decision} problem%
\footnote{While we present our framework for decision problems, with some modifications, it also extends to \emph{search} problems. We elaborate more on this in Appendix~\ref{app:discussions}.} $\PP$ defined by the family of functions $\PP_{s}: \set{0,1}^{s} \rightarrow \set{0,1}$ for any integer $s \geq 1$; 
we refer to $s$ as \emph{size} of the problem and to $\set{0,1}^s$ as its \emph{domain}. Note that $\PP_s$ can be a partial function, i.e., not necessarily defined on its whole domain. 
An instance $I$ of problem $\PP_s$ is simply a binary string of length $s$. We say that $I$ is a \Yes instance if $\PP_{s}(I) = 1$ and is a \No instance if $\PP_{s}(I) = 0$. 
For example, $\PP_s$ can denote the decision version of 
the maximum coverage problem over $m$ sets and $n$ elements with parameter $k$ (in which case $s$ would be a fixed function of $m$, $n$, and $k$ depending on the representation of the input) such
that there is a relatively large gap (as a function of, say, $k$) between the value of optimal solution in \Yes and \No instances. 
We can also consider the problem $\PP_s$ in the distributed model, whereby we distribute each instance between the players. The distributed coverage problem
for instance, can be modeled here by partitioning the sets in the instances of $\PP_s$ across the players.

To prove a communication lower bound for some problem $\PP$, one typically needs to design a hard input distribution
$\dist$ on instances of the problem $\PP$, and then show that distinguishing between the \Yes and \No cases in instances
sampled from $\dist$, with some sufficiently large probability, requires large communication.  Such a distribution inevitably depends on the specific
problem $\PP$ at hand.  We would like to \emph{abstract out} this dependence to the underlying problem and design a \emph{template} hard distribution for any
problem $\PP$ using this abstraction. Then, to achieve a lower bound for a particular problem $\PP$, one only needs to focus on the problem specific 
parts of this template and \emph{design} them according to the problem $\PP$ at hand. We emphasize that obviously we are not going to prove a 
communication lower bound for every possible distributed problem; rather, our framework reduces the problem of proving a communication lower bound for a problem $\PP$ 
to designing appropriate problem-specific \emph{gadgets} for $\PP$, which determine the strength of the lower bound one can ultimately prove using this framework. 
With this plan in mind, we now describe a high level overview of our framework. 
\subsection{A High Level Overview of the Framework}\label{sec:dgl-high-level}

Consider any decision problem $\PP$; we construct a recursive family of distributions $\dist_0,\dist_1,\ldots$ where $\dist_r$ is a hard input distribution for 
$r$-round protocols of $\PP_{s_r}$, i.e., for instances of size $s_r$ of the problem $\PP$, when the input is partitioned between $p_{r}$ players. Each instance in $\dist_r$ is a careful ``combination'' of many sub-instances of 
problem $\PP_{s_{r-1}}$ over different subsets of $p_{r-1}$ players, which are sampled (essentially) from $\dist_{r-1}$. 
We ensure that a small number of these sub-instances are ``special'' in that to solve the original instance of $\PP_{s_r}$, at least one of 
these instances of $\PP_{s_{r-1}}$ (over $p_{r-1}$ players) needs to be solved necessarily. We ``hide'' the special sub-instances in the input of players in a way that locally, no player is able to identify them 
and show that the first round of communication in any protocol with a small communication is spent only in identifying these special sub-instances. 
We then inductively show that as solving the special instance is hard for $(r-1)$-round protocols, the original instance must be hard for $r$-round protocols as well. 

We now describe this distribution in more detail. The $p_{r}$ players in the instances of distribution $\dist_r$ are partitioned into $g_r$ groups $P_1,\ldots,P_{g_r}$, each of
size $p_{r-1}$ (hence $g_r = p_r/p_{r-1}$). For every group $i \in [g_r]$ and every player $q \in P_i$, we create $w_r$ instances
$I^i_1,\ldots,I^i_{w_r}$ of the problem $\PP_{s_{r-1}}$ sampled from the distribution $\dist_{r-1}$. The domain of each instance $I^i_j$ is the same
across all players in $P_i$ and is different (i.e., disjoint) between any two $j \neq j' \in [w_r]$; we refer to $w_r$ as the \emph{width parameter}.  
The next step is to \emph{pack} all these instances into a single instance $I^i(q)$ for the player $q$; this is one of the places that we need 
a problem specific gadget, namely a \emph{packing function}\footnote{For a reader familiar with previous work in~\cite{AssadiKL17,AlonNRW15,Assadi17ca}, we note that a similar notion to a packing function is captured via 
a collection of disjoint blocks of vertices in~\cite{AlonNRW15} (for finding large matchings), \rs graphs in~\cite{AssadiKL17} (for estimating maximum matching size), and a family
of small-intersecting sets in~\cite{Assadi17ca} (for finding good allocations in combinatorial auctions).
In this work, we use the notion of randomly nearly disjoint set-systems defined in Section~\ref{sec:dist-lower}.} that can {pack} $w_r$ instances of problem $\PP_{s_{r-1}}$ into a single instance of problem $\PP_{s'_r}$
for some $s'_r \geq s_r$. 
We postpone the formal description of the packing functions to the next section, but roughly speaking, we require each player to be able to construct the instance $I^i(q)$ from the instances $I^i_1,\ldots,I^i_{w_r}$ and vice versa. 
As such, even though each player is given as input a single instance $I^i$, we can think of each player as conceptually ``playing'' in $w_r$ different instances $I^i_1,\ldots,I^i_{w_r}$ of $\PP_{s_{r-1}}$ instead. 

In each group $i \in [g_r]$, one of the instances, namely $I^i_{\jstar}$ for $\jstar \in [w_r]$, is the \emph{special instance} of the group: if we combine the inputs of players in $P_i$ on their
special instance $I^i_{\jstar}$, we obtain an instance which is sampled from the distribution $\dist_{r-1}$. On the other hand, all other instances are \emph{fooling instances}: if we combine the inputs of players in $P_i$
on their instance $I^i_{j}$ for $j \neq \jstar$, the resulting instance is \emph{not} sampled from $\dist_{r-1}$; rather, it is an instance created by picking the input of each player \emph{independently} from the corresponding marginal of $\dist_{r-1}$ (
$\dist_{r-1}$ is \emph{not} a product distribution, thus these two distributions are not identical). Nevertheless, by construction, each player is oblivious to this difference and hence is unaware of 
which instance in the input is the special instance (since the marginal distribution of a player's input is identical under the two distributions above). 

Finally, we need to combine the instances $I^1,\ldots,I^{g_r}$ to create the final instance $I$. To do this, we need another problem specific gadget, namely a \emph{relabeling function}. 
Roughly speaking, this function takes as input the index $\jstar$, i.e., the index of the special instances, and instances $I^1,\ldots,I^{g_r}$ and create the final instance $I$, while
``prioritizing'' the role of special instances in $I$. By prioritizing we mean that in this step, we need to ensure that the value of $\PP_{s_r}$ on $I$ is the same as the value of 
$\PP_{s_{r-1}}$ on the special instances. At the same time, we also need to ensure that this additional relabeling does not reveal the index of the special instance to each individual player, which requires a careful design
depending on the problem at hand. 

The above family of distributions is parameterized by the sequences $\set{s_r}$ (size of instances), $\set{p_r}$ (number of players), and $\set{w_r}$ (the width parameters), plus
the packing and relabeling functions. Our main result in this section is that if these sequences and functions satisfy some natural conditions (similar to what discussed above), then
any $r$-round protocol for the problem $\PP_{s_r}$ on the distribution $\dist_r$ requires $\Omega_r(w_r)$ communication. 

We remark that while we state our communication lower bound only in terms of $w_r$, to obtain any interesting lower bound using this technique, one needs to ensure 
that the width parameter $w_r$ is relatively large in the size of the instance $s_r$; this is also achieved by designing suitable packing and labeling functions (as well as a suitable representation of the problem). 
However, as ``relatively large'' depends heavily on the problem at hand, we do not add this requirement to the framework explicitly. 
A discussion on possible extensions of this framework as well as its connection to previous work appears in Appendix~\ref{app:discussions}.

\subsection{The Formal Description of the Framework}\label{sec:dgl-formal}

We now describe our framework formally. As stated earlier, to use this framework for proving a lower bound for any specific problem $\PP$, one needs to define appropriate
 problem-specific gadgets. These gadgets are functions that map multiple instances of $\PP_{s}$ to a single instance $\PP_{s'}$ for some $s' \geq s$. 
The exact application of these gadgets would become clear shortly in the description of our hard distribution for the problem $\PP$. 

\begin{definition}[Packing Function] \label{def:packing-gadget}
	For integers $s' \geq s \geq 1$ and $w \geq 1$, we refer to a function $\packing$ which maps 
	any tuple of instances $(I_1,\ldots,I_w)$ of $\PP_s$ to a single instance $I$ of $\PP_{s'}$ as a \emph{packing function} of \emph{width} $w$. 
\end{definition}

\begin{definition}[Labeling Family] \label{def:labeling-gadget}
	For integers $s'' \geq s' \geq 1$ and $g \geq 1$, we refer to a family of functions $\Labeling = \set{\labeling_i}$, where each $\labeling_i$ 
	is a function that maps any tuple of instances $(I^1,\ldots,I^g)$ of $\PP_{s'}$ to a single instance $I$ of $\PP_{s''}$ as a \emph{$g$-labeling family}, and to each function in this family, as a \emph{labeling function}. 
\end{definition}

We start by designing the following recursive family of hard distributions $\set{\dist_r}_{r\geq 0}$, parametrized
by sequences $\set{p_r}_{r\geq 0}$, $\set{s_r}_{r \geq 0}$, and $\set{w_r}_{r \geq 0}$. We require $\set{p_r}_{r \geq 0}$ and $\set{s_r}_{r \geq 0}$ to be \emph{increasing} sequences
and $\set{w_r}_{r \geq 0}$ to be \emph{non-increasing}. In two places marked in the distribution, we require one to design the aforementioned {problem-specific} gadgets for the distribution.

\textbox{Distribution $\dist_r$: \textnormal{A template hard distribution for $r$-round protocols of $\PP$ for any $r \geq 1$.}}{

\medskip

%	\textbf{Input:} A set $P$ of players and a set $U$ of elements. 

	\textbf{Parameters:} $p_r$: number of players, $s_r$: size of the instance, $w_r$: width parameter, $\packing_r$: packing function, and $\Labeling_r$: labeling family.

\begin{enumerate}
	\item Let $P$ be the set of $p_r$ players and define $g_r:= \frac{p_r}{p_{r-1}}$; partition the players in $P$ into $g_r$ \emph{groups} $P_1,\ldots,P_{g_r}$ each containing $p_{r-1}$ players. 
	\item \textbf{\emph{Design}} a packing function $\packing_r$ of width $w_{r}$ which maps $w_r$ instances of $\PP_{s_{r-1}}$ to an instance of $\PP_{s'_{r}}$ for some $s_{r-1} \leq s'_{r} \leq s_{r}$.  
			
	\item Pick an instance $\Istar_{r} \sim \dist_{r-1}$ over the set of players $[p_{r-1}]$ and domain of size $s_{r-1}$. 
	\item For each group $P_i$ for $i \in [g_r]$: 
		\begin{enumerate}
			\item Pick an index $\jstar \in [w_r]$ \emph{uniformly at random} and create $w_r$ instances $I^i_1,\ldots,I^i_{w_r}$ of problem $\PP_{s_{r-1}}$ as follows:
			\begin{enumerate}[(i)]
				\item Each instance $I^i_j$ for $j \in [w_r]$ is over the players $P_i$ and domain $D^i_j = \set{0,1}^{s_{r-1}}$. 
				\item For index $\jstar \in [w_r]$, $I^i_{\jstar} = \Istar_r$ by mapping (arbitrarily) $[p_{r-1}]$ to $P_i$ and domain of $\Istar_r$ to $D^i_{\jstar}$.  
				\item For any other index $j \neq \jstar$, $I^i_j \sim \distp_{r-1} := \otimes_{q \in P_i}\dist_{r-1}(q)$, i.e., the 
				\emph{product of marginal distribution} of the input to each player $q \in P_i$ in $\dist_{r-1}$. 
			\end{enumerate}
			\item Map all the instances $I^i_1,\ldots,I^i_{w_r}$ to a single instance $I^i$ using the function $\packing_r$. 
		\end{enumerate}
	\item \textbf{\emph{Design}} a \emph{$g_r$-labeling family} $\Labeling_r$ which maps $g_r$ instances of $\PP_{s'_r}$ to a \emph{single} instance $\PP_{s_{r}}$. 
	
	\item Pick a labeling function $\labeling$ from $\Labeling$ \emph{uniformly at random} and map the $g_r$ instances $I^1,\ldots,I^{g_r}$ of $\PP_{s'_r}$ to the output instance $I$ of $\PP_{s_{r}}$ using $\labeling$. 
	
	\item The input to each player $q \in P_i$ in the instance $I$, for any $i \in [g_r]$, is the input of player $q$ in the instance $I^i$, \emph{after} applying the mapping $\labeling$ to map $I^i$ to $I$. 
\end{enumerate}
}

We remark that in the above distribution, the ``variables'' in each instance sampled from $\dist_r$ are the instances $I^i_1,\ldots,I^i_{w_r}$ for all groups $i \in [g_r]$, the index
$\jstar \in [w]$, and both the choice of labeling family $\Labeling_r$ and the labeling function $\labeling$. On the other hand, the ``constants'' across all instances of $\dist_r$ are parameters $p_r,s_r$, and $w_r$, the
choice of grouping $P_1,\ldots,P_{g_r}$, and the packing function $\packing_r$. 

To complete the description of this recursive family of distributions, we need to explicitly define the distribution $\dist_0$ between $p_0$ players over $\set{0,1}^{s_0}$. 
We let $\dist_0 := \frac{1}{2} \cdot \distY_0 + \frac{1}{2} \cdot \distN_0$, where $\distY_0$ is a distribution over \Yes instances of $\PP_{s_0}$ and $\distN_0$ is a distribution over \No instances. 
The choice of distributions $\distY_0$ and $\distN_0$ are again \emph{problem-specific}. 

We start by describing the main properties of the packing and labeling functions that are required for our lower bound. 
For any player $q \in P_i$, define $I^i(q):= (I^i_1(q),\ldots,I^i_{w_r}(q))$, where for any $j \in [w_r]$, $I^{i}_j(q)$ denotes the input of player $q$ in the instance $I^i_j$. We require the packing and labeling functions to be
\emph{locally computable} defined as follows. 

\begin{definition}[Locally computable]\label{def:locally-computable}
	We say that the packing function $\packing_r$ and the labeling family $\Labeling_r$ are \emph{locally computable} iff any player $q \in P_i$ for $i \in [g_r]$, can compute
	the mapping of $I^i(q)$ to the final instance $I$, locally, i.e., only using $\packing_r$, the sampled labeling function $\labeling \in \Labeling_r$, and input $I^i(q)$. 
\end{definition}

We use $\labeling_q$ to denote the \emph{local mapping} of player $q \in P_i$ for mapping $I^i(q)$ to $I$; since $\packing_r$ is fixed in the distribution $\dist_r$, across different instances sampled
from $\dist_r$, $\labeling_q$ is only a function of $\labeling$. Notice that the input to each player $q \in P_i$ is \emph{uniquely} determined by $I^i(q)$ and $\labeling_q$.

Inside each instance $I$ sampled from $\dist_r$, there exists a unique \emph{embedded} instance $\Istar_{r}$ which is sampled from $\dist_{r-1}$. 
Moreover, this instance is essentially ``copied'' $g_r$ times, once in each instance $I^i_{\jstar}$ for each group $P_i$. We refer to the instance $\Istar_{r}$ as well as its copies $I^1_{\jstar},\ldots,I^{g_r}_{\jstar}$ 
as \emph{special instances} and to all other instances as \emph{fooling instances}. We require the packing and labeling functions to be \emph{preserving}, defined as,  

\begin{definition}[$\gamma$-Preserving]\label{def:preserving}
	We say that the packing function and the labeling family are \emph{$\gamma$-preserving} for a parameter $\gamma \in (0,1)$, iff 
	\begin{align*}
		\Pr_{I \sim \dist_{r}}\paren{\PP_{s_r}(I) = \PP_{s_{r-1}}(\Istar_r)} \geq 1-\gamma.
	\end{align*} 
	In other words, the value of $\PP_{s_r}$ on an instance $I$ should be equal to the value of $\PP_{s_{r-1}}$ on the embedded special instance $\Istar_r$ of $I$ w.p. $1-\gamma$.
\end{definition}

Recall that the packing function $\packing_r$ is a deterministic function that depends only on the distribution $\dist_r$ itself and not any specific instance (and hence the underlying special instances); on the other hand, 
the preserving property requires the packing and labeling functions to somehow ``prioritize'' the special instances over the fooling instances (in determining the value of the original instance). 
To achieve this property, the labeling family is allowed to vary based on the specific instance sampled from the distribution $\dist_r$. 
However, we need to limit the dependence of the labeling family to the underlying instance, which is captured through the definition of \emph{obliviousness} below. 

\begin{definition}\label{def:oblivious}
	We say that the labeling family $\Labeling_r$ is \emph{oblivious} iff it satisfies the following properties: 
	\begin{enumerate}[(i)]
		\item \label{item:def-obv-1} The only variable in $\dist_r$ which $\Labeling_r$ can depend on is $\jstar \in [w_r]$ (it can depend arbitrarily on the constants in $\dist_r$). 
		\item \label{item:def-obv-2} For any player $q \in P$, the local mapping $\labeling_q$ and $\jstar$ are \emph{independent} of each other in $\dist_r$. 
	\end{enumerate}
\end{definition}

Intuitively speaking, Condition~$(\ref{item:def-obv-1})$ above implies that a function $\labeling \in \Labeling_r$ can ``prioritize'' the special instances
based on the index $\jstar$, but it cannot use any further knowledge about the special or fooling instances. 
For example, one may be able to use $\labeling$ to distinguish special instances from other instances, i.e., determine $\jstar$, but would not be able to infer whether
the special instance is a \Yes instance or a \No one only based on $\labeling$. Condition~$(\ref{item:def-obv-2})$ on the other hand implies that for each player $q$, no information about the special
instance is revealed by the local mapping $\labeling_q$. This means that given the function $\labeling_q$ (and not $\labeling$ as a whole), one is not able to determine $\jstar$. 

Finally, we say that the family of distributions $\set{\dist_r}$ is a \emph{$\gamma$-hard recursive family}, iff $(i)$ it is parameterized by increasing sequences $\set{p_r}$ and $\set{s_r}$, and non-increasing sequence $\set{w_r}$, and 
$(ii)$, the packing and labeling functions in the family are locally computable, $\gamma$-preserving, and oblivious. 
We are now ready to present our main theorem of this section. 

\begin{theorem}\label{thm:dgl}
	Let $R \geq 1$ be an integer and suppose $\set{\dist_r}_{r=0}^{R}$ is a $\gamma$-hard recursive family for some $\gamma \in (0,1)$; 
	for any $r \leq R$, any $r$-round protocol for $\PP_{s_r}$ on $\dist_r$ which errs w.p. at most $1/3-r\cdot\gamma$ requires $\Omega(w_r/r^4)$ total communication. 
\end{theorem}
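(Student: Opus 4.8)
The plan is to prove Theorem~\ref{thm:dgl} by induction on $r$, the inductive step being a \emph{round elimination} argument: from an $r$-round protocol $\prot$ for $\PP_{s_r}$ on $\dist_r$ with error at most $1/3-r\gamma$ and total communication $C$, we either conclude $C=\Omega(w_r/r^4)$ outright, or we convert $\prot$ into an $(r-1)$-round protocol $\prot'$ for $\PP_{s_{r-1}}$ on $\dist_{r-1}$ of communication at most $C$ and invoke the induction hypothesis (using that $\set{w_r}$ is non-increasing and $\set{s_r},\set{p_r}$ increasing, so $\Omega(w_{r-1}/(r-1)^4)=\Omega(w_r/r^4)$). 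The base case $r=0$ is vacuous: a $0$-round protocol uses no communication and the referee sees only public coins, so on $\dist_0=\frac{1}{2}\distY_0+\frac{1}{2}\distN_0$ it errs with probability $1/2>1/3$, hence no admissible $0$-round protocol exists. To make the error budget telescope cleanly I would actually induct on the stronger statement in which level $r$ allows error up to some $\beta_r$, with $\beta_0<1/2$ and $\beta_{r-1}=\beta_r+\gamma+\eta_r$ for a summable sequence $\eta_r$ (think $\eta_r\approx\sqrt{c}/r^2$); one then checks $\beta_r\geq 1/3-r\gamma$ for all $r\leq R$ provided $\sum_r\eta_r$ is a sufficiently small constant, and the theorem follows. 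The $\gamma$ lost per level is paid to the $\gamma$-preserving property (Definition~\ref{def:preserving}); the $\eta_r$ terms pay for the statistical error of the simulation, and insisting that they sum to $O(1)$ while $\eta_r\approx\sqrt{C/w_r}$ by a Pinsker-type bound is exactly what forces the $r^4$ in the denominator.

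The heart of the inductive step is an information-theoretic lemma saying that a cheap first round is essentially wasted. Fix the public coins and consider the first-round transcript $\rProt^1$. By obliviousness (Definition~\ref{def:oblivious}) every local mapping $\labeling_q$ is independent of the special index $\jstar$, and because the fooling sub-instances are drawn from the product-of-marginals distribution $\distp_{r-1}$, each player's input has the same marginal whether or not a given position is the special one; hence each player's first message is marginally independent of $\jstar$ and of the embedded instance $\Istar_r$. The delicate point is that the \emph{joint} transcript can still leak information about $\Istar_r$ through the correlations created by copying $\Istar_r$ into position $\jstar$ of every group. I would bound this leakage by a direct-sum style decomposition: conditioning on $\jstar$ and on the chosen labeling restores enough independence to write the relevant information as a sum over the $w_r$ positions of each of the $g_r$ groups and to apply super-additivity of mutual information, yielding $\sum_{i,j}\mi{\rProt^1}{I^i_j}=O(C)$; therefore some group $i^\star$ and position $j^\star$ carry only $O(C/w_r)$ bits of information. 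Since no player can tell $j^\star$ apart from $\jstar$, we are free to \emph{plant} a fresh instance at position $j^\star$ of group $i^\star$, and the near-independence guarantees that, conditioned on a typical transcript value $\rProt^1=\tau$, the posterior on that planted slot stays within total variation $O(\sqrt{C/w_r})=\eta_r$ of $\dist_{r-1}$.

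Granting this lemma, $\prot'$ is constructed as follows. Its $p_{r-1}$ players hold a joint input $J\sim\dist_{r-1}$; using public coins they fix $j^\star$, a labeling consistent with it, and the transcript value $\tau$. Identifying each player with a within-group index, each player locally assembles the inputs of all $g_r$ virtual copies of itself by placing its own coordinate $J(q)$ into position $j^\star$ --- which is precisely what that player's special slot needs --- and sampling every other (fooling) sub-instance from the appropriate posterior. This sampling can be carried out \emph{independently} by the players: conditioned on $\jstar$, the labeling, and $J$ the remaining randomness is just the mutually independent fooling draws, and further conditioning on $\rProt^1=\tau$ preserves a product structure across players by the standard ``rectangle'' property (a first-round message depends only on its sender's input). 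The players then run rounds $2,\dots,r$ of $\prot$ from the hard-wired transcript $\tau$ --- an $(r-1)$-round protocol of communication at most $C$ --- and output the answer of $\prot$. Correctness is a union bound: the simulated instance is within total variation $\eta_r$ of a genuine draw from $\dist_r$ conditioned on $\rProt^1=\tau$, $\prot$ errs on it with probability at most its own error (the typicality loss being absorbed into the choice of $\tau$), and by $\gamma$-preservedness the value of $\PP_{s_r}$ on it equals $\PP_{s_{r-1}}(J)$ except with probability $\gamma$; hence $\prot'$ errs with probability at most $(1/3-r\gamma)+\gamma+\eta_r=\beta_{r-1}$, and the induction hypothesis yields $C=\Omega(w_{r-1}/(r-1)^4)=\Omega(w_r/r^4)$.

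I expect the main obstacle to be the information-theoretic lemma of the second paragraph: formalizing the sense in which a low-communication first round cannot ``focus'' on the hidden special instance even though $\dist_r$ is far from a product distribution. One must (i) identify exactly which conditionings (on $\jstar$ and on the labeling, exploiting the product structure of $\distp_{r-1}$) restore enough independence for the direct-sum bound, (ii) turn the resulting small mutual-information bound into a total-variation bound on the posterior of the planted slot, uniformly over a $1-o(1)$ fraction of transcripts, and (iii) control how the $\eta_r$'s and the typicality losses accumulate over the $r$ levels so that the level-$0$ error stays below $1/2$ --- this bookkeeping is what pins down both the $1/3-r\gamma$ error threshold and the $w_r/r^4$ bound. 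By contrast, the planting-and-simulation argument of the third paragraph is largely routine once the lemma and the local computability of the gadgets (Definition~\ref{def:locally-computable}) are in hand, though the fact that each player can sample its posterior input in isolation --- rather than needing shared randomness correlated across players --- rests essentially on the product structure deliberately built into the fooling instances.
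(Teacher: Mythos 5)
Your proposal is correct and follows essentially the same route as the paper: a round-elimination induction whose engine is a direct-sum information bound showing the first round reveals only $O(C/w_r)$ bits about the special instance (using obliviousness and the product structure of the fooling instances), converted via Pinsker to a total-variation bound on the posterior, followed by an embedding in which the transcript, labeling, and special index are fixed by public coins and each player privately samples its fooling sub-instances from a posterior that factorizes across players. The error bookkeeping $\beta_{r-1}=\beta_r+\gamma+\eta_r$ with summable $\eta_r\approx\sqrt{C/w_r}$ is exactly the paper's $\delta(r)$ recursion.
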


We prove Theorem~\ref{thm:dgl} in the next section. 

\subsection{Correctness of the Framework: Proof of Theorem~\ref{thm:dgl}}\label{sec:thm-dgl}

We first set up some notation. For any $r$-round protocol $\prot$ and any $\ell \in [r]$, we use $\rProt_\ell:= (\rProt_{\ell,1},\ldots,\rProt_{\ell,{p_r}})$ to denote the random variable for the transcript
of the message communicated by each player in round $\ell$ of $\prot$. We further use $\rPhi$ (resp. $\rPhi_q$) to denote the random variable for $\labeling$ (resp. local mapping $\labeling_q$) 
and $\rJ$ to denote the random variable for the index $\jstar$. Finally, for any $i \in [g_r]$ and $j \in [w_r]$, $\rI^{i}_j$ denotes the random variable for the instance $I^i_j$. 

We start by stating a simple property of oblivious mapping functions. 

\begin{proposition}\label{prop:dgl-oblivious}
	For any $i \in [g_r]$ and any player $q \in P_i$, conditioned on input $(I^i(q),\labeling_q)$ to player $q$, the index $\jstar \in [w_r]$ is chosen uniformly at random. 
\end{proposition}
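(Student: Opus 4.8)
The plan is to prove the stronger statement that the pair $(I^i(q),\labeling_q)$ that constitutes the input of player $q$ is \emph{independent} of the random variable $\rJ=\jstar$. Since $\jstar$ is drawn uniformly from $[w_r]$ a priori, independence immediately gives that conditioned on $(I^i(q),\labeling_q)$ the index $\jstar$ is still uniform on $[w_r]$. To establish independence I would condition on $\jstar=j^\star$ and show that the conditional law of $(I^i(q),\labeling_q)$ does not depend on $j^\star$.

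First I would unwind the sampling of $I^i(q)=(I^i_1(q),\ldots,I^i_{w_r}(q))$ under $\jstar=j^\star$. The $j^\star$-th coordinate is $\Istar_r(q)$, the restriction to player $q$ of the special instance $\Istar_r\sim\dist_{r-1}$, hence it is distributed according to the marginal $\dist_{r-1}(q)$; every other coordinate $I^i_j$ with $j\neq j^\star$ is sampled independently from $\distp_{r-1}=\otimes_{q'\in P_i}\dist_{r-1}(q')$, so $I^i_j(q)$ is \emph{also} distributed according to $\dist_{r-1}(q)$. Because the special instance and the fooling instances are produced by independent sampling steps in $\dist_r$, the $w_r$ coordinates of $I^i(q)$ are mutually independent, so conditioned on $\jstar=j^\star$ we have $I^i(q)\sim\dist_{r-1}(q)^{\otimes w_r}$, a law that does not depend on $j^\star$.

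Next I would address the joint behavior with $\labeling_q$. Conditioned on $\jstar=j^\star$, the random variable $I^i(q)$ is a function only of the randomness used to generate $\Istar_r$ and $\{I^i_j\}_{j\neq j^\star}$, whereas $\labeling_q$ is a deterministic function of $\labeling$, which is drawn uniformly from $\Labeling_r$; by Definition~\ref{def:oblivious}, part~(\ref{item:def-obv-1}), the family $\Labeling_r$ depends only on $\jstar$ and the fixed constants of $\dist_r$, so this draw uses fresh randomness independent of the instance-generating randomness. Hence $I^i(q)$ and $\labeling_q$ are conditionally independent given $\jstar=j^\star$. Combining this with the previous paragraph and with Definition~\ref{def:oblivious}, part~(\ref{item:def-obv-2}) — which states $\labeling_q\perp\jstar$, so the conditional law of $\labeling_q$ given $\jstar=j^\star$ is itself independent of $j^\star$ — we get that the conditional law of $(I^i(q),\labeling_q)$ given $\jstar=j^\star$ is a product law that does not vary with $j^\star$. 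Therefore $(I^i(q),\labeling_q)\perp\jstar$, which is exactly the claim.

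The only delicate point, and the step I would be most careful about, is that pairwise independence is not enough on its own: I separately obtain $I^i(q)\perp\jstar$ and $\labeling_q\perp\jstar$, but to conclude that the \emph{pair} is independent of $\jstar$ I must additionally invoke the conditional independence of $I^i(q)$ and $\labeling_q$ given $\jstar$, which in turn relies on the two underlying sources of randomness being disjoint — i.e.\ on the obliviousness property that $\Labeling_r$ is allowed to depend on $\jstar$ but on nothing else in $\dist_r$. Everything else is a direct reading of the definition of $\dist_r$.
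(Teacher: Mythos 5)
Your proposal is correct and follows essentially the same route as the paper's proof: both rest on the two obliviousness conditions together with the fact that the marginal of each instance on player $q$ is $\dist_{r-1}(q)$ whether the instance is special or fooling, so that the law of $I^i(q)$ given $\jstar$ does not depend on $\jstar$. If anything, your write-up is slightly more careful than the paper's in flagging that pairwise independence alone would not suffice and that the conditional independence of $I^i(q)$ and $\labeling_q$ given $\jstar$ (coming from the disjoint sources of randomness) is what upgrades it to joint independence.
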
 
\begin{proof}
	By Condition~(\ref{item:def-obv-2}) of obliviousness in Definition~\ref{def:oblivious}, $\rPhi_q \perp \rJ$, and hence $\rJ \perp \rPhi_q = \labeling_q$. Moreover, by Condition~(\ref{item:def-obv-1}) of 
	Definition~\ref{def:oblivious}, $\rPhi_q$ cannot depend on $\rI^i(q)$ and hence $\rI^i(q) \perp \rPhi_q = \labeling_q$ also. Now notice that
	while the distribution of $\rI^i_{j}$ and $\rI^i_{\jstar}$ for $j \neq \jstar$, i.e., $\distp_{r-1}$ and $\dist_{r-1}$ are different, 
	the distribution of $\rI^i_j(q)$ and $\rI^i_{\jstar}(q)$ are identical by definition of $\distp_{r-1}$. As such, $\rI^{i}(q)$ and $\jstar$ are also independent of each other conditioned on $\rPhi_q = \labeling_q$, finalizing the proof. 
\end{proof}

We show that any protocol with a small communication cost cannot learn essentially any useful information about the special instance $\Istar_r$ in its first round. 

\begin{lemma}\label{lem:dgl-info-special}
	For any deterministic protocol $\prot$ for $\dist_r$, $\mi{\rIstar_r}{\rProt_1 \mid \rPhi,\rJ} \leq {\card{\rProt_1}}/{w_r}$.
\end{lemma}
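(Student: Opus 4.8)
The plan is to reduce the bound to a per-player super-additivity argument, exploiting crucially that the fooling sub-instances come from the \emph{product} distribution $\distp_{r-1}$ and that no player can tell which sub-instance is special (Proposition~\ref{prop:dgl-oblivious}). Fix a deterministic protocol and write $\rProt_1 = (\rProt_{1,q})_{q \in P}$. By the ``locally computable'' property, once $\rPhi=\labeling$ is fixed, player $q$'s round-one message $\rProt_{1,q}$ is a deterministic function of $q$'s sub-instance inputs $(\rv{X}^1_q,\ldots,\rv{X}^{w_r}_q)$, where $\rv{X}^j_q$ denotes player $q$'s part of the $j$-th sub-instance of its group. Since the special instance $\rIstar_r$ is copied identically into every group, $\rIstar_r = \rI^1_{\rJ}$, so using that $\rJ$ is uniform on $[w_r]$ I would first write
\[
\mi{\rIstar_r}{\rProt_1 \mid \rPhi,\rJ} \;=\; \frac{1}{w_r}\sum_{\jstar=1}^{w_r}\mi{\rI^1_{\jstar}}{\rProt_1 \mid \rPhi,\rJ=\jstar},
\]
so that it suffices to prove $\sum_{\jstar}\mi{\rI^1_{\jstar}}{\rProt_1 \mid \rPhi,\rJ=\jstar} \le \card{\rProt_1}$.

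Next I would fix $\jstar$ and $\labeling$ and bound the $\jstar$-th term. Conditioned on $(\rPhi=\labeling,\rJ=\jstar)$, each message $\rProt_{1,q}$ depends on $\rIstar_r$ only through the single coordinate $\rv{X}^{\jstar}_q$ of $\rIstar_r$ that belongs to $q$; moreover, once we additionally condition on $\rIstar_r$, the messages $(\rProt_{1,q})_q$ become deterministic functions of \emph{disjoint} blocks of fooling parts $\{\rv{X}^{j}_q : j \ne \jstar\}$, which are mutually independent precisely because each fooling instance is drawn from $\distp_{r-1} = \otimes_{q}\dist_{r-1}(q)$. Hence $(\rProt_{1,q})_q$ are conditionally independent given $\rIstar_r$, and therefore
\[
\mi{\rIstar_r}{\rProt_1 \mid \rPhi=\labeling,\rJ=\jstar} \;\le\; \sum_{q\in P}\mi{\rIstar_r}{\rProt_{1,q} \mid \rPhi=\labeling,\rJ=\jstar} \;=\; \sum_{q\in P}\mi{\rv{X}^{\jstar}_q}{\rProt_{1,q} \mid \rPhi=\labeling,\rJ=\jstar},
\]
the last equality using the Markov chain $\rIstar_r - \rv{X}^{\jstar}_q - \rProt_{1,q}$ (given $\rv{X}^{\jstar}_q$, the message depends only on fooling parts independent of $\rIstar_r$).

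Finally I would use obliviousness to remove the dependence on $\jstar$ and $\labeling$ and then apply super-additivity. Conditioned on $\rPhi=\labeling$ alone, Proposition~\ref{prop:dgl-oblivious}, together with the facts that the special and the fooling slots induce the same marginal at each player and that $\labeling$ is independent of the sub-instances given $\jstar$, shows that $\rv{X}^1_q,\ldots,\rv{X}^{w_r}_q$ are i.i.d.\ and jointly independent of $\rJ$; since $\rProt_{1,q}$ depends on $\labeling$ only through $\labeling_q$ and $\labeling_q$ is independent of $\rJ$, it follows that $\mi{\rv{X}^{\jstar}_q}{\rProt_{1,q} \mid \rPhi=\labeling,\rJ=\jstar}$, after averaging over $\labeling$, equals a value $\alpha_q$ that is the same for every $\jstar$. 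Super-additivity of mutual information for the independent variables $\rv{X}^1_q,\ldots,\rv{X}^{w_r}_q$ then gives $w_r\cdot\alpha_q \le \en{\rProt_{1,q}\mid\rPhi_q} \le \card{\rProt_{1,q}}$, so $\alpha_q \le \card{\rProt_{1,q}}/w_r$. Putting the three displays together, $\sum_{\jstar}\mi{\rI^1_{\jstar}}{\rProt_1\mid\rPhi,\rJ=\jstar} \le \sum_{q}\sum_{\jstar}\alpha_q = \sum_q w_r\alpha_q \le \sum_q \card{\rProt_{1,q}} = \card{\rProt_1}$, and dividing by $w_r$ completes the proof.

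I expect the delicate step to be the ``additivity over players'' in the second paragraph: this is exactly where the product structure of $\distp_{r-1}$ is indispensable (it makes the fooling parts independent across players, hence the messages conditionally independent given $\rIstar_r$), and one must check that copying $\rIstar_r$ into every group does not destroy that conditional independence. The obliviousness bookkeeping — peeling the conditioning on $(\rPhi,\rJ)$ down to a single local mapping $\labeling_q$ that is independent of $\jstar$, so that the per-slot mutual informations become interchangeable — is conceptually routine but must be carried out carefully, since $\labeling$ as a whole is allowed to depend on $\jstar$.
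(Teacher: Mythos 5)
Your proof is correct and follows essentially the same route as the paper's: your per-player sub-additivity step is the paper's Claim~\ref{clm:dgl-separate} (both hinge on the conditional independence of the messages given $\rIstar_r$, which comes from the product form of $\distp_{r-1}$, plus the Markov chain reducing $\rIstar_r$ to the player's own slice $\rIstar_r(q)$), and your obliviousness-plus-super-additivity step over the $w_r$ slots is the paper's Claim~\ref{clm:dgl-info-ds}; you merely swap the order of the two decompositions. One cosmetic remark: you do not need (and cannot in general assert) that the per-slot quantities equal a common $\alpha_q$, since the message function need not treat the slots symmetrically — but this is harmless, as super-additivity bounds their sum by $\card{\rProt_{1,q}}$ regardless.
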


\begin{proof}

The first step is to show that the information revealed about $\rIstar_r$ via $\rProt_1$ can be partitioned over the messages sent by each individual player
about their own input in their special instance. 

\begin{claim}\label{clm:dgl-separate}
	$\mi{\rIstar_r}{\rProt_1 \mid \rPhi,\rJ} \leq \sum_{q \in P}\mi{\rIstar_r(q)}{\rProt_{1,q} \mid \rPhi,\rJ}$. 
\end{claim}

\begin{proof}
	
Intuitively, the claim is true because after conditioning on $\rPhi$ and $\rJ$, the input of players become independent of each other on all fooling instances, i.e., every instance except for their copy of $\Istar_r$. 
As a result, the messages communicated by one player do not add extra information to messages of another one about $\Istar_r$. Moreover, since each player $q$ is observing $\Istar_r(q)$, the information revealed
by this player can only be about $\Istar_r(q)$ and not $\Istar_r$. We now provide the formal proof. 

	Recall that $\rProt_1 = (\rProt_{1,1},\ldots,\rProt_{1,p_r})$. By chain rule of mutual information, 
	\begin{align*}
		\mi{\rIstar_r}{\rProt_1 \mid \rPhi,\rJ} \Eq{\itfacts{chain-rule}} \sum_{q \in P} \mi{\rIstar_r}{\rProt_{1,q} \mid \rProt_1^{<q},\rPhi,\rJ}. 
	\end{align*}
	We first show that for each $q \in P$, 
	\begin{align}
	\mi{\rIstar_r}{\rProt_{1,q} \mid \rProt_1^{<q},\rPhi,J} \leq \mi{\rIstar_r}{\rProt_{1,q} \mid \rPhi,\rJ} \label{eq:clm-dgl-separate}.
	\end{align}
	
	Recall that, for any player $q$, $\rI(q)$ denotes the input to player $q$ in all instances in which $q$ is participating, and define $\rI(-q)$ as the collection of the inputs to all other players across all instances. 
	We argue that $\rI(q) \perp \rI({-q}) \mid \rIstar_r,\rPhi,\rJ$. The reason is simply because after conditioning on $\rIstar_r$, the only variables in $\rI(q)$ and $\rI({-q})$ are fooling instances that are 
	sampled from $\distp_{r-1}$ which is a product distribution across players. This implies that $\mi{\rI(q)}{\rI({-q}) \mid \rIstar_r,\rPhi,\rJ} = 0$ (by~\itfacts{info-zero}). Now, notice that the input to each player $q$ is 
	uniquely identified by $(\rI(q),\rPhi)$ (by locally computable property in Definition~\ref{def:locally-computable}) and hence
	conditioned on $\rIstar_r,\rPhi,J$, the message $\rProt_{1,q}$ is a deterministic function of $\rI(q)$. As such, by the data processing inequality (\itfacts{data-processing}), we have
	that $\mi{\rProt_{1,q}}{\rProt_1^{<q} \mid \rIstar_r,\rPhi,\rJ} = 0$; by Proposition~\ref{prop:info-decrease}, this implies Eq~(\ref{eq:clm-dgl-separate}) (here, conditioning on $\Prot_1^{<q}$ in RHS
	of Eq~(\ref{eq:clm-dgl-separate}) can only decrease the mutual information).
	
	Define $\rIstar_r(-q)$ as the input to all players in $\rIstar_r$ except for player $q$; hence $\rIstar_r = (\rIstar_r(q),\rIstar_r(-q))$. By chain rule of mutual information (\itfacts{chain-rule}), 
	\begin{align*}
		\mi{\rIstar_r}{\rProt_{1,q} \mid \rPhi,\rJ} = \mi{\rIstar_r(q)}{\rProt_{1,q} \mid \rPhi,\rJ} + \mi{\rIstar_r({-q})}{\rProt_{1,q} \mid \rIstar_r(q),\rPhi,\rJ} = \mi{\rIstar_r(q)}{\rProt_{1,q} \mid \rPhi,\rJ}
	\end{align*}
	since $\mi{\rIstar_r(-q)}{\rProt_{1,q} \mid \rIstar_r(q),\rPhi,\rJ} = 0$ as $\rProt_{1,q}$ is independent of $\rIstar_r(-q)$ after conditioning on $\rIstar_r(q)$ (and~\itfacts{info-zero}). The claim now follows from 
	Eq~(\ref{eq:clm-dgl-separate}) and above equation. 	
\end{proof}

Next, we use a direct-sum style argument to show that as each player is oblivious to the identity of the special instance in the input, 
the message sent by this player cannot reveal much information about the special instance, unless it is too large. 

\begin{claim}\label{clm:dgl-info-ds}
	For any group $P_i$ and player $q \in P_i$, $\mi{\rIstar_r(q)}{\rProt_{1,q} \mid \rPhi,\rJ} \leq \card{\rProt_{1,q}}/w_r$. 
\end{claim}
\begin{proof}
	We first argue that, 
	\begin{align}
		\mi{\rIstar_r(q)}{\rProt_{1,q} \mid \rPhi,\rJ} \leq \mi{\rIstar_r(q)}{\rProt_{1,q} \mid \rPhi_q,\rJ} \label{eq:lem-dgl-info-ds-1}.
	\end{align}
	
	Let $\rPhi = (\rPhi_q, \rPhi^{-q})$ where $\rPhi^{-q}$ denotes the rest of the mapping function $\rPhi$ beyond $\rPhi_q$. We have, $\rProt_{1,q} \perp \rPhi^{-q} \mid \rPhi_q,\rJ,\rIstar_r(q)$ 
	since after conditioning on $\rJ$, $\rPhi$ does not depend on any other variable in $\dist_{r}$ (by obliviousness property in Definition~\ref{def:oblivious}), 
	and hence the input to player $q$ and as a result $\rProt_{1,q}$ are independent of $\rPhi^{-q}$
	after conditioning on both $\rPhi_q$ and $\rJ$. Eq~(\ref{eq:lem-dgl-info-ds-1}) now follows from the independence of $\rProt_{1,q}$ and $\rPhi^{-q}$ and Proposition~\ref{prop:info-decrease} (as 
	conditioning on $\rPhi^{<q}$ in RHS of Eq~(\ref{eq:lem-dgl-info-ds-1}) can only decrease the mutual information). 
	
	We can bound the RHS of Eq~(\ref{eq:lem-dgl-info-ds-1}) as follows,  
	\begin{align*}
		\mi{\rIstar_r(q)}{\rProt_{1,q} \mid \rPhi_q,\rJ} &= \Ex_{j \in [w_r]}\Bracket{\mi{\rIstar_r(q)}{\rProt_{1,q} \mid \rPhi_q,\rJ = j}} = \frac{1}{w_r} \sum_{j=1}^{w_r} \mi{\rI^i_j(q)}{\rProt_{1,q} \mid \rPhi_q,\rJ = j} 
		\tag{$\jstar$ is chosen uniformly at random from $[w_r]$ and $\rIstar_r = \rI^i_j$ conditioned on $\rJ = j$}.
	\end{align*}
	
	Our goal now is to drop the conditioning on the event $\rJ = j$. By Definition~\ref{def:oblivious}, $\rPhi_q$ is independent of $\rJ = j$. Moreover, $\rI^i_j(q)$ is sampled from $\dist_{r-1}(q)$ (both in $\dist_{r-1}$ and 
	in $\distp_{r-1}$) and hence is independent of $\rJ=j$, even conditioned on $\rPhi_q$. Finally, by Proposition~\ref{prop:dgl-oblivious}, the input
	to player $q$ is independent of $\rJ=j$ and as $\rProt_{1,q}$ is a deterministic function
	of the input to player $q$, $\rProt_{1,q}$ is also independent of $\rJ=j$, even conditioned on $\rPhi_q$ and $\rI^i_j(q)$. This means that the joint distribution
	of $\rI^i_j(q),\rProt_{1,q}$, and $\rPhi_q$ is independent of the event $\rJ=j$ and hence we can drop this conditioning in the above term, and obtain that,
	\begin{align*}
			\frac{1}{w_r} \sum_{j=1}^{w_r} \mi{\rI^i_j(q)}{\rProt_{1,q} \mid \rPhi_i,\rJ = j}  &= \frac{1}{w_r} \sum_{j=1}^{w_r} \mi{\rI^i_j(q)}{\rProt_{1,q} \mid \rPhi_i} \\
			&\leq \frac{1}{w_r} \sum_{j=1}^{w_r} \mi{\rI^i_j(q)}{\rProt_{1,q} \mid \rI^{i,<j}(q), \rPhi_i} = \frac{1}{w_r} \cdot \mi{\rI^{i}(q)}{\rProt_{1,q} \mid \rPhi_i},
	\end{align*} 
	where the inequality holds since $\rI^i_j(q) \perp \rI^{i,<j}(q) \mid \rPhi_i$ and hence conditioning on $\rI^{i,<j}(q)$ can only increase the mutual information by Proposition~\ref{prop:info-increase}. 
	Finally, 
	\begin{align*}
		\frac{1}{w_r} \cdot \mi{\rI^{i}(q)}{\rProt_{1,q} \mid \rPhi_i} \Leq{\itfacts{uniform}} \frac{1}{w_1} \cdot \en{\rProt_{1,q} \mid \rPhi_i} \Leq{\itfacts{cond-reduce}} \frac{1}{w_r} \cdot \en{\rProt_{1,q}}
		 \Leq{\itfacts{uniform}} \frac{1}{w_r} \cdot \card{\rProt_{1,q}},
	\end{align*}
	finalizing the proof. 
\end{proof}

Lemma~\ref{lem:dgl-info-special} now follows from the previous two claims:

\begin{align*}
	\mi{\rIstar_r}{\rProt_1 \mid \rPhi,\rJ} \Leq{Claim~\ref{clm:dgl-separate}} \sum_{q \in P}\mi{\rIstar_r(q)}{\rProt_{1,q} \mid \rPhi,\rJ} \Leq{Claim~\ref{clm:dgl-info-ds}} 
	\frac{1}{w_r} \cdot \sum_{q \in P} \card{\rProt_{1,q}} = \frac{1}{w_r} \cdot \card{\rProt_1}. \qed
\end{align*}

\end{proof}

For any tuple $(\Prot_1,\labeling,j)$, we define the distribution $\psi(\Prot_1,\labeling,j)$ as the distribution of $\rIstar_r$ in $\dist_r$ conditioned on $\rProt_1 = \Prot_1$, $\rPhi = \labeling$, and $\rJ = j$. 
Recall that the original distribution of $\rIstar_r$ is $\dist_{r-1}$. In the following, we show that if the first message sent by the players is not too large, and hence does
not reveal much information by about $\Istar_r$ by Lemma~\ref{lem:dgl-info-special}, even after the aforementioned conditioning, distribution of $\rIstar_r$ does not change by much in average. 
Formally, 

\begin{lemma}\label{lem:dgl-istar-close}
	If $\card{\rProt_1} = o(w_r/r^4)$, then $\Ex_{(\Prot_1,\labeling,j)}\Bracket{\tvd{\psi(\Prot_1,\labeling,j)}{\dist_{r-1}}} = o(1/r^2)$. 
\end{lemma}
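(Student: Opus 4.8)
The plan is to recognize the expected Kullback--Leibler divergence $\Ex_{(\Prot_1,\labeling,j)}\Bracket{\DD{\psi(\Prot_1,\labeling,j)}{\dist_{r-1}}}$ as exactly the conditional mutual information $\mi{\rIstar_r}{\rProt_1 \mid \rPhi,\rJ}$ already controlled by Lemma~\ref{lem:dgl-info-special}, and then to pass from this divergence bound to the desired total variation bound via Pinsker's inequality together with Jensen's inequality (concavity of $\sqrt{\cdot}$). As usual, we may first fix the public and private randomness of the protocol and work with the resulting deterministic protocol, which is the regime in which Lemma~\ref{lem:dgl-info-special} applies; averaging the final inequality over the coins recovers the general statement.

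The first step is to check that, conditioned on $\rPhi=\labeling$ and $\rJ=j$, the random variable $\rIstar_r$ is still distributed according to $\dist_{r-1}$. Indeed, $\rIstar_r$ is sampled (in Step~3 of the construction of $\dist_r$) independently of everything else, $\rJ$ is an independent uniform index, and by obliviousness (Definition~\ref{def:oblivious}, Condition~$(\ref{item:def-obv-1})$) the sampled labeling function $\rPhi$ is a function of $\rJ$ alone; hence $\rIstar_r \perp (\rPhi,\rJ)$ and $\distribution{\rIstar_r \mid \rPhi=\labeling,\rJ=j}=\dist_{r-1}$ for every $(\labeling,j)$ in the support. Combining this with the standard characterization of conditional mutual information as an expected divergence to the conditional prior (and the definition $\psi(\Prot_1,\labeling,j)=\distribution{\rIstar_r\mid \rProt_1=\Prot_1,\rPhi=\labeling,\rJ=j}$) gives
\[
	\mi{\rIstar_r}{\rProt_1 \mid \rPhi,\rJ} \;=\; \Ex_{(\Prot_1,\labeling,j)}\Bracket{\DD{\psi(\Prot_1,\labeling,j)}{\dist_{r-1}}},
\]
where the expectation is over $(\Prot_1,\labeling,j)$ drawn from $\dist_r$ and the protocol, i.e.\ precisely the distribution appearing in the statement of the lemma.

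For each fixed tuple, Pinsker's inequality yields $\tvd{\psi(\Prot_1,\labeling,j)}{\dist_{r-1}} \le \sqrt{\tfrac12\,\DD{\psi(\Prot_1,\labeling,j)}{\dist_{r-1}}}$. Taking the expectation and applying Jensen's inequality to $\sqrt{\cdot}$, and then invoking the previous identity followed by Lemma~\ref{lem:dgl-info-special},
\[
	\Ex_{(\Prot_1,\labeling,j)}\Bracket{\tvd{\psi(\Prot_1,\labeling,j)}{\dist_{r-1}}} \;\le\; \sqrt{\tfrac12\,\mi{\rIstar_r}{\rProt_1 \mid \rPhi,\rJ}} \;\le\; \sqrt{\frac{\card{\rProt_1}}{2\,w_r}}.
\]
Plugging in the hypothesis $\card{\rProt_1}=o(w_r/r^4)$ gives $\Ex_{(\Prot_1,\labeling,j)}\Bracket{\tvd{\psi(\Prot_1,\labeling,j)}{\dist_{r-1}}} = \sqrt{o(1/r^4)} = o(1/r^2)$, as claimed.

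I do not expect a substantive obstacle here: every step is either a textbook inequality (Pinsker, Jensen) or the standard information-theoretic identity expressing conditional mutual information as an expected KL divergence, or a direct appeal to Lemma~\ref{lem:dgl-info-special}. The only point that needs care is the conditioning bookkeeping in the second paragraph --- namely verifying that $\distribution{\rIstar_r\mid\rPhi,\rJ}$ is \emph{exactly} $\dist_{r-1}$, so that the divergence in the identity is to $\dist_{r-1}$ itself rather than to some conditioned variant, and that the expectation over $(\Prot_1,\labeling,j)$ in the lemma is taken with respect to the same joint law that underlies $\mi{\rIstar_r}{\rProt_1\mid\rPhi,\rJ}$.
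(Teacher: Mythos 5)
Your proof is correct and follows essentially the same route as the paper's: identify the expected KL divergence with $\mi{\rIstar_r}{\rProt_1\mid\rPhi,\rJ}$ (after checking $\rIstar_r\perp(\rPhi,\rJ)$ so the reference measure is indeed $\dist_{r-1}$), then apply Pinsker's inequality, Jensen's inequality, and Lemma~\ref{lem:dgl-info-special}. No gaps.
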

\begin{proof}
	Since $\Istar_r$ is independent of $\labeling$ and $\jstar$ in $\dist_r$, we have $\dist_{r-1} =  \distribution{\rIstar_r} = \distribution{\rIstar_r \mid \rPhi,\rJ}$. 
	As such, it suffices to show that $\distribution{\rIstar_r \mid \rPhi,\rJ}$ is close to the distribution of $\distribution{\rIstar_r \mid \rProt_1,\rPhi,\rJ}$. By Lemma~\ref{lem:dgl-info-special} and 
	the assumption $\card{\rProt_1} = o(w_r/r^4)$, we know that the information revealed about $\Istar_r$ by $\rProt_1$, conditioned on $\rPhi,\rJ$ is quite small, i.e., $o(1/r^4)$. This intuitively means that
	having an extra knowledge of $\rProt_1$ would not be able to change the distribution of $\Istar_r$ by much. We now formalizes this intuition. 
	\begin{align*}
		\Ex_{(\Prot_1,\labeling,j)}\Bracket{\tvd{\psi(\Prot_1,\labeling,j)}{\dist_{r-1}}} &= \Ex_{(\Prot_1,\labeling,j)}\Bracket{\tvd{\distribution{\rIstar_r \mid \Prot_1, \labeling, j}}{\distribution{\rIstar_r \mid \labeling,j}}} \\
		&\leq \Ex_{(\Prot_1,\labeling,j)}\Bracket{\sqrt{\frac{1}{2}\cdot\DD{\distribution{\rIstar_r \mid \Prot_1, \labeling, j}}{\distribution{\rIstar_r \mid \labeling,j}}}} \tag{By Pinsker's inequality (Fact~\ref{fact:pinskers})} \\
		&\leq \sqrt{\frac{1}{2} \cdot \Ex_{(\Prot_1,\labeling,j)}\Bracket{\DD{\distribution{\rIstar_r \mid \Prot_1, \labeling, j}}{\distribution{\rIstar_r \mid \labeling,j}}}} \tag{By concavity of $\sqrt{\cdot}$ and Jensen's inequality} \\
		&\Eq{Fact~\ref{fact:kl-info}} \sqrt{\frac{1}{2} \cdot \mi{\rIstar_r}{\rProt_1 \mid \rPhi,\rJ}} \Leq{Lemma~\ref{lem:dgl-info-special}} \sqrt{\frac{1}{2} \cdot \frac{1}{w_r} \cdot \card{\rProt_1}},
	\end{align*}
	which is $o(1/r^2)$ as $\card{\rProt_1} = o(w_r/r^4)$. 
\end{proof}

Define the  recursive function $\delta(r) := \delta(r-1) - o(1/r^2) - \gamma$ with base $\delta(0) = 1/2$. We have, 

\begin{lemma}\label{lem:dgl-round-elimination}
	For any deterministic $\delta(r)$-error $r$-round protocol $\prot$ for $\dist_r$, we have $\norm{\prot} = \Omega(w_r/r^4)$. 
\end{lemma}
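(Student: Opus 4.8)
The plan is to prove Lemma~\ref{lem:dgl-round-elimination} by induction on $r$, using a round-elimination argument built on the machinery already developed (Lemma~\ref{lem:dgl-info-special} and Lemma~\ref{lem:dgl-istar-close}). The base case $r=0$ is immediate: a $0$-round protocol communicates nothing and must decide $\PP_{s_0}$ on $\dist_0=\frac12\distY_0+\frac12\distN_0$ with error at most $\delta(0)=1/2$, which is vacuous (so the $\Omega(w_0/0^4)$ bound is trivially interpreted as ``$0$ suffices'' — in fact one should set up the base case of the induction at $r=1$ where the statement is still proven via the same reduction collapsing to $\dist_0$; I would state it carefully so that a $1$-round protocol with error $\delta(1)$ still needs $\Omega(w_1)$ communication, which follows from Lemma~\ref{lem:dgl-info-special} applied with $r=1$ together with the $\gamma$-preserving property forcing correctness on $\Istar_1\sim\dist_0$).

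For the inductive step, suppose the claim holds for $r-1$ and let $\prot$ be a deterministic $\delta(r)$-error $r$-round protocol for $\dist_r$ with $\norm{\prot}=o(w_r/r^4)$; I will derive a contradiction by constructing from $\prot$ a deterministic $(r-1)$-round protocol $\prot'$ for $\dist_{r-1}$ with error at most $\delta(r-1)$ and communication $o(w_{r-1}/(r-1)^4)$. The construction: the $p_{r-1}$ players holding an instance $\Istar\sim\dist_{r-1}$ will simulate a run of $\prot$ on an instance of $\dist_r$ in which $\Istar$ is embedded as the special instance $\Istar_r$ inside group $P_1$ at a uniformly random coordinate $\jstar$. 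Concretely, using public randomness the players sample $\jstar$, the labeling function $\labeling\in\Labeling_r$ (this is legal since $\Labeling_r$ depends only on $\jstar$ by obliviousness), and all the fooling sub-instances $I^i_j$ for $(i,j)\neq(1,\jstar)$ from $\distp_{r-1}$ (which is a product distribution, so each player can sample its own share privately without coordination); by the locally-computable property each of the $p_{r-1}$ players can assemble its input $I^1(q)$ to group $P_1$ and apply $\labeling_q$, and the remaining $p_r-p_{r-1}$ players of the other groups can be simulated entirely from public randomness by any single player (their inputs depend on nothing private). The players then feed round $1$ of $\prot$: the first-round messages of all $p_r$ simulated players are broadcast — this costs $\norm{\prot}=o(w_r/r^4)$ bits total. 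Now by Lemma~\ref{lem:dgl-istar-close}, after conditioning on the realized $(\Prot_1,\labeling,\jstar)$ the distribution of $\Istar_r=\Istar$ is $o(1/r^2)$ in total variation distance from its true distribution $\dist_{r-1}$; hence the players can ``correct'' (by a standard coupling argument — rejection-resample the fooling instances, or equivalently note that the remaining $r-1$ rounds of $\prot$, viewed as a protocol on the conditional distribution of $\Istar_r$, constitute an $(r-1)$-round protocol) to obtain an $(r-1)$-round protocol $\prot'$ whose input distribution on $\Istar$ is exactly $\dist_{r-1}$, paying an additive $o(1/r^2)$ in error for the total-variation gap. Its remaining communication is $\le\norm{\prot}=o(w_r/r^4)\le o(w_{r-1}/(r-1)^4)$ since $\set{w_r}$ is non-increasing and $r\ge 2$ makes the polynomial factors only help. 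By the $\gamma$-preserving property, $\PP_{s_r}(I)=\PP_{s_{r-1}}(\Istar_r)$ except with probability $\gamma$, so a correct answer of $\prot$ on $I$ yields a correct answer for $\PP_{s_{r-1}}(\Istar)$ up to this further additive $\gamma$ loss; thus $\prot'$ has error at most $\delta(r)+o(1/r^2)+\gamma=\delta(r-1)$, contradicting the inductive hypothesis.

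The main obstacle I expect is making the reduction rigorous on two points. First, the step where one converts ``$\Istar_r$ is $o(1/r^2)$-close to $\dist_{r-1}$ after conditioning'' into an honest protocol whose input is genuinely distributed as $\dist_{r-1}$: the cleanest route is not to resample but to observe that for a fixed transcript prefix $\Prot_1$, the suffix of $\prot$ is an $(r-1)$-round protocol, average its error over $\Istar_r\sim\psi(\Prot_1,\labeling,\jstar)$, and then use the total-variation bound of Lemma~\ref{lem:dgl-istar-close} to transfer that error bound to $\Istar_r\sim\dist_{r-1}$, incurring the $o(1/r^2)$ additive term — so one never needs the players to literally hold a $\dist_{r-1}$-distributed input, only to relate the two error probabilities. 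Second, handling public versus private randomness: the fooling instances must be sampled so that (a) each of the $p_{r-1}$ real players knows its own share of $I^1(q)$ and (b) no player learns $\jstar$ beyond what public randomness already reveals — but since $\jstar$ is drawn by public coins in the simulation this is automatic, and one should simply remark that the public-coin protocol can be fixed to its best coins to recover a deterministic protocol for the inductive hypothesis (or carry the induction for public-coin protocols throughout, which is what the information-theoretic lemmas already support since they were proven for deterministic $\prot$ and deterministic protocols suffice by fixing randomness). A minor bookkeeping point is verifying $\delta(r)\ge 1/3-r\gamma$ stays in the valid range $(0,1)$ and that the recursion $\delta(r)=\delta(r-1)-o(1/r^2)-\gamma$ telescopes to something $\ge 1/3-r\gamma$ (since $\sum 1/r^2$ converges), which is what Theorem~\ref{thm:dgl} ultimately needs; I would dispatch this with one line.
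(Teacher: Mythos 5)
Your overall skeleton (induction on $r$, a round-elimination reduction, the error bookkeeping via Lemmas~\ref{lem:dgl-info-special} and~\ref{lem:dgl-istar-close}, and fixing public coins at the end to recover a deterministic protocol) matches the paper, but the simulation at the heart of the inductive step has two concrete flaws. First, you embed $\Istar$ only into group $P_1$ and claim the players of all other groups ``can be simulated entirely from public randomness by any single player.'' In $\dist_r$ the special instance is \emph{replicated} across every group: $I^i_{\jstar}=\Istar_r$ for all $i\in[g_r]$, and this correlation is forced for decision problems (all copies must simultaneously be \Yes or \No; for maximum coverage the \Yes case needs every copy to contribute $k_{r-1}\cdot N$ covered elements). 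Hence the inputs of groups $P_2,\ldots,P_{g_r}$ depend on the private input $I_{r-1}$ and cannot be produced from public coins. With your construction the simulated instance is not distributed according to $\dist_r$, so neither the $\delta(r)$-error guarantee of $\prot$ nor the $\gamma$-preserving property applies to it. The paper's reduction instead has each of the $p_{r-1}$ real players mimic the corresponding player in \emph{all} $g_r$ groups; the independent-per-group variant you describe is exactly the modification the paper reserves for search problems.

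Second, your protocol actually computes and broadcasts the first-round messages (``this costs $\norm{\prot}$ bits total''), which consumes a round of communication: the result is an $r$-round protocol, not an $(r-1)$-round one, and no contradiction with the induction hypothesis follows. The correct move is to sample the tuple $(\Prot_1,\labeling,\jstar)$ from its joint \emph{marginal} under $(\distribution{\prot},\dist_r)$ using public randomness, with no communication at all, and then have each player privately draw its fooling-instance inputs from the distribution \emph{conditioned} on $(\Prot_1,\labeling,\jstar)$ and on its own share $\Istar_r(q)$. That conditioning correlates the players' inputs, so one must prove that the conditional distribution still factorizes across players given each player's special-instance share --- this is the content of the paper's Proposition~\ref{prop:dgl-sampling-possible}, and it is a genuinely necessary step that your proposal omits (observing that the unconditioned $\distp_{r-1}$ is a product distribution does not address it, since you need consistency with the already-fixed $\Prot_1$). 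Your ``obstacles'' paragraph correctly describes the error transfer from $\psi(\Prot_1,\labeling,\jstar)$ to $\dist_{r-1}$ via the total-variation bound, but it does not repair either of these structural defects in the simulation. (Minor: the base case need not be restated at $r=1$; the paper argues directly that a $0$-round protocol on $\dist_0$ cannot beat error $\delta(0)=1/2$, and the strict slack in $\delta(r)<\delta(0)$ is what makes the chain of reductions terminate in a contradiction.)
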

\begin{proof}
	The proof is by induction on the number of rounds $r$. 
	
	\textbf{\emph{Base case:}} The base case of this lemma refers to $0$-round protocols for $\dist_0$, i.e., protocols that are not allowed any communication. As in the distribution $\dist_0$,
	\Yes and \No instances happen w.p. $1/2$ each and the coordinator has no input, any $0$-round protocol can only output the correct answer w.p. $1/2$, proving the induction base. 
	
	\textbf{\emph{Induction step:}} Suppose the lemma holds for all integers up to $r$ and we prove it for $r$ round protocols. The proof is by contradiction. 
	Given an $r$-round protocol $\prot_{r}$ violating the induction hypothesis, we create an $(r-1)$-round protocol $\prot_{r-1}$ which also violates the induction hypothesis, a contradiction. 
	Given an instance $I_{r-1}$ of $\PP_{s_{r-1}}$ over players $P^{r-1}$ and domain $D^{r-1} = \set{0,1}^{s_{r-1}}$, the protocol $\prot_{r-1}$ works as follows: 
	\begin{tbox}
	\begin{enumerate}
		\item Let $P^r = [p_r]$ and partition $P^r$ into $g_r$ equal-size groups $P_1,\ldots,P_{g_r}$ as is done in $\dist_{r}$. Create
		an instance $I_r$ of $\dist_{r}$ as follows: 
		\item Using \emph{public randomness}, the players in $P^{r-1}$ sample $R:= (\Prot_1,\labeling,\jstar) \sim (\distribution{\prot_{r}},\dist_r)$, i.e., from the (joint) distribution of protocol 
		$\prot_r$ over distribution $\dist_r$. 
		\item The $q$-th player in $P^{r-1}$ (in instance $I_{r-1}$) mimics the role of the $q$-th player in each group $P_i$ for $i \in [g_r]$ in $I_r$, denoted by player $(i,q)$, as follows: 
		\begin{enumerate}
			\item Set the input for $(i,q)$ in the special instance $I^{i}_{\jstar}(q)$ of $I_r$ as the original input of $q$ in $I_{r-1}$, i.e., $I_{r-1}(q)$ mapped via $\packing_r$ and $\labeling$ to
			 $I$ (as is done in $I_r$ to the domain $D^i_{\jstar}$). This is possible by the locally computable property of $\packing_r$ and $\labeling$ in Definition~\ref{def:locally-computable}. 
			\item \label{item:dgl-private-sampling} Sample the input for $(i,q)$ in all the fooling instances $I^{i}_j(q)$ of $I_r$ for any $j \neq \jstar$ using \emph{private randomness} from the \emph{correlated} distribution
			$\dist_{r} \mid \paren{\rIstar_{r} = I_{r-1}, (\rProt_1,\rPhi,\rJ) = R}$. This sampling is possible by Proposition~\ref{prop:dgl-sampling-possible} below. 
		\end{enumerate}
		\item Run the protocol $\prot_r$ from the second round onwards on $I_r$ assuming that in the first round the communicated message was $\Prot_1$ and output the same answer as $\prot_r$. 
	\end{enumerate}
	\end{tbox}

Notice that in Line~(\ref{item:dgl-private-sampling}), the distribution the players are sampling from depends on $\Prot_1,\phi,\jstar$ which are public knowledge (through sampling via public randomness), as well as
$\Istar_r$ which is \emph{not} a public information as each player $q$ only knows $\Istar_r(q)$ and not all of $\Istar_r$. Moreover, while random variables $\rI^i_j(q)$ (for $j \neq \jstar$) are originally independent across
different players $q$ (as they are sampled from the product distribution $\distp_{r-1}$), conditioning on the first message of the protocol, i.e., $\Prot_1$ correlates them, and hence a-priori it is not 
clear whether the sampling in Line~(\ref{item:dgl-private-sampling}) can be done without any further communication. Nevertheless, we can prove that this is the case and to sample from the distribution
in Line~(\ref{item:dgl-private-sampling}), each player only needs to know $\Istar_r(q)$ and not $\Istar_r$. 
\begin{proposition}\label{prop:dgl-sampling-possible}
	Suppose $\rI$ is the collection of all instances in the distribution $\dist_r$ and $\rI(q)$ is the input to player $q$ in instances in which $q$ participates; then,
	\begin{align*}
	\distribution{\rI \mid \rIstar_{r} = I_{r-1}, (\rProt_1,\rPhi,\rJ) = R} = \cross_{q \in P}~\distribution{\rI(q) \mid {\rIstar_r(q) = I_{r-1}(q),(\rProt_1,\rPhi,\rJ) = R}}. 
	\end{align*}	
\end{proposition}
\begin{proof}
	Fix any player $q \in P$, and recall that $\rI(-q)$ is the collection of the inputs to all players other than $q$ across all instances (special and fooling). We prove
	that $\rI(q) \perp \rI(-q) \mid (\rIstar_r(q), \rProt_1,\rPhi,\rJ)$ in $\dist_r$, which immediately implies the result. To prove this claim, by~\itfacts{info-zero}, it suffices to show 
	that $\mi{\rI(q)}{\rI(-q)\mid \rIstar_r(q), \rProt_1,\rPhi,\rJ} = 0$. Define $\rProt_1^{-q}$ as the set of all messages in $\rProt_1$ except for the message
	of player $q$, i.e., $\rProt_{1,q}$. We have, 
	\begin{align*}
		\mi{\rI(q)}{\rI(-q) \mid \rIstar_r(q), \rProt_1,\rPhi,\rJ} \leq \mi{\rI(q)}{\rI(-q) \mid \rIstar_r(q), \rProt_{1,q},\rPhi,\rJ}, %\label{eq:dgl-sampling-possible-1}
	\end{align*}
	since $\rI(q) \perp \rProt_1^{-q} \mid \rI(-q),\rIstar_r(q), \rProt_{1,q},\rPhi,\rJ$ as the input to players $P \setminus \set{q}$ is uniquely
	determined by $\rI(-q),\rPhi$ (by the locally computable property in Definition~\ref{def:locally-computable}) and hence $\rProt_1^{-q}$ is deterministic after the conditioning; this independence 
	means that conditioning on $\rProt_1^{-q}$ in the RHS above can only decrease the mutual information by Proposition~\ref{prop:info-decrease}.  
	We can further bound the RHS above by,
	\begin{align*}
		\mi{\rI(q)}{\rI(-q)\mid \rIstar_r(q), \rProt_{1,q},\rPhi,\rJ} \leq \mi{\rI(q)}{\rI(-q)\mid \rIstar_r(q),\rPhi,\rJ},
	\end{align*}
	since ${\rI(-q) \perp \rProt_{1,q} \mid \rI(q),\rIstar_r(q), \rPhi,\rJ}$ as the input to player $q$ is uniquely determined by $\rI(q),\rPhi$ (again by Definition~\ref{def:locally-computable}) and 
	hence after the conditioning, $\rProt_{1,q}$ is deterministic; this implies that conditioning on $\rProt_{1,q}$ in RHS above can only decrease the mutual information by Proposition~\ref{prop:info-decrease}. 
	Finally, observe that $\mi{\rI(q)}{\rI(-q)\mid \rIstar_r(q),\rPhi,\rJ} = 0$ by \itfacts{info-zero}, since after conditioning on $\Istar_r(q)$, 
	the only remaining instances in $\rI(q)$ are fooling instances which are sampled from the distribution $\distp_{r-1}$ which is independent across the players. This
	implies that $\mi{\rI(q)}{\rI(-q) \mid \rIstar_r(q), \rProt_1,\rPhi,\rJ} = 0$ also which finalizes the proof. 
\end{proof}

Having proved Proposition~\ref{prop:dgl-sampling-possible}, it is now easy to see that $\prot_{r-1}$ is indeed a valid $r-1$ round protocol for
distribution $\dist_{r-1}$: each player $q$ can perform the sampling in Line~(\ref{item:dgl-private-sampling}) without any communication as $(\Istar(q),\Prot_1,\Phi,J)$ are all known to $q$; this allows the players to simulate the first 
round of protocol $\prot_r$ without any communication and hence only need $r-1$ rounds of communication to compute the answer of $\prot_r$. We can now prove that, 

\begin{claim}\label{clm:dgl-embedding}
	Assuming $\prot_r$ is a $\delta$-error protocol for $\dist_r$, $\prot_{r-1}$ would be a $\paren{\delta+\gamma + o(1/r^2)}$-error protocol for $\dist_{r-1}$. 
\end{claim}

\begin{proof}	
	Our goal is to calculate the probability that $\prot_{r-1}$ errs on an instance $I_{r-1} \sim \dist_{r-1}$. For the sake of analysis, suppose that $I_{r-1}$ is 
	instead sampled from the distribution $\psi$ for a randomly chosen tuple $(\Prot_1,\labeling,\jstar)$ (defined before Lemma~\ref{lem:dgl-istar-close}). Notice that by Lemma~\ref{lem:dgl-istar-close}, these two distributions
	are quite close to each other in total variation distance, and hence if $\prot_{r-1}$ has a small error on distribution $\psi$ it would necessarily has a small error on $\dist_{r-1}$ as well (by Fact~\ref{fact:tvd-small}). 
	
	Using Proposition~\ref{prop:dgl-sampling-possible}, it is easy to verify that if $I_{r-1}$ is sampled from $\psi$, then the
	instance $I_r$ constructed by $\prot_{r-1}$ is sampled from $\dist_{r}$ and moreover $\Istar_r = I_{r-1}$. As such, since $(i)$ $\prot_r$ is a 
	$\delta$-error protocol for $\dist_{r}$, $(ii)$ the answer to $I_r$ and $\Istar_r=I_{r-1}$ are the same w.p. $1-\gamma$ (by $\gamma$-preserving property in Definition~\ref{def:preserving}), 
	and $(iii)$ $\prot_{r-1}$ outputs the same answer as $\prot_r$, protocol $\prot_{r-1}$ is 
	a $(\delta+\gamma)$-error protocol for $\psi$. 
	
	We now prove this claim formally. Define $\rRpri$ and $\rRpub$ as, respectively, the private and public randomness used by $\prot_{r-1}$. We have,  
	\begin{align*}
		\Pr_{\dist_{r-1}}\paren{\prot_{r-1}~\errs} &= \Ex_{\rI_{r-1} \sim \dist_{r-1}} \Ex_{\rRpub}\Bracket{\Pr_{\rRpri}\paren{\prot_{r-1}~\errs \mid \rRpub}} \\
		&= \Ex_{(\Prot_1,\labeling,\jstar)} \Ex_{\rI_{r-1} \sim \dist_{r-1} \mid (\Prot_1,\labeling,\jstar)} \Bracket{\Pr_{\rRpri}\paren{\prot_{r-1}~\errs \mid \Prot_1,\labeling,\jstar}} 
		\tag{as $\rRpub \perp \rI_{r-1}$ and $\rRpub = (\Prot_1,\psi,\jstar)$ in protocol $\prot_{r-1}$} \\
		 &\leq \Ex_{(\Prot_1,\labeling,\jstar)} \Bracket{ \Ex_{\rI_{r-1} \sim \psi(\Prot_1,\labeling,\jstar)} \Bracket{\Pr_{\rRpri}\paren{\prot_{r-1}~\errs \mid \Prot_1,\labeling,\jstar}} + \tvd{\dist_{r-1}}{\psi(\Prot_1,\labeling,\jstar)}} 
		 \tag{by Fact~\ref{fact:tvd-small} for distributions $\dist_{r-1}$ and $\psi(\Prot_1,\labeling,\jstar)$} \\
		 &= \Ex_{(\Prot_1,\labeling,\jstar)} \Ex_{\rI_{r-1} \sim \psi(\Prot_1,\labeling,\jstar)} \Bracket{\Pr_{\rRpri}\paren{\prot_{r-1}~\errs \mid \Prot_1,\labeling,\jstar}} + o(1/r^2) 
		 \tag{by linearity of expectation and Lemma~\ref{lem:dgl-istar-close}} \\
		 &= \Ex_{(\Prot_1,\labeling,\jstar)} \Ex_{\rI_{r-1} \sim \psi(\Prot_1,\labeling,\jstar)} \Bracket{\Pr_{\dist_r}\paren{\prot_{r-1}~\errs \mid \rIstar_r = I_{r-1},\Prot_1,\labeling,\jstar}} + o(1/r^2) 
		 \tag{$\distribution{\rRpri} = \dist_r \mid \rIstar_r = I_{r-1},\Prot_1,\labeling,\jstar$} \\
		 &\leq \Ex_{(\Prot_1,\labeling,\jstar)} \Ex_{\rI_{r-1} \sim \psi(\Prot_1,\labeling,\jstar)} \Bracket{\Pr_{\dist_r}\paren{\prot_{r}~\errs \mid \rIstar_r = I_{r-1},\Prot_1,\labeling,\jstar}} + \gamma + o(1/r^2) 
		 \tag{$\PP_{s_r}(I_r) = \PP_{s_{r-1}}(I_{r-1})$ w.p. $1-\gamma$ by Definition~\ref{def:preserving} and $\prot_{r-1}$ outputs the same answer as $\prot_r$} \\
		 &= \Ex_{(\Istar_r,\Prot_1,\labeling,\jstar) \sim \dist_r} \Bracket{\Pr_{\dist_r}\paren{\prot_{r}~\errs \mid \rIstar_r = I_{r-1},\Prot_1,\labeling,\jstar}} + \gamma + o(1/r^2) 
		 \tag{$\psi(\Prot_1,\labeling,\jstar) = \distribution{\rIstar_r \mid \Prot_1,\labeling,\jstar}$ in $\dist_r$ by definition} \\
		 &= \Pr_{\dist_r}\paren{\prot_{r}~\errs} + o(1/r^2) \leq \delta + \gamma + o(1/r^2), \tag{as $\prot_r$ is a $\delta_r$-error protocol for $\dist_r$ by the assumption in the lemma statement}
	\end{align*}
	finalizing the proof. 
\end{proof}

We are now ready to finalize the proof of Lemma~\ref{lem:dgl-round-elimination}. Suppose $\prot_r$ is a deterministic $\delta(r)$-error protocol for $\dist_{r}$ with communication cost $\norm{\prot_r} = o(w_r/r^4)$. 
By Claim~\ref{clm:dgl-embedding}, $\prot_{r-1}$ would be a randomized $\delta(r-1)$-error protocol for $\dist_{r-1}$ with $\norm{\prot_{r-1}} \leq \norm{\prot_r}$ (as $\delta(r-1) = \delta(r) + \gamma + o(1/r^2)$). 
By an averaging argument, we can fix the randomness in $\prot_{r-1}$ to obtain a deterministic protocol $\prot'_{r-1}$ over the distribution $\dist_{r-1}$ with the same error $\delta(r-1)$ and communication 
of $\norm{\prot'_{r-1}} = o(w_r/r^4) = o(w_{r-1}/r^4)$ (as $\set{w_r}_{r \geq 0}$ is a non-increasing sequence). But such a protocol contradicts the induction hypothesis for $(r-1)$-round protocols, finalizing the proof. 
\end{proof}

\begin{proof}[Proof of Theorem~\ref{thm:dgl}]
	By Lemma~\ref{lem:dgl-round-elimination}, any deterministic $\delta(r)$-error $r$-round protocol for $\dist_r$ requires $\Omega(w_r/r^4)$ total communication. This immediately extends
	to randomized protocols by an averaging argument, i.e., the easy direction of Yao's minimax principle~\cite{Yao79}. The statement in the theorem now follows from this since for any $r \geq 0$, 
	$\delta(r) = \delta(r-1)-\gamma-o(1/r^2) = \delta(0)-r\cdot\gamma-\sum_{\ell=1}^{r} o(1/\ell^2)= 1/2 - r\cdot \gamma - o(1) > 1/3 - r \cdot \gamma$ (as $\delta(0) = 1/2$ and $\sum_{\ell=1}^{r}1/\ell^2$ is a converging series and hence is bounded by some absolute constant independent of $r$). 
\end{proof}

\newcommand{\MM}{\ensuremath{\mathcal{M}}}

\renewcommand{\AA}{\ensuremath{\mathcal{A}}}

\newcommand{\Ustar}{\ensuremath{U^{\star}}}

\newcommand{\coverage}[2]{\ensuremath{\textnormal{\textsf{coverage}}(#1,#2)\xspace}}

\section{A Distributed Lower Bound for Maximum Coverage}\label{sec:lb-dist}

We prove our main lower bound for maximum coverage in this section, formalizing Result~\ref{res:dist-lower}.

\begin{theorem}\label{thm:dist-lower}
	For integers $1 \leq r , c \leq o\paren{\frac{\log{k}}{\log\log{k}}}$ with $c \geq 4r$, any $r$-round protocol for the maximum coverage problem that can approximate the value of optimal solution
	to within a factor of better than $\paren{\frac{1}{2c} \cdot \frac{k^{1/2r}}{\log{k}}}$ w.p. at least $3/4$ requires $\Omega\paren{\frac{k}{r^4} \cdot m^{\frac{c}{(c+2)\cdot 4r}}}$ communication per machine.
	The lower bound applies to instances with $m$ sets, $n = m^{1/\Theta(c)}$ elements, and $k = \Theta(n^{2r/(2r+1)})$. 
\end{theorem}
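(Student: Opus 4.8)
The plan is to instantiate the general framework of Section~\ref{sec:dgl} (Theorem~\ref{thm:dgl}) with a base distribution and problem-specific gadgets designed for maximum coverage, and then to optimize the resulting parameters. Take $\PP_s$ to be the promise version of maximum coverage: an instance specifies a universe, a collection of sets partitioned among the players, and a cardinality parameter; it is a \Yes instance if some $k$ of its sets cover (essentially) the whole universe, and a \No instance if every $k$ of its sets cover at most a $1/\rho$ fraction of it, where $\rho$ is the approximation gap we track through the recursion. For the base distribution $\dist_0$ I would use a single-level instance on $p_0$ players in which \Yes and \No instances occur with probability $1/2$ each with identical per-player marginals, so that the base case of Lemma~\ref{lem:dgl-round-elimination} holds automatically; all the interesting structure is then built by the $r$ levels of the recursive template, and the task reduces to designing a packing function $\packing_r$ and a labeling family $\Labeling_r$ that are locally computable, oblivious, and $\gamma$-preserving for a negligible $\gamma$, while controlling how the universe size, the number of sets, the budget $k$, and $\rho$ grow with the recursion.

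For the packing step I would introduce the combinatorial gadget mentioned in Section~\ref{sec:techniques}: a \emph{randomly nearly disjoint} set system, a family $\FF=\set{T_1,\dots,T_N}$ of subsets of a ground set $[u]$, each of size $\Theta(u/\lambda)$, with the property that for every $i\neq j$ a uniformly random subset of $T_i$ of the relevant size meets $T_j$ in only an $o(1)$ fraction of its elements with high probability. A family with $N$ (and $\lambda$) polynomially large exists by the probabilistic method: take the $T_i$ to be independent random subsets and apply a Chernoff bound together with a union bound over all pairs $i\neq j$ and all relevant subsets. The packing function maps the $w_r$ level-$(r-1)$ sub-instances $I^i_1,\dots,I^i_{w_r}$ of a group into a single instance $I^i$ by choosing $w_r\le N$ sets from $\FF$ and embedding the universe of $I^i_j$ onto $T_j$, so every set of $I^i_j$ becomes a set contained in the block $T_j$; a player recovers $I^i_j(q)$ from $I^i(q)$ (and conversely) just by restricting to block $T_j$, so $\packing_r$ is locally computable, and since the embedding is symmetric in $j$ no player's input reveals $\jstar$. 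The near-disjointness guarantees that, within a group, a solution confined to one block is essentially useless for covering any other block, so $I^i$ behaves like the disjoint union of its $w_r$ sub-instances up to $o(1)$ error — the property fed into the preserving analysis below.

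To combine the $g_r$ group instances into the final instance $I$, I would design a labeling family that depends only on $\jstar$ and that relabels the universe so the $\jstar$-block of each group lands in a region \emph{private} to that group, while every non-$\jstar$ block lands in a small \emph{shared} region common to all groups; composing a uniformly random global relabeling on top makes each player's local map a uniform object independent of $\jstar$, giving both conditions of obliviousness. Under this labeling, using the $\jstar$-blocks of all groups covers $g_r$ essentially disjoint copies of $\OPT(\Istar_r)$, while all non-$\jstar$ blocks together can contribute at most the (small) size of the shared region; combining this with subadditivity of coverage (Fact~\ref{fact:submodular-subadditive}), the greedy bound (Fact~\ref{fact:submodular-greedy}), and the near-disjointness tail bound, one shows by induction on $r$ that $\PP_{s_r}(I)=\PP_{s_{r-1}}(\Istar_r)$ with probability $1-\gamma$, i.e.\ the $\gamma$-preserving property. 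With the universe sizes and $\lambda$ chosen polynomially large one gets $\gamma$ small enough that $1/3-r\gamma\ge 1/4$ for all $r\le R$, so $\set{\dist_r}_{r=0}^{R}$ is a $\gamma$-hard recursive family; Theorem~\ref{thm:dgl} then gives an $\Omega(w_r/r^4)$ total — hence $\Omega(w_r/(r^4 p_r))$ per-player — communication lower bound for any $r$-round protocol that errs with probability at most $1/4$, that is, for any protocol approximating $\OPT$ within a factor better than the accumulated gap $\rho$.

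It then remains to solve the parameter recurrences: each level multiplies the number of sets by roughly $w_r$, the universe by roughly $g_r\cdot w_r/\lambda$, and the budget $k$ by a controlled factor, while the approximation gap evolves according to a recurrence that, after $r$ levels, leaves a factor of roughly $k^{1/2r}$; carrying this out with $r$ levels and a per-level dimension governed by $c$ yields instances with $m$ sets, $n=m^{1/\Theta(c)}$ elements, $k=\Theta(n^{2r/(2r+1)})$, per-machine communication $\Omega\big(\frac{k}{r^4}\cdot m^{c/((c+2)\cdot 4r)}\big)$, and gap $\Omega\big(\frac{1}{2c}\cdot\frac{k^{1/2r}}{\log k}\big)$, as claimed (the constraints $c\ge 4r$ and $r,c=o(\log k/\log\log k)$ come out of keeping $\gamma$ and the gap bookkeeping consistent). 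I expect the main obstacle to be the $\gamma$-preserving analysis together with the construction of the gadgets: one must build the randomly nearly disjoint family with intersection bounds strong enough that the $o(1)$ leakage between blocks at each of the $r$ levels cannot compound into a loss that would erode the $k^{\Omega(1/r)}$ gap, and one must make the labeling simultaneously preserving and oblivious — the subtle point being that it has to single out the $\jstar$-blocks globally while leaking nothing about $\jstar$ to any individual player. The remaining ingredients — existence of the family, local computability, the concentration estimates, and the recurrences — are comparatively routine.
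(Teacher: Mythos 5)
Your proposal follows essentially the same route as the paper: the paper instantiates the framework of Theorem~\ref{thm:dgl} with exactly the randomly-nearly-disjoint set system you describe (Definition~\ref{def:rnd}, built by Chernoff plus union bound), the same packing-onto-blocks function, the same labeling that sends each group's $\jstar$-block to a private region and all other blocks to a small shared region composed with a random permutation for obliviousness, and the same \Yes/\No preserving analysis before solving the parameter recurrences. One small caution on the "union bound over all relevant subsets" step: the paper is careful to state the RND property only for a collection of $N^{c\cdot r}$ (possibly correlated) random $N$-subsets rather than for all $N$-subsets of a block — a union bound over literally all subsets would overwhelm the Chernoff tail — so the bound must be taken over the fooling sets actually present in the instance.
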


The proof is based on an application of Theorem~\ref{thm:dgl}. In the following, let $c \geq 1$ be any integer (as in Theorem~\ref{thm:dist-lower}) 
and $N \geq 12c^2$ be a sufficiently large integer which we use to define the main parameters for our problem. To invoke Theorem~\ref{thm:dgl}, we need to instantiate the
recursive family of distributions $\set{\dist_r}_{r=0}^{c}$ in Section~\ref{sec:dgl} with appropriate sequences and gadgets for the maximum coverage problem.  
We first define sequences (for all $0 \leq r \leq c$): 
\begin{tbox}
\vspace{-20pt}
\begin{align*}
	k_r = p_r = (N^2-N)^{r}, ~~~ n_r = N^{2r+1}, ~~~ m_r = \paren{N^{c} \cdot (N^2-N)}^{r}, ~~~ w_r = N^c ~~~ g_r = (N^2 - N)
\end{align*}
\end{tbox}
Here, $m_r$, $n_r$, and $k_r$, respectively represent the number of sets and elements and the parameter $k$ in the maximum coverage problem in the instances of each distribution $\dist_r$ and together can identify the size of each instance (i.e., the parameter ${s_r}$ defined in Section~\ref{sec:dgl} for the distribution ${\dist_r}$). Moreover, ${p_r},w_r$ and $g_r$ represent
the number of players, the width parameter, and the number of groups in $\dist_r$, respectively (notice that $g_r = p_r/p_{r-1}$ as needed in distribution $\dist_r$).  

Using the sequences above, we define: 
\begin{tbox}
	$\coverage{N}{r}$: the problem of deciding whether the optimal $k_r$ cover of universe $[n_r]$ with 
	$m_r$ input sets is at least $\paren{k_r \cdot N}$ (\Yes case), or at most $\paren{k_r \cdot 2c \cdot \log{(N^{2r})}}$ (\No case).
\end{tbox}

Notice that there is a gap of roughly $N \approx k_r^{1/2r}$ (ignoring the lower order terms) between the value of the optimal solution in \Yes and \No cases of
$\coverage{N}{r}$. We prove a lower bound for deciding between \Yes and \No instances of $\coverage{N}{r}$, when the {input sets are partitioned} between the players, 
which implies an identical lower bound for algorithms that can approximate the value of optimal solution in maximum coverage to within a factor smaller than (roughly) $k_r^{1/2r}$. % (again, ignoring the lower order terms).  

Recall that to use the framework introduced in Section~\ref{sec:dgl}, one needs to define two problem-specific gadgets, i.e., a packing function, and a labeling family. In the following section, 
we design a crucial building block for our packing function. 

\paragraph{RND Set-Systems.} Our packing function is based on the following set-system. 

\begin{definition}\label{def:rnd}
	For integers $N,r,c \geq 1$, an $(N,r,c)$-\emph{randomly nearly disjoint (RND)} set-system over a universe $\XX$ of $N^{2r}$ elements, is a collection $\SS$ of subsets of $\XX$ satisfying the following properties: 

\begin{enumerate}[(i)]
	\item Each set $A \in \SS$ is of size $N^{2r-1}$. 
	\item \label{item:rnd-2} Fix any set $B \in \SS$ and suppose $\CC_{B}$ is a collection of $N^{c \cdot r}$ subsets of $\XX$ whereby each set in $\CC_{B}$ is chosen 
	by picking an arbitrary set $A \neq B$ in $\SS$, and then picking an $N$-subset uniformly at random from $A$ (we do \emph{not} assume independence between the sets in $\CC_B$). Then, 
	\begin{align*}
		\Pr\Paren{\exists~S \in \CC_B \text{~s.t.~} \card{S \cap B} \geq 2  c \cdot r \cdot \log{N}} = o(1/N^3).
	\end{align*}
	Intuitively, this means that any random $N$-subset of some set $A \in \SS$ is essentially disjoint from any other set $B \in \SS$ w.h.p. 
\end{enumerate}
\end{definition}

We prove an existence of large RND set-systems. 

\begin{lemma}\label{lem:rnd-size}
	For integers $1 \leq r \leq c$ and sufficiently large integer $N \geq c$, there exists an $(N,r,c)$-RND set-system $\SS$ of size $N^{c}$ over any universe $\XX$ of size $N^{2r}$.  
\end{lemma}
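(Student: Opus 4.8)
The natural approach is the probabilistic method: sample the sets $A_1, \ldots, A_{N^c}$ of $\SS$ independently and uniformly at random among the $N^{2r-1}$-subsets of the universe $\XX$ of size $N^{2r}$, and show that with positive probability the collection satisfies property~(\ref{item:rnd-2}) of Definition~\ref{def:rnd}. Property~(i) is automatic by construction. The key quantity to control is, for a fixed pair of distinct sets $A, B \in \SS$ and a uniformly random $N$-subset $S \subseteq A$, the probability that $\card{S \cap B} \geq 2cr\log N$. Since $B$ is itself a random $N^{2r-1}$-subset of $\XX$, a random element of $A$ lands in $B$ with probability roughly $N^{2r-1}/N^{2r} = 1/N$, so $\card{S \cap B}$ is dominated by a hypergeometric (essentially $\mathrm{Binomial}(N, 1/N)$) random variable with mean about $1$. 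A standard Chernoff/union bound on the tail of such a variable gives $\Pr(\card{S \cap B} \geq 2cr\log N) \le N^{-\Omega(cr\log N)}$, i.e., superpolynomially small in $N$.

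The main steps I would carry out, in order: (1) Fix $B \in \SS$ and condition on the realized sets; for a single set $A \ne B$ and a single random $N$-subset $S$ of $A$, bound $\Pr(\card{S\cap B}\ge 2cr\log N)$ using a concentration inequality for the hypergeometric distribution — here I would either prove $\card{A \cap B} \le 2N^{2r-2}$ (say) with very high probability over the choice of $A, B$ first, reducing the intersection ratio to $\le 2/N^2$, or just work directly with the unconditional bound. (2) Union-bound over the $N^{cr}$ sets in $\CC_B$: since each contributes at most $N^{-\Omega(cr\log N)}$ and $N^{cr}\cdot N^{-\Omega(cr\log N)} = o(1/N^3)$ for $N$ large, property~(\ref{item:rnd-2}) holds for this fixed $B$ with probability $1 - o(1/N^3)$ — actually we need it to hold \emph{surely} for the fixed $B$ over the random choice of $\CC_B$, so the statement is already an inequality about a fixed $\SS$; thus I must show that the random $\SS$ satisfies, for every $B \in \SS$, the bound $\Pr_{\CC_B}(\cdots) = o(1/N^3)$. (3) To get this, union-bound the "bad event for $B$" over all $N^c$ choices of $B$, and over the (finitely many, but that's the subtle point) possible adversarial strategies for choosing $\CC_B$; since the adversary picks sets $A \ne B$ from $\SS$ and then random $N$-subsets, and the tail bound in step~(1) holds \emph{uniformly} over the choice of $A$, the worst case is bounded by the same estimate, so no extra union bound over adversary strategies is needed beyond conditioning on $\SS$ being "good" in the sense that all pairwise intersections $\card{A_i \cap A_j}$ are $\le 2N^{2r-2}$.

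The hard part will be disentangling the two layers of randomness cleanly: the outer randomness that constructs $\SS$ (which we get to fix advantageously) versus the inner randomness of $\CC_B$ (against which property~(\ref{item:rnd-2}) must hold for the fixed $\SS$, and moreover for an \emph{adversarial} choice of which sets $A$ to subsample). The resolution is that the per-set tail bound from step~(1) must be made to hold for \emph{every} fixed $A \ne B$ once we condition on the good event $\{\forall i \ne j:\ \card{A_i \cap A_j} \le 2N^{2r-2}\}$, which itself holds with probability $1 - o(1)$ over the construction by a Chernoff bound on hypergeometric intersections plus a union bound over the $\binom{N^c}{2}$ pairs. Given that conditioning, $\card{S \cap B}$ for $S$ a random $N$-subset of $A$ is stochastically dominated by $\mathrm{Binomial}(N, 2/N^2)$, with mean $2/N$, and the Chernoff tail at level $2cr\log N$ beats $N^{cr}$ by a comfortable margin, so steps~(2)–(3) go through. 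I would also double-check the arithmetic that $N \ge c$ (or $N \ge 12c^2$ as used elsewhere) suffices for all the "$o(\cdot)$" estimates to be genuine, since $r \le c$ keeps all exponents polynomial in $c$ and hence under control relative to the $\log N$ factor in the exponent of the tail bound.
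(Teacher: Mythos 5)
Your plan is essentially the paper's proof for the case $r>1$: sample $N^{c}$ independent uniformly random $(N^{2r-1})$-subsets, show every pairwise intersection is at most $2N^{2r-2}$ by a Chernoff bound plus a union bound over the $\binom{N^{c}}{2}$ pairs, fix such a good collection $\SS$, and then observe that for this fixed $\SS$ the only remaining randomness is in $\CC_B$, so a Chernoff tail for $\card{S\cap B}$ at level $2cr\log N$ followed by a union bound over the $N^{cr}$ sets of $\CC_B$ gives property (ii) — with no further union bound over $B$ or over the adversary's choice of which $A$ to subsample, exactly as you resolve it. One harmless slip: conditioned on $\card{A\cap B}\leq 2N^{2r-2}$, a uniformly random $N$-subset $S$ of $A$ (which has size $N^{2r-1}$, not $N^{2r}$) hits $B$ with per-element probability at most $2/N$, so $\card{S\cap B}$ has mean at most $2$, not $2/N$; the tail bound at $2cr\log N$ still beats the $N^{cr}$ union bound comfortably.

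The genuine gap is the case $r=1$, which your step (1) cannot handle as written. There $2N^{2r-2}=2$, while the expected intersection of two independent random $N$-subsets of a universe of size $N^{2}$ is already $1$; the event $\card{A\cap B}\geq 2$ has constant probability, so it cannot survive a union bound over $N^{2c}$ pairs, and the claimed good event $\{\forall i\neq j:\ \card{A_i\cap A_j}\leq 2N^{2r-2}\}$ simply does not hold with high probability. The paper treats $r=1$ separately: it bounds the pairwise intersections directly by $2c\log N$ (the tail probability $\exp(-2c\log N)$ does survive the union bound over all pairs), and then notes that since each set has size exactly $N$, the only $N$-subset of a set $A\in\SS$ is $A$ itself, so $\CC_B\subseteq\SS$ and property (ii) follows deterministically from the pairwise bound, with no second layer of randomness left to control. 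Since the lemma (and the main theorem) is invoked for $r=1$, you need this separate branch.
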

\begin{proof}
We use a probabilistic argument to prove this lemma. 
	First, construct a collection $\SS'$ of $N^{c}$ subsets of $\XX$, each chosen independently and uniformly at random from all $(N^{2r-1})$-subsets
	of $\XX$. The proof is slightly different for the case when $r=1$ and for larger values of $r > 1$. In the following, we prove the result for the
	more involved case of $r > 1$ and then sketch the proof for the $r=1$ case. 
	
	We start with the following simple claim. 
	
	\begin{claim}\label{clm:rnd-clm1}
		For any two sets $A,B \in \SS'$, 
		\begin{align*}
			\Pr\paren{\card{A \cap B} \geq 2N^{2r-2}} \leq \exp\paren{-{2N^{2r-2}}}
		\end{align*}
	\end{claim}
	\begin{proof}
		Fix a set $A \in \SS'$ and pick $B$ uniformly at random from all $(N^{2r-1})$-subsets of $\XX$ (as is the construction in $\SS'$ since $A$ and $B$ are chosen independently). 
		For any element $i \in A$, we define an indicator random variable $X_i \in \set{0,1}$ which is $1$ iff $i \in B$ as well. Moreover, we define $X:= \sum_{i \in  A} X_i$ to denote size of $\card{A \cap B}$. 
	
		By the choice of $B$, we have $\Ex\bracket{X} = \sum_{i \in A} \Ex\bracket{X_i} = \sum_{i \in A} \frac{1}{N} = N^{2r-2}$. Moreover, it is straightforward to verify that 
		the random variables $X_i$ are \emph{negatively correlated}; as such, we can apply Chernoff bound to obtain that, 
		\begin{align*}
			\Pr\paren{\card{A \cap B} \geq 2N^{2r-2}} = \Pr\paren{X \geq 2\Ex\bracket{X}} \leq \exp\paren{-2\Ex\bracket{X}} = \exp\paren{-2N^{2r-2}}
		\end{align*}
		finalizing the proof. 
	\end{proof}
	
	By Claim~\ref{clm:rnd-clm1} and taking a union bound over all ${{N^{c}}\choose{2}}$ pairs of subsets $A,B \in \SS'$, the probability that there exists two subsets $A,B \in \SS'$ with $\card{A \cap B} \geq 2N^{r-2}$ is at most, 
	\begin{align*}
		{{N^{c}}\choose{2}} \cdot \exp\paren{-2N^{2r-2}} \leq \exp\paren{-2N^{2r-2}+2c \cdot \log{N}} < 1
	\end{align*}
	as $r \geq 2$ and $c \leq N$. This in particular implies that there exists a collection $\SS$ of $N^{c}$ many $(N^{2r-1}$)-subsets of $\XX$ such that for any two sets $A,B \in \SS$, $\card{A \cap B} \leq 2N^{2r-2}$. 
	We fix this $\SS$ as our target collection and prove that it satisfies Property~(\ref{item:rnd-2}) of Definition~\ref{def:rnd} as well. 
	
	Fix any $B \in \SS$ and define $\CC_B$ as in Definition~\ref{def:rnd}. We prove that, 
	\begin{claim}\label{clm:rnd-clm2}
		For any set $S \in \CC_B$, 
		\begin{align*}
			\Pr\paren{\card{S \cap B} \geq 2c \cdot r \cdot \log{N}} \leq \exp\paren{-c \cdot r \cdot \log{N}} 
		\end{align*}
	\end{claim}
	\begin{proof}
		The proof is similar to Claim~\ref{clm:rnd-clm1}. Suppose $S$ is chosen from some arbitrary set $A \in \SS \setminus \set{B}$.
		Note that $S \cap B \subseteq A \cap B$. 
		For any element $i \in A \cap B$, define a random variable $X_i \in \set{0,1}$ which is $1$ iff $i \in S$ as well. Define $X:= \sum_{i \in A \cap B}X_i$ which denotes
		the size of $S \cap B$. We have, 
		\[
		\Ex\bracket{X} = \sum_{i \in A \cap B} \frac{\card{S}}{\card{A}} = \card{A \cap B} \cdot \frac{N}{N^{2r-1}} \leq 2
		\]
		as $\card{A \cap B} \leq 2N^{2r-2}$ by the property of the collection $\SS$. Again, using the fact that $X_i$ variables are negatively correlated, we can apply Chernoff bound 
		and obtain that, 
		\begin{align*}
			\Pr\paren{X \geq 2c \cdot r \cdot \log{N}} = \Pr\paren{X \geq c \cdot r\cdot \log{N} \cdot \Ex\bracket{X}} \leq \exp\paren{-c\cdot r \cdot \log{N}}
		\end{align*}
		finalizing the proof. 
	\end{proof}
	
	To obtain the final result for $r > 1$ case, we can use Claim~\ref{clm:rnd-clm2} and take a union bound on the $N^{c \cdot r}$ possible choices for the 
	set $S$ in $\CC_B$ and obtain that, 
	\begin{align*}
			\Pr\Paren{\exists~S \in \CC_B \text{~s.t.~} \card{S \cap B} \geq 2 \cdot c \cdot r \cdot \log{N}} \leq N^{c \cdot r} \cdot \exp\paren{-c \cdot r \cdot \log{N}} = o(1/N^3)
	\end{align*}
	for sufficiently large $N$. 
	
	To obtain the result when $r = 1$, we can show, exactly as in Claim~\ref{clm:rnd-clm1}, that for any two sets $A,B \in \SS'$, 
		\begin{align*}
			\Pr\paren{\card{A \cap B} \geq 2c \cdot \log{N}} \leq \exp\paren{-2c \cdot \log{N}}
		\end{align*}
	and then take a union bound over all $N^{2c}$ possible choices for $A,B$ and hence argue that there should exists at least
	one collection $\SS$ such that $\card{A \cap B} < 2c \cdot \log{N}$ for any two $A,B \in \SS$. Now notice that when $r=1$, as size of each set $\SS$ is exactly $N$, 
	the collection $C_B \subseteq \SS$ and hence the previous condition on $\SS$ already satisfies the Property~(\ref{item:rnd-2}) in Definition~\ref{def:rnd}. 
\end{proof}

\subsection{Proof of Theorem~\ref{thm:dist-lower}}\label{sec:dist-lower}

To prove Theorem~\ref{thm:dist-lower} using our framework in Section~\ref{sec:dgl}, we 
parameterize the recursive family of distributions $\set{\dist_r}^c_{r=0}$ for the coverage problem, i.e., $\coverage{N,r}$, 
with the aforementioned sequences plus the packing and labeling functions which we define below. 

\textbox{Packing function $\packing_r$: \textnormal{Mapping instances $I^{i}_1,\ldots,I^{i}_{w_r}$ each over $n_{r-1}=N^{2r-1}$ elements and $m_{r-1}$ sets for any group $i \in [g_r]$ to a single
instance $I^i$ on $N^{2r}$ elements and $w_r \cdot m_{r-1}$ sets.}}{
	\begin{enumerate}
		\item Let $\AA = \set{A_1,\ldots,A_{w_r}}$ be an $(N,r,c)$-RND system with $w_r = N^{c}$ sets over some universe $\XX_i$ of $N^{2r}$ elements (guaranteed to exist by Lemma~\ref{lem:rnd-size} since $c < N$).
		By definition of $\AA$, for any set $A_j \in \AA$, $\card{A_j} = N^{2r-1} = n_{r-1}$. 
		
		\item Return the instance $I$ over the universe $\XX_i$ with the collection of all sets in $I^{i}_1,\ldots,I^{i}_{w_r}$ after mapping the elements in $I^{i}_j$ to $A_j$ arbitrarily. 
	\end{enumerate}
}

We now define the labeling family $\Labeling_r$ as a function of the index $\jstar \in [w_r]$ of special instances. %, i.e., the index of the special instances in the distribution $\dist_r$. 

\textbox{Labeling family $\Labeling_r$: \textnormal{Mapping instances $I^1,\ldots,I^{g_r}$ over $N^{2r}$ elements to a single instance $I$ on $n_{r}=N^{2r+1}$ elements and $m_r$ sets.}}{
	\begin{enumerate}
		\item Let $\jstar \in [w_r]$ be the index of the special instance in the distribution $\dist_r$. For each permutation $\pi$ of $[N^{2r+1}]$ we have a unique function $\labeling(\jstar,\pi)$ in the family. 
		
		\item For any instance $I^i$ for $i \in [g_r]$, map the elements in $\XX_i \setminus A_{\jstar}$ to $\pi(1,\ldots,N^{2r}-N^{2r-1})$ 
		and the elements in $A_{\jstar}$ to $\pi(N^{2r} + (g_r-1)\cdot N^{2r-1}) \ldots \pi(N^{2r} + g_r \cdot N^{2r-1}-1)$.  
		
		\item Return the instance $I$ over the universe $[N^{2r+1}]$ which consists of the collection of all sets in $I^{1},\ldots,I^{g_r}$ after the mapping above. 
	\end{enumerate}
}

Finally, we define the base case distribution $\dist_0$ of the recursive family $\set{\dist_r}_{r=0}^{c}$. 
By definition of our sequences, this distribution is over $p_0 = 1$ player, $n_0 = N$ elements, and $m_0 = 1$ set.
\textbox{Distribution $\dist_0$: \textnormal{The base case of the recursive family of distributions $\set{\dist_r}_{r=0}^{c}.$}}{

\begin{enumerate}
	\item W.p. $1/2$, the player has a single set of size $N$ covering the universe (the \Yes case). 
	\item W.p. $1/2$, the player has a single set $\set{\emptyset}$, i.e., a set that covers no elements (the \No case). 
\end{enumerate}
}

To invoke Theorem~\ref{thm:dgl}, we prove that this family is a $\gamma$-hard recursive family for the parameter $\gamma = o(r/N)$. 
The sequences clearly satisfy the required monotonicity properties. It is also straightforward to verify that $\packing_r$ and
functions $\labeling \in \Labeling_r$ are \emph{locally computable} (Definition~\ref{def:locally-computable}): both functions 
are specifying a mapping of elements to the new instance and hence each player can compute its final input by simply mapping the original input sets according to $\packing_r$ and $\labeling$ to the
new universe. In other words, the local mapping of each player $q \in P_i$ only specifies which element in the instance $I$ corresponds to which element in $I^i_j(q)$ for $j \in [w_r]$. 
It thus remains to prove the \emph{preserving} and \emph{obliviousness} property of the packing and labeling functions. 

We start by showing that the labeling family $\Labeling_r$ is oblivious. The first property of Definition~\ref{def:oblivious} is immediate to see as $\Labeling_r$ is only a function of $\jstar$ and $\sigma_r$. 
For the second property, consider any group $P_i$ and instance $I^i$; the labeling function never maps two elements belonging to a \emph{single} instance $I^i$ to the same element
in the final instance (there are however overlaps between the elements across different groups). Moreover, picking a uniformly at random labeling
function $\labeling$ from $\Labeling_r$ (as is done is $\dist_r$) results in mapping the elements in $I^i$ according to a \emph{random permutation}; as such, the set of elements 
in instance $I^i$ is mapped to a uniformly at random chosen subset of the elements in $I$, \emph{independent} of the choice of $\jstar$. As the local mapping $\labeling_q$ of each player
$q \in P_i$ is only a function of the set of elements to which elements in $I^i$ are mapped to, $\labeling_q$ is also independent of $\jstar$, proving that $\Labeling_r$ is indeed oblivious. 

The rest of this section is devoted to the proof of the preserving property of the packing and labeling functions defined for maximum coverage. 
We first make some observations about the instances created in $\dist_r$. Recall that the special instances in the distribution are $I^{1}_{\jstar},\ldots,I^{g_r}_{\jstar}$. 
After applying the packing function, each instance $I^{i}_{\jstar}$ is supported on the set of elements $A_{\jstar}$. After additionally applying the labeling function, $A_{\jstar}$ is mapped
to a unique set of elements in $I$ (according to the underlying permutation $\pi$ in $\labeling$); as a result, 

\begin{observation}\label{obs:special-unique}
	The elements in the special instances $I^{1}_{\jstar},\ldots,I^{g_r}_{\jstar}$ are mapped to \emph{disjoint} set of elements in the final instance. 
\end{observation}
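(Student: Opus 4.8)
The plan is to unwind the two constructions---the packing function $\packing_r$ followed by a labeling function $\labeling \in \Labeling_r$---and simply track where the elements of each special instance end up. First I would record the support of a special sub-instance \emph{after packing}: by the definition of $\packing_r$, the elements of $I^i_j$ are identified with the set $A_j \in \AA$ for each $j \in [w_r]$, so in particular $I^i_{\jstar}$ (inside group $P_i$) is supported exactly on $A_{\jstar} \subseteq \XX_i$, a set of $n_{r-1} = N^{2r-1}$ elements, and none of its elements lies in $\XX_i \setminus A_{\jstar}$. This holds identically for every group $i \in [g_r]$, so the only thing that can make the special instances collide in the final instance is the labeling step.

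Next I would invoke the definition of $\labeling = \labeling(\jstar,\pi)$. Restricted to instance $I^i$, it sends $\XX_i \setminus A_{\jstar}$ into the ``shared'' coordinates $\pi(1),\ldots,\pi(N^{2r}-N^{2r-1})$ and sends $A_{\jstar}$ into the $i$-th ``private'' block $\set{N^{2r}+(i-1)N^{2r-1},\ldots,N^{2r}+iN^{2r-1}-1}$, after which everything is relabeled by the fixed permutation $\pi$ of $[N^{2r+1}]$. For distinct $i \neq i' \in [g_r]$ these target blocks are disjoint: they are consecutive, non-overlapping integer intervals of length $N^{2r-1}$, and in fact they exactly tile the range $\set{N^{2r},\ldots,N^{2r+1}-1}$ since $g_r \cdot N^{2r-1} = (N^2-N)\,N^{2r-1} = N^{2r+1}-N^{2r}$, so they all fit inside the universe $[N^{2r+1}]$. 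Because $\pi$ is a bijection, the $\pi$-images of these disjoint blocks remain disjoint. Combining the two steps---in the final instance $I$, the element set of $I^i_{\jstar}$ is precisely the $\labeling$-image of $A_{\jstar}$ inside $I^i$, which lands in the $i$-th private block---yields that the element sets of $I^1_{\jstar},\ldots,I^{g_r}_{\jstar}$ are pairwise disjoint, as claimed.

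I do not anticipate a genuine obstacle here: the observation is a bookkeeping consequence of how $\packing_r$ and $\Labeling_r$ were engineered, and the only points that call for (minor) care are confirming that after packing the support of $I^i_{\jstar}$ is confined to $A_{\jstar}$ (so no private element ``leaks'' into the shared region), and checking that the index arithmetic in the definition of $\labeling$ produces $g_r$ disjoint length-$N^{2r-1}$ blocks that together fit inside $[N^{2r+1}]$---both of which are routine.
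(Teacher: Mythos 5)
Your argument is correct and is essentially the paper's own (much terser) justification: after packing, each special instance $I^i_{\jstar}$ is supported on $A_{\jstar}$, and the labeling function sends $A_{\jstar}$ in group $i$ to a group-specific block of $[N^{2r+1}]$ before applying the bijection $\pi$, so the images are pairwise disjoint. You also correctly read the block indices in the definition of $\Labeling_r$ as depending on the group index $i$ (the displayed formula in the paper writes $(g_r-1)$ and $g_r$ where $(i-1)$ and $i$ are clearly intended), and your check that the $g_r$ blocks of length $N^{2r-1}$ tile $\set{N^{2r},\ldots,N^{2r+1}-1}$ is the only bookkeeping the paper leaves implicit.
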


The input to each player $q \in P_i$ in an instance of $\dist_{r}$ is created by mapping the sets in instances $I^i_1,\ldots,I^i_{w_r}$ (which are all sampled from distributions $\dist_{r-1}$ or $\distp_{r-1}$)
to the final instance $I$. As the packing and labeling functions, by construction, never map two elements belonging to the same instance $I^i_j$ to the same element in the final instance, the size of each 
set in the input to player $q$ is equal across any two distributions $\dist_r$ and $\dist_{r'}$ for $r \neq r'$, and thus is $N$ by definition of $\dist_0$ (we ignore empty sets in $\dist_0$ as one can consider them
as not giving any set to the player instead; these sets are only added to simplify that math). Moreover, as argued earlier, 
the elements are being mapped to the final instance according to a random permutation and hence, 
\begin{observation}\label{obs:random-set} 
	For any group $P_i$, any player $q \in P_i$, the distribution of any \emph{single} input set to player $q$ in the final instance $I \sim \dist_r$ is \emph{uniform} over
	all $N$-subsets of the universe. This also holds for an instance $I \sim \distp_r$ as marginal distribution of a player input is identical. 	
\end{observation}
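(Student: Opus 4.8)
The plan is to prove, by induction on $r$, the following statement, which is slightly stronger than the observation and more convenient for the recursion: in both $\dist_r$ and $\distp_r$, the marginal distribution of any single non-empty input set of any player is uniform over the $N$-element subsets of $[n_r]$. The only external ingredient I will need is an elementary fact: if $T$ is a uniformly random $N$-element subset of a fixed finite set $F$ and $\pi$ is a uniformly random permutation of a set $G\supseteq F$ chosen independently of $T$, then $\pi(T)$ is uniformly distributed over the $N$-element subsets of $G$. I would prove this by noting that, for any fixed $N$-subsets $T'\subseteq F$ and $U\subseteq G$, a uniformly random permutation of $G$ maps $T'$ onto $U$ with probability $N!\,(\card{G}-N)!\,/\,\card{G}! = 1/\binom{\card{G}}{N}$, which does not depend on $T'$, and then averaging over the choice of $T'$.

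For the base case $r=0$, every non-empty input set in $\dist_0$ equals $[N]=[n_0]$, the unique $N$-subset of $[n_0]$, so the statement is immediate, and it holds for $\distp_0=\dist_0$ since there is a single player. For the inductive step, fix $r\geq 1$, a player $q$ in a group $P_i$, and one of its non-empty input sets $S$ in the final instance $I\sim\dist_r$. By construction $S$ is the image, under $\packing_r$ followed by the sampled labeling function $\labeling=\labeling(\jstar,\pi)$, of a non-empty input set $S_0$ of $q$ in one of the sub-instances $I^i_j$, where $j\in[w_r]$ is the index determined by $S$. Whether or not $j=\jstar$, the instance $I^i_j$ is drawn either from $\dist_{r-1}$ or from $\distp_{r-1}$, and these have the same single-set marginal for player $q$; moreover $S_0$ is sampled independently of both $\jstar$ and $\pi$. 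Hence by the induction hypothesis $S_0$ is uniform over the $N$-subsets of $[n_{r-1}]$. Now $\packing_r$ maps the element universe $[n_{r-1}]$ of $I^i_j$ bijectively onto the $j$-th set $A_j$ of the underlying RND system inside $\XX_i$ (both of size $N^{2r-1}$), so the image $S_1$ of $S_0$ is uniform over the $N$-subsets of $A_j$. Next, $\labeling(\jstar,\pi)$ carries $\XX_i$ into $[n_r]$ by first applying a fixed injection $\iota_0$ (placing $A_{\jstar}$, respectively $\XX_i\setminus A_{\jstar}$, into a block of indices that depends only on $\jstar$ and $i$, not on $\pi$) and then applying $\pi$; since $\labeling$ is drawn uniformly from $\Labeling_r$ with $\jstar$ already fixed, $\pi$ is a uniformly random permutation of $[n_r]$ that is independent of $S_0$, hence of $S_1$. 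Writing $\iota$ for the restriction of $\iota_0$ to $A_j$ (a fixed injection $A_j\hookrightarrow[n_r]$), the set $\iota(S_1)$ is uniform over the $N$-subsets of the fixed set $\iota(A_j)$, and the elementary fact with $F=\iota(A_j)$ and $G=[n_r]$ gives that $S=\pi(\iota(S_1))$ is uniform over the $N$-subsets of $[n_r]$, completing the induction for $\dist_r$.

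Finally, the claim for $\distp_r$ follows from the one for $\dist_r$: by definition $\distp_r$ is the product of the per-player marginals of $\dist_r$, so a single set of a single player has identical marginal distribution under $\distp_r$ and under $\dist_r$. The only step that calls for any care is checking that, after the conditioning implicit in passing from level $r-1$ to level $r$ — in particular after fixing $\jstar$ — the permutation $\pi$ inside $\labeling$ is still uniform on $[n_r]$ and independent of the traced set $S_0$; once that is in place, the elementary fact does all of the work, which is presumably why this is stated as an observation rather than a lemma.
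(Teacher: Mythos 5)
Your proof is correct and follows essentially the same route as the paper: the paper's justification (given in the discussion immediately preceding the observation) is precisely that each set retains size $N$ under the injective packing and labeling maps, traced recursively back to $\dist_0$, and that the uniformly random permutation $\pi$ inside the sampled labeling function then carries any fixed $N$-subset of positions to a uniformly random $N$-subset of the universe, independently of $\jstar$. Your induction simply packages these two facts formally; as you implicitly note via your "elementary fact," the uniformity of $S_0$ supplied by the induction hypothesis is not even needed at each step, since the top-level random permutation alone already makes the image uniform once the set is known to have size $N$.
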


We now prove the preserving property in the following two lemmas. 

\begin{lemma}\label{lem:dist-lower-alpha1}
	For any instance $I \sim \dist_r$; if $\Istar_r$ is a \Yes instance, then $I$ is also a \Yes instance. 
\end{lemma}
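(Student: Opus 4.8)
The plan is to show that if the embedded special instance $\Istar_r$ is a \Yes instance of $\coverage{N}{r-1}$, then the final instance $I \sim \dist_r$ has optimal coverage at least $k_r \cdot N$, i.e., $I$ is a \Yes instance of $\coverage{N}{r}$. First I would unwind the definition of a \Yes instance at level $r-1$: by the definition of $\coverage{N}{r-1}$, there is a choice of $k_{r-1}$ sets in $\Istar_r$ whose union covers at least $k_{r-1} \cdot N$ of the $n_{r-1} = N^{2r-1}$ elements. Since $\Istar_r$ is copied as $I^i_{\jstar}$ into every group $P_i$ for $i \in [g_r]$, each such copy also has a $k_{r-1}$-cover of value at least $k_{r-1} \cdot N$ living on the element set $A_{\jstar}$ (after the packing map), and then on a disjoint block of $n_{r-1}$ elements of the final universe $[n_r]$ (after the labeling map).

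The key combinatorial point is Observation~\ref{obs:special-unique}: the special instances $I^1_{\jstar},\ldots,I^{g_r}_{\jstar}$ are mapped to \emph{pairwise disjoint} element sets in $I$. Therefore I would take the union, over all $g_r$ groups, of the $k_{r-1}$ good sets coming from each $I^i_{\jstar}$. This is a collection of $g_r \cdot k_{r-1} = (N^2-N) \cdot k_{r-1} = k_r$ sets (using $g_r = p_r/p_{r-1}$ and the defining relation $k_r = p_r$, $k_{r-1} = p_{r-1}$, so $k_r/k_{r-1} = g_r$), and because the blocks are disjoint its coverage is the sum of the per-group coverages, hence at least $g_r \cdot k_{r-1} \cdot N = k_r \cdot N$. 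That exhibits a $k_r$-cover of value $\geq k_r \cdot N$, so $I$ is a \Yes instance. A clean way to organize the induction is to note the base case $r=0$ holds by the definition of $\dist_0$ (a \Yes instance is a single set of size $N = n_0$ covering the whole universe, and $k_0 = 1$), and then argue the inductive step exactly as above; alternatively, since Lemma~\ref{lem:dist-lower-alpha1} is only asserted for a generic $r$ with $\Istar_r \sim \dist_{r-1}$, one can simply invoke that \Yes instances of $\dist_{r-1}$ have coverage $\geq k_{r-1} \cdot N$ by definition and carry out the single disjoint-union step.

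The step I expect to need the most care is the bookkeeping that the sets selected across different groups are genuinely distinct sets of the instance $I$ (not merely that their covered elements are disjoint), so that we are choosing exactly $k_r$ sets and not fewer — but this is immediate since the sets of $I$ are the disjoint union, over groups $i$ and over $j \in [w_r]$, of the sets of $I^i_j$, so picking $k_{r-1}$ sets from each distinct copy $I^i_{\jstar}$ yields $g_r \cdot k_{r-1} = k_r$ distinct sets. The only genuine subtlety is confirming $g_r \cdot k_{r-1} = k_r$ from the chosen sequences, which is exactly $k_r = (N^2-N)^r = (N^2-N)\cdot(N^2-N)^{r-1} = g_r \cdot k_{r-1}$; once this arithmetic is in place the lemma follows, and no probabilistic argument (no appeal to the RND property) is needed for this direction — that is reserved for the \No-case companion lemma.
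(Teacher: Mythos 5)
Your proposal is correct and follows essentially the same argument as the paper: both use the fact that every copy $I^i_{\jstar}$ of the \Yes special instance admits a $k_{r-1}$-cover of value at least $k_{r-1}\cdot N$, invoke Observation~\ref{obs:special-unique} to get disjointness of the supports across groups, and combine these via the identity $k_r = g_r \cdot k_{r-1}$ to exhibit a $k_r$-cover of value at least $k_r \cdot N$. Your extra remarks (the induction framing and the distinctness of the chosen sets across groups) are harmless elaborations of the same proof.
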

\begin{proof}
	
	Recall that the distribution of the special instance $\Istar_r$ is $\dist_{r-1}$. Since $\Istar_r$ is a \Yes instance, all $I^i_{\jstar}$ for $i \in [g_r]$ are also \Yes instances. 
	By definition of $\coverage{N}{r-1}$ and choice of $k_{r-1}$, this means that $\opt(I^i_{\jstar}) \geq k_{r-1} \cdot N$. 
	Moreover, by Observation~\ref{obs:special-unique}, all copies of the 
	special instance $\Istar_r$, i.e., $I^{1}_{\jstar},\ldots,I^{g_r}_{\jstar}$ are supported on disjoint set of elements in $I$. As $k_r = k_{r-1} \cdot g_r$, 
	we can pick the optimal solution from each $I^{i}_{\jstar}$ for $i \in [g_r]$ and cover at least $k_r \cdot N$ elements. By definition of 
	$\coverage{N}{r}$, this implies that $I$ is also a \Yes instance. 
\end{proof}

We now analyze the case when $\Istar_r$ is a \No instance which requires a more involved analysis.

\begin{lemma}\label{lem:dist-lower-alpha2}
For any instance $I \sim \dist_r$; if $\Istar_r$ is a \No instance, then w.p. at least $1-1/N$, $I$ is also a \No instance. 
\end{lemma}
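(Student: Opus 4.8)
The plan is to prove a \emph{strengthening} of the lemma, by induction on the level $\rho$: for every $\rho\le r$ and every $I\sim\dist_\rho$ whose embedded special instance $\Istar_\rho$ is a \No instance, with probability at least $1-\delta_\rho$ (over the internal randomness of $\Istar_\rho$, of the fooling instances, and of the labeling) one has, for \emph{every} $a\ge 1$,
\[
\mathrm{opt}_a(I)\ \le\ a\cdot 2c\log(N^{2\rho})\ +\ \mu_\rho ,
\]
where $\mathrm{opt}_a(\cdot)$ denotes the maximum number of elements coverable by $a$ sets, $\delta_\rho=\rho\cdot o(1/N)$, and $\mu_\rho$ satisfies $\mu_0=0$ and $\mu_\rho=|R_0^{(\rho)}|+g_\rho\mu_{\rho-1}$, which unrolls to $\mu_\rho=\sum_{\ell=1}^{\rho}(N^2-N)^{\rho-\ell}N^{2\ell}<\rho N^{2\rho}=o(k_\rho\cdot 2c\log(N^{2\rho}))$. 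Lemma~\ref{lem:dist-lower-alpha2} is then the case $\rho=r$, $a=k_r$: the additive $\mu_r$ is a lower-order fraction of the \No threshold $k_r\cdot 2c\log(N^{2r})$ and is absorbed by the slack in that threshold (the gap to the \Yes threshold being $\Omega(N/\polylog k)$), and $\delta_r\le 1/N$ since $r\le c$ and $N\ge 12c^2$.

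The inductive step rests on the geometry created by the packing and labeling gadgets. The universe $[n_\rho]$ splits into a \emph{shared region} $R_0$ of size $N^{2\rho}-N^{2\rho-1}\le k_\rho(1+o(1))$, into which \emph{every} non-special element of \emph{every} group is mapped, and $g_\rho$ pairwise disjoint \emph{special regions} $R_1,\dots,R_{g_\rho}$, each of size $n_{\rho-1}$ (disjointness is Observation~\ref{obs:special-unique}), where $R_i$ carries a verbatim copy of $\Istar_\rho$. That copy is again a \No instance of $\coverage{N}{\rho-1}$, and its own special sub-instance is \No as well (contrapositive of Lemma~\ref{lem:dist-lower-alpha1}), so the inductive hypothesis applies to it. Every set of $I$ is either a \emph{special set} of some group $i$, contained in $R_i$, or a \emph{fooling set} of some group $i$; by Observation~\ref{obs:random-set} the latter is a uniformly random $N$-subset of one of the sets $A_j$ ($j\ne\jstar$) of the RND system used for group $i$, hence is contained in $R_0\cup R_i$, and by property~(\ref{item:rnd-2}) of Definition~\ref{def:rnd} — via a union bound over the $\poly(N)\cdot N^{c\rho}$ fooling sets per group (comfortably within the tail estimate in the proof of Lemma~\ref{lem:rnd-size}) and over the $g_\rho<N^2$ groups — with probability $1-o(1/N)$ it meets $R_i$ in fewer than $2c\rho\log N$ elements. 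This last event is the only source of the level-$\rho$ failure probability.

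Given these facts, fix an optimal $a$-cover of $I$ and let it use $a_i$ special and $b_i$ fooling sets of group $i$, so $\sum_i(a_i+b_i)=a$. Since only special sets of $i$ and fooling sets of $i$ meet $R_i$ — the former inside $R_i$ entirely, the latter in $<2c\rho\log N$ elements — while only fooling sets meet $R_0$,
\[
\mathrm{opt}_a(I)\ \le\ |R_0|\ +\ \sum_i\Big(\mathrm{opt}_{a_i}(\Istar_\rho)+b_i\cdot 2c\rho\log N\Big).
\]
Bounding each $\mathrm{opt}_{a_i}(\Istar_\rho)$ by the inductive hypothesis ($\le a_i\cdot 2c\log(N^{2(\rho-1)})+\mu_{\rho-1}$ for nonzero $a_i$), using $\sum a_i\le a$, at most $\min(g_\rho,a)$ nonzero groups, and the numerical fact $2c\rho\log N\le 2c\log(N^{2(\rho-1)})$ for $\rho\ge 2$ (so trading an $a_i$ for a $b_i$ never helps), the right-hand side is at most $|R_0|+a\cdot 2c\log(N^{2(\rho-1)})+\min(g_\rho,a)\mu_{\rho-1}\le a\cdot 2c\log(N^{2\rho})+\mu_\rho$, which is the claim. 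The base case $\rho=1$ is direct: $\Istar_1$ is the empty set, so all chosen sets are fooling, live in $R_0^{(1)}$, and leak $<2c\log N$ each, giving $\mathrm{opt}_a(I)\le|R_0^{(1)}|+a\cdot 2c\log N\le a\cdot 4c\log N+\mu_1$ with $\mu_1=|R_0^{(1)}|=N^2-N$.

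I expect the main obstacle to be exactly the last inequality, and the reason the inductive statement must control $\mathrm{opt}_a(\Istar_\rho)$ for \emph{all} $a$ rather than only for $a=k_{\rho-1}$. Armed with just the bare \No property $\mathrm{opt}_{k_{\rho-1}}(\Istar_\rho)\le k_{\rho-1}\cdot 2c\log(N^{2(\rho-1)})$, subadditivity of the coverage function gives merely $\mathrm{opt}_{a_i}(\Istar_\rho)\le\lceil a_i/k_{\rho-1}\rceil\cdot k_{\rho-1}\cdot 2c\log(N^{2(\rho-1)})$, and an adversary spreading its $k_\rho$ sets as $\approx k_{\rho-1}+1$ per group forces $\sum_i\lceil a_i/k_{\rho-1}\rceil\approx 2g_\rho$ — a factor-$2$ loss per level that compounds into a $2^r$ blow-up of the threshold; the crude cap $\mathrm{opt}_{a_i}(\Istar_\rho)\le n_{\rho-1}$ does not rescue this (let $\Theta(g_\rho)$ groups each receive $\approx 2k_{\rho-1}$ sets). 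Controlling $\mathrm{opt}_a(\Istar_\rho)$ \emph{linearly} in $a$ across the whole range removes the spurious ceiling; the price is the additive $\mu_\rho$, and the remaining bookkeeping is to verify that the recursion $\mu_\rho=|R_0^{(\rho)}|+g_\rho\mu_{\rho-1}$ keeps $\mu_\rho$ at $O(\rho k_\rho)$ rather than exploding — it simply accumulates the shared-region sizes at all scales $\ell\le\rho$, each scaled by $k_\rho/k_\ell$ — so that it remains a lower-order term at the top.
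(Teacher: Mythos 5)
Your proof is correct, but it takes a genuinely different route from the paper's. The paper argues in a single level: it splits the universe into $\Ustar$ (the disjoint special regions) and $U\setminus\Ustar$, charges the fooling sets' leakage into $\Ustar$ to the RND property (your Claim on fooling sets is identical to its Claim~\ref{clm:dist-lower-alpha2-fooling}), and bounds the special sets' contribution by using the \No property of $\Istar_r$ as a \emph{black box}: from $\opt_{k_{r-1}}(\Istar_r)\le k_{r-1}\cdot 2c\log(N^{2r-2})$ it derives, via an averaging argument, that $\ell_i$ special sets of group $i$ cover at most $\max(\ell_i,k_{r-1})\cdot 2c\log(N^{2r-2})$ elements, and sums over groups. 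You instead run a full induction on the level with the strengthened invariant $\opt_a(I)\le a\cdot 2c\log(N^{2\rho})+\mu_\rho$ for \emph{all} $a$. Your diagnosis of why the strengthening is needed is exactly right, and it is worth stressing: summing the black-box bound over groups yields an additive offset of $g_rk_{r-1}\cdot 2c\log(N^{2r-2})=k_r\cdot 2c\log(N^{2r-2})$, so the special sets alone can be charged up to $\approx 2k_r(2r-2)\cdot 2c\log N$, which exceeds the target $k_r\cdot 2r\cdot 2c\log N$ once $r\ge 3$; the per-level slack in the threshold is only a factor $r/(r-1)$ and cannot absorb a constant-factor loss. (The paper's Claim~\ref{clm:dist-lower-alpha2-special} in fact drops the $2c\log(N^{2r-2})$ multiplier from the $g_rk_{r-1}$ term in its displayed computation, which is precisely the slip your argument is designed to avoid; your recursive $\mu_\rho$ shows the true additive offset is $O(\rho k_\rho)$ rather than $k_\rho\cdot 2c\log(N^{2\rho-2})$, so the lemma itself is safe.)

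Two small points to tighten. First, to apply the inductive hypothesis to $\Istar_\rho$ you need \emph{its} embedded special instance to be \No; the contrapositive of Lemma~\ref{lem:dist-lower-alpha1} only gives ``not \Yes,'' which is weaker for a promise problem. The clean fix is to condition the whole induction on the base instance $\Istar_0$ being \No, under which every special instance up the chain is \No with the accumulated probability $1-\delta_\rho$. Second, the additive $\mu_r$ is not absorbed by ``the gap to the \Yes threshold'' (the \No case is a hard cutoff at $k_r\cdot 2c\log(N^{2r})$); it is absorbed because your inductive step actually produces the coefficient $\max\set{2c\log(N^{2(\rho-1)}),\,2c\rho\log N}=2c\log(N^{2\rho-2})$, leaving slack $a\cdot 2c\log(N^2)=4ac\log N\ge\mu_\rho$ at $a=k_\rho$ against the target coefficient $2c\log(N^{2\rho})$. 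With those two adjustments the argument is complete.
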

\begin{proof}

Let $U$ be the universe of elements in $I$ and $\Ustar \subseteq U$ be the set of elements to which the elements in special instances $I^{1}_{\jstar},\ldots,I^{g_r}_{\jstar}$ are 
mapped to (these are all elements in $U$ except for the first $N^{2r}$ elements according to the permutation $\pi$ in the labeling function $\phi$).  
In the following, we bound the contribution of each set in players inputs in covering $\Ustar$ and then use the fact that $\card{U \setminus \Ustar}$ is rather small to finalize the proof.

	For any group $P_i$ for $i \in [g_r]$, let $U_i$ be the set of all elements across instances in which the players in $P_i$ are participating in. 
	Moreover, define $\Ustar_i := \Ustar \cap U_i$; notice that $\Ustar_i$ is precisely the set of elements in the special instance $I^{i}_{\jstar}$.
	We first bound the contribution of special instances.  
	
	\begin{claim}\label{clm:dist-lower-alpha2-special}
		If $\Istar_r$ is a \No instance, then for any integer $\ell \geq 0$, any collection of $\ell$ sets from the \emph{special instances} $I^{1}_{\jstar},\ldots,I^{g_r}_{\jstar}$ 
		can cover at most $k_r + \ell \cdot (2c \cdot \log{N^{2r-2}})$ elements in $\Ustar$. 
	\end{claim}
	\begin{proof}
		By definition of $\coverage{N}{r-1}$, since $\Istar_r$ is a \No instance, we have 
		$\opt(\Istar_r) \leq k_{r-1} \cdot 2c \cdot \log{(N^{2r-2})}$. 
		This implies that any collection of $\ell \geq k_{r-1}$ sets from $\Istar_r$ can only cover only $\ell \cdot 2c \cdot \log{(N^{2r-2})}$ elements; otherwise, by picking the best $k_{r-1}$ sets among this collection, we can cover more that $\opt(\Istar_r)$, 
		a contradiction. Now notice that since $\Istar_r$ is a \No instance, we know that all instances $I^{1}_{\jstar},\ldots,I^{g_r}_{\jstar}$ are also \No instances. As such, any collection of $\ell \geq k_{r-1}$ sets from each $I^{i}_{\jstar}$ can also
		cover at most $\ell \cdot 2c \cdot \log{(N^{2r-2})}$ elements from $\Ustar$. 
		
		Let $\CC$ be any collection of $\ell$ sets from special instances and $\CC_i$ be the sets in $\CC$ that are chosen from the instance $I^i_{\jstar}$. Finally, let $\ell_i = \card{\CC_i}$. 
		We have (recall that $c(\CC)$ denotes the set of covered elements by $\CC$), 
		\begin{align*}
			\card{c(\CC) \cap \Ustar} &= \sum_{i \in [g_r]} \card{c(\CC_i) \cap \Ustar_i} \leq \sum_{i \in [g_r]} (k_{r-1} + \ell_i) \cdot 2c \cdot \log{(N^{2r-2})} \\
			&= g_r \cdot k_{r-1} +  \ell \cdot 2c \cdot \log{(N^{2r-2})} \leq k_r +   \ell \cdot 2c \cdot \log{(N^{2r-2})}, 
		\end{align*}
		where the last inequality holds because $g_r \cdot k_{r-1} = k_r$. 
	\end{proof}

	We now bound the contribution of fooling instances using the RND set-systems properties. 
	
	\begin{claim}\label{clm:dist-lower-alpha2-fooling}
		With probability $1-o(1/N)$ in the instance $I$, simultaneously for all integers $\ell \geq 0$, any collection of $\ell$ sets from the \emph{fooling instances} $\set{I^i_j~\mid~i \in [g_r],~j \in [w_r] \setminus \set{\jstar}}$
		can cover at most $\ell \cdot r \cdot (2c \cdot \log{N})$ elements in $\Ustar$. 
	\end{claim}
	\begin{proof}
	Recall that for any group $i \in [g_r]$, any instance $I^i_j$ is supported on the set of elements $A_j$ in $\AA$ (before applying the
	 labeling function $\labeling$). Similarly, $\Ustar_i$ is the set $A_{\jstar}$ (again before applying $\labeling$). 
	Define $\CC_i$ as the collection of all input sets from all players in $P_i$ except the sets
	coming from the special instance. By construction, $\card{\CC_i} \leq m_{r-1} \cdot w_r \leq N^{c \cdot r}$ (as $c \geq 4r$). Moreover, for any $j \in [w_r] \setminus \set{\jstar}$, since $I^i_j \sim \distp_{r-1}$, 	
	by Observation~\ref{obs:random-set}, any member of $\CC_i$ is a set of size $N$ chosen uniformly at random from some 
	$A_{j} \neq A_{\jstar}$. This implies that $\CC_i$ satisfies the Property~(\ref{item:rnd-2}) in Definition~\ref{def:rnd} (as $\AA$ is an $(N,r,c)$-RND set-system and local mappings of elements are one to one 
	when restricted to the mapping of $\XX_i$ to $U_{i}$). As such, by definition of an RND set-system, w.p. $1-o(1/N^3)$, any set $S \in \CC$ can cover at most 
	$2c \cdot r \cdot \log{N}$ elements from $\Ustar_{i}$ and consequently $\Ustar$ as $S \cap (\Ustar \setminus \Ustar_i) = \emptyset$. 
	
	We can take a union bound over the $g_r \leq N^2$ different RND set-systems (one belonging to each group) and the above bound holds w.p. $1-o(1/N)$ for all groups simultaneously. 
	This means that any collection of $\ell$ sets across any instance $I^i_j$ for $i \in [g_r]$ and $j \neq \jstar$, can cover at most $\ell \cdot 2c \cdot r \cdot \log{N}$ elements in $\Ustar$. 
	\end{proof}
	
	In the following, we condition on the event in Claim~\ref{clm:dist-lower-alpha2-fooling}, which happens w.p. at least $1-1/N$. 
	Let $\CC = \CC_s \cup \CC_f$ be any collection of $k_r$ sets (i.e., a potential $k_r$-cover) in the input instance $I$ such that $\CC_s$ are $\CC_f$ are chosen from the special instances and fooling instances, respectively. 
	Let $\ell_s = \card{\CC_s}$ and $\ell_f = \card{\CC_f}$; we have, 
	\begin{align*}
		\card{c(\CC)} &= \card{c(\CC) \cap \Ustar} + \card{c(\CC) \cap \paren{U \setminus \Ustar}} \\
		&\leq \card{c(\CC_s) \cap \Ustar} + \card{c(\CC_f) \cap \Ustar} + \card{U \setminus \Ustar} \\
		&\leq k_r + \ell_s \cdot (2r-2) \cdot 2c \cdot \log{N} + \ell_f \cdot r \cdot 2c \cdot \log{N} + N^{2r} 
		\tag{by Claim~\ref{clm:dist-lower-alpha2-special} for the first term and Claim~\ref{clm:dist-lower-alpha2-fooling} for the second term} \\
		&\leq 4k_r + k_r \cdot (2r-2) \cdot 2c \cdot \log{N} \tag{$2k_r \geq N^{2r}$} \\
		&\leq k_r \cdot 2r \cdot 2c \cdot \log{N} \leq k_r \cdot 2c \cdot \log{N^{2r}}.
	\end{align*}
	
	This means that w.p. at least $1-1/N$, $I$ is also a \No instance. 
\end{proof}

The following claim now follows immediately from Lemmas~\ref{lem:dist-lower-alpha1} and~\ref{lem:dist-lower-alpha2}. 

\begin{claim}\label{clm:preserving-true}
	The packing function $\packing_r$ and labeling family $\Labeling_r$ defined above are $\gamma$-preserving for the parameter $\gamma = 1/N$. 
\end{claim}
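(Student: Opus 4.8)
The plan is to verify Definition~\ref{def:preserving} directly for $\gamma = 1/N$, i.e.\ to show that $\Pr_{I \sim \dist_r}\paren{\PP_{s_r}(I) = \PP_{s_{r-1}}(\Istar_r)} \geq 1 - 1/N$, where $\PP_{s_r}$ and $\PP_{s_{r-1}}$ denote the decision values of $\coverage{N}{r}$ and $\coverage{N}{r-1}$. The natural route is to condition on the value of the embedded special instance $\Istar_r \sim \dist_{r-1}$ and split into the \Yes and \No cases, invoking one of the two lemmas just proved in each case.

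In the first case, if $\Istar_r$ is a \Yes instance then Lemma~\ref{lem:dist-lower-alpha1} gives that $I$ is a \Yes instance with probability $1$ over all remaining randomness of $\dist_r$, so on this event $\PP_{s_r}(I) = 1 = \PP_{s_{r-1}}(\Istar_r)$ deterministically. In the second case, if $\Istar_r$ is a \No instance then Lemma~\ref{lem:dist-lower-alpha2} gives that $I$ is a \No instance with probability at least $1 - 1/N$ over the choice of $\jstar$, the fooling instances, and the labeling function, and on that sub-event $\PP_{s_r}(I) = 0 = \PP_{s_{r-1}}(\Istar_r)$. Taking expectations over $\Istar_r \sim \dist_{r-1}$, the \Yes case contributes probability $1$ and the \No case contributes probability at least $1 - 1/N$, so the overall probability is at least $1 - 1/N = 1 - \gamma$, as required. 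This is the entire content of the averaging step, and it is why the claim is asserted to follow immediately from Lemmas~\ref{lem:dist-lower-alpha1} and~\ref{lem:dist-lower-alpha2}.

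The only point that needs a word of care — the closest thing to an obstacle here — is that $\coverage{N}{\cdot}$ is a promise problem, so $\PP_{s_{r-1}}(\Istar_r)$ is only meaningful once one knows $\Istar_r$ is a legal (\Yes or \No) instance. I would handle this by an induction on $r$, taking $\PP_{s_r}$ to be the total decision function that equals $1$ exactly on legal \Yes instances and $0$ everywhere else: the base distribution $\dist_0$ is supported on the full covering set (a \Yes instance, since $\opt = N = k_0 \cdot N$) and the empty set (a \No instance, since $\opt = 0 \le k_0 \cdot 2c\log(N^{0})$), and the inductive step is precisely the computation above — Lemma~\ref{lem:dist-lower-alpha1} keeps \Yes instances legal with probability $1$, while Lemma~\ref{lem:dist-lower-alpha2} keeps \No instances legal (equivalently, keeps $\PP_{s_r}(I)=0$) except on a set of measure at most $1/N$, which is exactly the slack absorbed into $\gamma$. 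With this convention in place, nothing else is required beyond the one-line averaging, so the claim is an immediate corollary of the two lemmas.
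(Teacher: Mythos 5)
Your proof is correct and takes essentially the same approach as the paper, which states the claim as following immediately from Lemmas~\ref{lem:dist-lower-alpha1} and~\ref{lem:dist-lower-alpha2} via exactly the case split on whether $\Istar_r$ is a \Yes or \No instance and the averaging you describe. Your extra care about the promise-problem convention is a harmless elaboration of a point the paper leaves implicit.
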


We are now ready to prove Theorem~\ref{thm:dist-lower}. 
 
\begin{proof}[Proof of Theorem~\ref{thm:dist-lower}]
	The results in this section and Claim~\ref{clm:preserving-true} imply that the family of distributions $\set{\dist_r}^{c}_{r=0}$ for the $\coverage{N,r}$ are $\gamma$-hard for the
	parameter $\gamma = 1/N$, as long as $r \leq 4c \leq 4\sqrt{N/12}$. Consequently, by Theorem~\ref{thm:dgl}, any $r$-round protocol that can compute the value of $\coverage{N,r}$ on $\dist_r$ w.p. at
	least $2/3 + r \cdot \gamma = 2/3 + r/N < 3/4$ requires $\Omega(w_r/r^4) = \Omega(N^c/r^4)$ total communication. 
	Recall that the gap between the value of optimal solution between \Yes and \No instances of $\coverage{N}{r}$ is at least $N/ \paren{2c \cdot \log{(N^{2r})}} \geq {\paren{\frac{k_r^{1/2r}}{2c\cdot\log{k_r}}}}$. 
	As such, any $r$-round distributed algorithm that can approximate the value of optimal solution to within a factor better than this w.p. at least $3/4$
	can distinguish between \Yes and \No cases of this distribution, and hence requires $ \Omega(N^{c-2r}/r^4) = \Omega\paren{\frac{k_r}{r^4} \cdot m^{\frac{c}{(c+2)\cdot 4r}}}$ per player communication. 
	Finally, since $N \leq 2k_r^{1/2r}$, the condition $c \leq \sqrt{N/12}$ holds as long as $c =o\paren{\frac{\log{k_r}}{\log\log{k_r}}}$, finalizing the proof. 
\end{proof}

\renewcommand{\CC}{\ensuremath{\mathcal{C}}}

\newcommand{\Sstar}{\ensuremath{S^{\star}}}

\newcommand{\SPGreedy}{\ensuremath{\textnormal{\textsf{SPGreedy}}}\xspace}

\newcommand{\ISGreedy}{\ensuremath{\textnormal{\textsf{ISGreedy}}}\xspace}

\newcommand{\GreedySketch}{\ensuremath{\textnormal{\textsf{GreedySketch}}}\xspace}

\newcommand{\optp}{\ensuremath{\widetilde{\opt}}}

\section{Distributed Algorithms for Maximum Coverage} \label{sec:dist-algs}

In this section, we show that both the \emph{round-approximation} tradeoff and the \emph{round-communication} tradeoff achieved by our lower bound in Theorem~\ref{thm:dist-lower} are essentially
tight, formalizing Result~\ref{res:dist-upper}. 

\subsection{An $O(r \cdot k^{1/r})$-Approximation Algorithm} \label{sec:dist-k}

Recall that Theorem~\ref{thm:dist-lower} shows that getting better than $k^{\Omega(1/r)}$ approximation in $r$ rounds requires a relatively large
communication of $m^{\Omega(1/r)}$, (potentially) larger than any $\poly(n)$. In this section, we prove that this round-approximation tradeoff is essentially tight by showing that one can always
obtain a $k^{O(1/r)}$ approximation (with a slightly larger constant in the exponent) in $r$ rounds using a limited communication of nearly linear in $n$. 
%The following theorem formalizes the second part of Result~\ref{res:dist-upper}. 

\begin{theorem}\label{thm:dist-upper-k}
	There exists a deterministic distributed algorithm for the maximum coverage problem that for any integer $r \geq 1$ computes 
	an $O(r \cdot k^{1/r+1})$ approximation in $r$ rounds and $\Ot(n)$ communication per each machine. 
\end{theorem}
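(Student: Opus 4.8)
The plan is to give a deterministic $r$-round protocol, call it $\GreedySketch$, organized as an iterative sketching scheme. The coordinator maintains a partial greedy solution $G$ (a set of at most $k$ input sets it has committed to) together with its covered set $C := c(G)$, and at the start of each round broadcasts $C$, at cost $\Ot(n)$ bits. Each machine $M$ maintains a \emph{sketch} $\mathcal{K}_M$, a family of its own sets together with the residuals they contribute; $\mathcal{K}_M$ starts empty and grows by a bounded amount per round. Since only the value of $\opt$ up to the approximation factor is needed, I may assume a target $v$ with $v \le \opt < 2v$ is known (otherwise run the protocol in parallel for the $O(\log n)$ dyadic guesses of $\opt$ and return the best, which multiplies communication by $O(\log n)$ and does not change the round count). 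Fix a geometrically decreasing threshold schedule $\tau_1 > \tau_2 > \cdots > \tau_r$ with consecutive ratio $\lambda = k^{\Theta(1/r)}$, anchored so that $\tau_1 = \Theta(v)$ and $\tau_r = \Theta(v/k)$.

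In round $i$, each machine $M$, upon receiving $C$, runs greedy on its input relative to $C$ and to what it has already placed in $\mathcal{K}_M$: it repeatedly picks a not-yet-chosen set maximizing its marginal coverage over $C$ together with the sets already chosen this round, and keeps it as long as that marginal is at least $\tau_i$; it appends each chosen set, with the residual it newly covers, to $\mathcal{K}_M$, and transmits exactly those newly-covered residuals. Within one machine and round these residuals are pairwise disjoint subsets of $[n]$, so their total size is at most $n$ and each machine sends $\Ot(n)$ bits per round; over $r$ rounds and the $O(\log n)$ guesses this is still $\Ot(n)$ per machine, giving the claimed communication bound. The coordinator then augments $G$ by running greedy over all residuals received so far, adding any set whose marginal over the current $C$ is at least $\tau_i/2$, and updates $C$; at the end it outputs whichever is better, the final $G$ or a greedy $k$-cover computed locally (for free) over $\bigcup_M \mathcal{K}_M$.

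Correctness has two parts. A \emph{feasibility} claim: $|G| \le k$ always. In round $i$ each set the coordinator adds covers at least $\tau_i/2$ fresh elements, so it adds $O(n/\tau_i)$ of them; summing the geometric schedule and using $\tau_r = \Theta(n/k)$ in the relevant regime $v = \Theta(n)$ gives $|G| = O(n/\tau_r) = O(k)$, rescaled to $\le k$ by adjusting constants; and in the exceptional case where more sets would be forced, the coordinator has already found $\Omega(n/\tau_i)$ disjoint sets of marginal $\ge \tau_i/2$, hence $|C| = \Omega(v)$, so $G$ is already $O(1)$-approximate and we stop. The \emph{approximation} claim is the crux, proved by induction on rounds via the dichotomy: after round $i$, either $|C| \ge v/\alpha$ for $\alpha = O(r\,k^{\Theta(1/r)})$ (done, as $v = \Theta(\opt)$), or for every optimal set $O_j$ the part of $O_j$ that is neither in $C$ nor represented in $\bigcup_M \mathcal{K}_M$ has size at most $\tau_i$. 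The step combines: (a) the machine-side greedy --- any input set of marginal $\ge \tau_i$ over $C$ at the start of round $i$ either enters the sketch or is driven below marginal $\tau_i$ by the other sketch entries the machine produces; (b) the coordinator-side greedy --- every received residual not added to $G$ has marginal $< \tau_i/2$ over the updated $C$; and (c) submodularity (Facts~\ref{fact:submodular-greedy} and~\ref{fact:submodular-subadditive}) to bound $O_j$'s un-represented, un-covered mass by the number of sketch entries meeting $O_j$ times $\tau_i$. Chaining the dichotomy to round $r$: either some $C$ already covers $\Omega(v/\alpha)$, or the $k$ optimal sets are collectively represented in $\bigcup_M \mathcal{K}_M$ up to total deficit $k\tau_r/\lambda^{\Theta(r)} = o(v)$, so the local greedy $k$-cover over $\bigcup_M \mathcal{K}_M$ covers $\Omega(v/\alpha)$; either way the output is an $O(r\,k^{1/r+1})$-approximation.

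The main obstacle is step (c): controlling how large a single machine's sketch can be relative to $\tau_i$ without the per-optimal-set deficit blowing up. Naively a machine's round-$i$ sketch holds up to $n/\tau_i$ residuals, and if the coordinator adopts none of them each still leaves a deficit $\tau_i/2$, so the bookkeeping must exploit that these residuals are disjoint and partition $O_j$'s residual, that the schedule is geometric, and that the exceptional ``$O(1)$-approximate'' branch absorbs the case where a machine is forced to emit too many high-marginal sets. Making the three quantities --- the threshold ratio $\lambda$, the per-round cap on $|G|$, and the per-round shrinkage of the deficit --- mutually consistent so they multiply out to $k^{\Theta(1/r)}$ with only an $O(r)$ loss is the delicate point; the communication count, the feasibility bound, and the reduction to a known target value are routine.
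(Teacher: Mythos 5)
Your high-level architecture (iterative sketching, coordinator-maintained partial greedy solution, machines re-sketching against the coordinator's feedback, final greedy over all sketches) matches the paper's \ISGreedy, but the step you yourself flag as ``the delicate point'' is exactly the step the paper's proof lives or dies on, and your proposal does not supply the mechanism that closes it. The difficulty is that at the end you must extract, for each optimal set $O$, a \emph{single} sketch set covering a $1/(r\cdot k^{1/(r+1)})$ fraction of $O$'s residual mass; for that you need the relevant machine's sketch to contain at most about $r\cdot k^{1/(r+1)}$ sets. In your scheme the only bound on a machine's round-$i$ sketch size is (coverage)/$\tau_i$, and with $\tau_r=\Theta(v/k)$ this can be as large as $\Theta(k)$, so each optimal set's mass may be spread over $\Theta(k)$ disjoint residuals, each a useless $1/k$ fraction — the pigeonhole gives nothing. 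Neither the geometric threshold schedule nor the coordinator's absolute acceptance threshold $\tau_i/2$ forces the sketches to be small; you note the bookkeeping ``must exploit'' disjointness and the geometric schedule, but you do not show how, and I do not see how it can be made to work in this form.

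The paper closes this gap with two design choices you don't have. First, the machine's universe in round $j$ is restricted to $U^j_i = c(\CC^{j-1}_i)\setminus c(\XX^{j-1})$ — the machine only re-sketches the part of its \emph{own previous sketch} that the coordinator declined — rather than the full residual universe $[n]\setminus C$. Second, the coordinator accepts a set $S$ using the \emph{relative} criterion $\card{S\setminus c(\XX^j)}\geq k^{-1/(r+1)}\card{S}$ rather than an absolute threshold; consequently every rejected set is $(1-k^{-1/(r+1)})$-covered by $\XX^j$, so $\card{U^j_i}\leq k^{-1/(r+1)}\card{c(\CC^j_i)}$ and the sketch coverage decays geometrically to $\opt/(4k^{r/(r+1)})$ by round $r$ (Claim~\ref{clm:ub-k-apx-shrink} and Eq.~(\ref{eq:ub-last-round})). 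Since every sketch set contributes at least $\tau=\optp/(4rk)$ fresh elements, the final sketch has at most $r\cdot k^{1/(r+1)}$ sets, and the per-optimal-set representative exists (Claims~\ref{clm:ub-k-apx-intersection-1} and~\ref{clm:ub-k-apx-intersection-2}). The paper also uses a single fixed machine-side threshold $\tau$, not a decreasing schedule; the geometric progression comes from the shrinking universes, not from the thresholds. Separately, your feasibility bound $\card{G}=O(n/\tau_r)$ silently assumes $v=\Theta(n)$, which need not hold, though that part is repairable (cap $\card{G}$ at $k$ and argue as in Claim~\ref{clm:ub-xx-k}); the unrepaired issue is the representation step above.
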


On a high level, our algorithm follows an iterative sketching method: in each round, each machine computes a small collection $\CC_i$ of its input sets $\SS_i$ as a sketch and sends
it to the coordinator. The coordinator is maintaining a collection of sets $\XX$ and updates it by iterating over the received sketches and picking any set that still has a relatively large contribution to this partial solution. 
The coordinator then communicates the set of elements covered by $\XX$ to the machines and the machines update their inputs accordingly and repeat this process. At the end,
the coordinator returns (a constant approximation to) the optimal $k$-cover over the collection of \emph{all} received sets across different rounds. 

In the following, we assume that our algorithm is given a value $\optp$ such that $\opt \leq \optp \leq 2 \cdot \opt$. 
We can remove this assumption by guessing the value of $\optp$ in powers of two (up to $n$) and solve the problem simultaneously for all of them and return the best solution, which 
increases the communication cost by only an $O(\log{n})$ factor. 

We first introduce the algorithm for computing the sketch on each machine; the algorithm is a simple thresholding version of the greedy algorithm for maximum coverage.

\textbox{\textnormal{$\GreedySketch(U,\SS,\tau)$. An algorithm for computing the sketch of each machine's input.}}{
	
	\smallskip
	
	\textbf{Input:} A collection $\SS$ of sets from $[n]$, a target universe $U \subseteq [n]$, and a threshold $\tau$. \\
	\textbf{Output:} A collection $\CC$ of subsets of $U$. 

	\begin{enumerate}
		\item Let $\CC = \emptyset$ initially. 
		\item\label{line:greedysketch-add} Iterate over the sets in $\SS$ in an arbitrary order and for each set $S \in \SS$, if $\card{(S \cap U) \setminus c(\CC)} \geq \tau$, then add $(S \cap U) \setminus c(\CC)$ to $\CC$. 
		\item Return $\CC$ as the answer. 
	\end{enumerate}
}

Notice that in the Line~(\ref{line:greedysketch-add}) 
of \GreedySketch, we are adding the \emph{new contribution} of the set $S$ and not the complete set itself. This way, we can bound the total representation size of the output collection $\CC$ 
by $\Ot(n)$ (as each element in $U$ appears in at most one set). We now present our algorithm in Theorem~\ref{thm:dist-upper-k}. 

\textbox{Algorithm 2: \textnormal{Iterative Sketching Greedy (\ISGreedy).}}{

\medskip

\textbf{Input:} A collection $\SS_i$ of subsets of $[n]$ for each machine $i \in [p]$ and a value $\optp \in [\opt,2\cdot\opt]$.   

\smallskip

\textbf{Output:} A $k$-cover from the sets in $\SS := \bigcup_{i \in [p]} \SS_i$. 

\begin{enumerate}
	\item Let $\XX^0 = \emptyset$ and $U^0_i = [n]$, for each $i \in [p]$ initially. Define $\tau := \optp / 4r \cdot k$.   
	\item For $j = 1$ to $r$ rounds: 
	\begin{enumerate}
		\item\label{line:isgreedy-sketch} Each machine $i$ computes $\CC^j_i = \GreedySketch(U^{j-1}_i,\SS_i,\tau)$ and sends it to coordinator. 
		\item\label{line:isgreedy-add} The coordinator sets $\XX^j = \XX^{j-1}$ initially and iterates over the sets in $\bigcup_{i \in [p]}\CC^j_i$, in decreasing order of $\card{c(\CC^j_i)}$ over $i$ (and consistent
		 with the order in \GreedySketch for each particular $i$), and adds each set $S$ to $\XX^j$ if $\card{S \setminus c(\XX^j)} \geq \frac{1}{k^{1/r+1}} \cdot \card{S}$. 
		\item\label{line:isgreedy-update} The coordinator communicates $c(\XX^j)$ to each machine $i$ and the machine updates its input by setting $U^j_i = c(\CC^j_i) \setminus c(\XX^j)$. 
	\end{enumerate}
	\item At the end, the coordinator returns the best $k$-cover among all sets in $\CC := \bigcup_{i \in [p], j \in [r]} \CC^j_i$ sent by the machines over all rounds. 
\end{enumerate}
}

The round complexity of \ISGreedy is trivially $r$. For its communication cost, notice that at each round, each machine is communicating at most $\Ot(n)$ bits and the coordinator communicates
$\Ot(n)$ bits back to each machine. As the number of rounds never needs to be more than $O(\log{k})$, we obtain that \ISGreedy requires $\Ot(n)$ communication per each machine. 
Therefore, it only remains to analyze the approximation guarantee of this algorithm. To do so, it suffices to show that, 

\begin{lemma}\label{lem:ub-k-apx}
	Define $\CC := \bigcup_{i \in [p], j \in [r]} \CC^j_i$. The optimal $k$-cover of $\CC$ covers $\paren{\frac{\opt}{4r \cdot k^{1/r+1}}}$ elements. 
\end{lemma}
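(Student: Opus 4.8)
The plan is to produce, inside $\CC$, an explicit $k$-subcollection that already covers the required number of elements; which subcollection to use depends on how many sets the coordinator's greedy picked.

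I would first record two invariants. \textbf{(GreedySketch completeness.)} After machine $i$ runs $\GreedySketch(U^{j-1}_i,\SS_i,\tau)$ in round $j$, every $S\in\SS_i$ has $\card{(S\cap U^{j-1}_i)\setminus c(\CC^j_i)}<\tau$: when $S$ is scanned its residual new contribution is either already below $\tau$, or the whole of $(S\cap U^{j-1}_i)$ gets added, and the residual only shrinks thereafter. \textbf{(Coordinator thinning.)} After round $j$ every chunk $D\in\CC^j_i$ satisfies $\card{D\setminus c(\XX^j)}<\frac1{k^{1/r+1}}\card D$ (added chunks have residual $0$; a rejected chunk satisfies this at rejection time, and $c(\XX^j)$ only grows). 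Since the chunks of a fixed $\CC^j_i$ are pairwise disjoint, summing gives $\card{U^j_i}=\card{c(\CC^j_i)\setminus c(\XX^j)}<\frac1{k^{1/r+1}}\card{c(\CC^j_i)}$, and with $c(\CC^j_i)\subseteq U^{j-1}_i$ this yields the contraction $\card{U^j_i}<\frac1{k^{1/r+1}}\card{U^{j-1}_i}$. Fix an optimal $k$-cover $\OPT$ with $\card{c(\OPT)}=\opt$; recall each member of $\OPT$ sits on a single machine, so $\le k$ machines host optimal sets.

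The easy case is when the coordinator ever adds at least $k$ sets to its solution over the $r$ rounds: take the first $k$ such sets. Each of them was added as a chunk $S$ with $\card{S\setminus c(\XX)}\ge\frac1{k^{1/r+1}}\card S\ge\frac{\tau}{k^{1/r+1}}$ (every chunk has size $\ge\tau$), so these $k$ sets, all of which belong to $\CC$, cover at least $\frac{k\tau}{k^{1/r+1}}=\frac{\optp}{4r\,k^{1/r+1}}\ge\frac{\opt}{4r\,k^{1/r+1}}$ elements, which already beats the target. So assume instead that $\XX^r$ has fewer than $k$ sets; then $\XX^r\subseteq\CC$ is itself a legal $k$-cover and it suffices to show $\card{c(\XX^r)}\ge\frac{\opt}{4r\,k^{1/r+1}}$.

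To do that I would track the residual optimum mass $T_j:=c(\OPT)\setminus c(\XX^j)$, with $T_0=c(\OPT)$ and $T_j\subseteq T_{j-1}$. The key claim is $T_j\subseteq\Paren{\bigcup_i U^j_i}\cup F_j$, where $F_j$ is the union, over all $\le r$ rounds and all $k$ optimal sets, of the per-round slack sets of invariant (GreedySketch completeness); hence $\card{F_j}<k\,r\,\tau=\optp/4\le\opt/2$. Indeed, an optimal element $e\in S^\star\subseteq\SS_i$ with $e\notin c(\XX^j)$ is either still in $U^j_i$, or it first left the chain $U^0_i\supseteq U^1_i\supseteq\cdots$ at some round $\ell\le j$, and since $e\notin c(\XX^\ell)$ this can only happen because $e\in(S^\star\cap U^{\ell-1}_i)\setminus c(\CC^\ell_i)$, a set of fewer than $\tau$ elements. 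Combining with the contraction from (coordinator thinning) — chaining $\card{c(\CC^j_i)}\le\card{U^{j-1}_i}$ back to $\card{c(\CC^1_i)}\le n$ — one bounds $\card{\bigcup_i U^r_i}\le\sum_{i\in M}\card{U^r_i}$ and concludes $\card{T_r}\le\opt/2+(\text{a contracted error term})$, whence $\card{c(\XX^r)}\ge\opt-\card{T_r}$ is the claimed fraction of $\opt$. \textbf{The main obstacle is exactly this last estimate}: the contraction is \emph{per machine}, while the residual optimum mass is spread over up to $k$ machines, so one must check that $\sum_{i\in M}\card{U^r_i}$, bounded by $k$ times an $n/(k^{1/r+1})^{\Theta(r)}$ term, is still a constant fraction below $\opt$ — and this is where the choices $\tau=\optp/4rk$, the $\frac1{k^{1/r+1}}$ threshold, $r=O(\log k)$, and the round/approximation trade-off are actually consumed. (A cleaner partial shortcut: $c(\XX^r)\supseteq c(\CC^1_{i_\star})$ for the machine $i_\star$ maximizing $\card{c(\CC^1_{i_\star})}$, since its pairwise-disjoint round-$1$ chunks are all greedily added first; and $\card{c(\CC^1_{i_\star})}\ge\max_{S^\star\in\OPT}\card{S^\star}-\tau\ge\opt/k-\tau\ge\opt/2k$ by (GreedySketch completeness) applied with $U^0_i=[n]$. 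I would use whichever of these two lower bounds on $\card{c(\XX^r)}$ pushes through the constants more comfortably.)
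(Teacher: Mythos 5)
You are right to flag the last estimate as the main obstacle: as written this is a genuine gap, and neither of your two routes closes it. Your contraction $\card{U^j_i}\le k^{-1/(r+1)}\cdot\card{U^{j-1}_i}$ is anchored at $\card{U^0_i}=n$, so after $r$ rounds it only gives $\card{U^r_i}\le n/k^{r/(r+1)}$; summed over the up to $k$ machines hosting optimal sets this is $n\cdot k^{1/(r+1)}$, which bears no relation to $\opt$ (nothing prevents $n\gg \opt\cdot k$). The fallback shortcut only yields $\card{c(\XX^r)}\ge\opt/2k$, which is weaker than the target $\opt/(4r\cdot k^{1/(r+1)})$ whenever $k^{r/(r+1)}>2r$, i.e., for essentially all $k$. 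The missing ingredient is a case split on $\max_i\card{c(\CC^1_i)}$. If some machine has $\card{c(\CC^1_i)}>\opt/(4k^{1/(r+1)})$, then (after ruling out $\card{\CC^j_i}\ge k$, which would throw you back into your easy case) the collection $\CC^1_i\subseteq\CC$ is itself a $k$-cover meeting the bound and you are done. Otherwise you may anchor your chain at $\card{c(\CC^1_i)}\le\opt/(4k^{1/(r+1)})$ instead of at $n$, which gives $\card{c(\CC^r_i)}\le\opt/(4k^{r/(r+1)})$, hence $\card{U^r_i}\le\opt/(4k)$ and $\sum_i\card{U^r_i}\le\opt/4$; combined with your (correct) decomposition $T_r\subseteq\big(\bigcup_i U^r_i\big)\cup F_r$ and $\card{F_r}\le kr\tau\le\opt/2$, the inequality $\card{c(\XX^r)}\ge\opt-\card{T_r}$ then does push through. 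So your skeleton is salvageable, but only after adding that case distinction, which is exactly what you are missing.

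For comparison, the paper makes the same case split and derives the same bound $\card{c(\CC^r_i)}\le\opt/(4k^{r/(r+1)})$, but then does \emph{not} try to show that $\XX^r$ alone is good. Instead it observes that, since every chunk has size at least $\tau=\optp/(4rk)$, the last-round sketch $\CC^r_i$ contains at most $r\cdot k^{1/(r+1)}$ chunks; by pigeonhole some chunk $S_O\in\CC^r_i$ then captures a $1/(r k^{1/(r+1)})$ fraction of the residual mass $\card{O\setminus c(\XX^r)}-r\tau$ of each optimal set $O$ (this residual being trapped inside $U^r_i\subseteq c(\CC^r_i)$ by the analogue of your tracking claim). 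Collecting these representatives into a second candidate $k$-cover $\CC_O\subseteq\CC$ and lower-bounding $\card{c(\CC_O)}+\card{c(\XX^r)}$ shows the better of the two candidates meets the bound. Both arguments consume the same invariants; yours, once repaired, bounds $c(\XX^r)$ directly, while the paper's hedges between $\XX^r$ and $\CC_O$.
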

\begin{proof}

We prove Lemma~\ref{lem:ub-k-apx} by analyzing multiple cases. We start with an easy case when $\card{\XX^r} \geq k$. 

\begin{claim}\label{clm:ub-xx-k}
	If $\card{\XX^r} \geq k$, then the optimal $k$-cover of $\XX^r \subseteq \CC$ covers $\paren{\frac{\opt}{4r \cdot k^{1/r+1}}}$ elements. 
\end{claim}
\begin{proof}
	Consider the first $k$ sets added to the collection $\XX^r$. Any set $S$ that is added to $\XX^r$ in (Line~(\ref{line:isgreedy-add}) of \ISGreedy) covers $\frac{1}{k^{1/r+1}} \cdot \card{S}$ new elements. 
	Moreover, $\card{S} \geq \tau = \optp/4rk$ (by Line~(\ref{line:greedysketch-add}) of the \GreedySketch algorithm). Hence, the first $k$ sets added to $\XX^r$ already cover at least, 
	 \[
	 	k \cdot \frac{1}{k^{1/r+1}} \cdot \frac{\optp}{4rk} \geq \frac{\opt}{4r \cdot k^{1/r+1}} 
	 \] 
	 elements, proving the claim. 
\end{proof}

The more involved case is when $\card{\XX^r} < k$, which we analyze below. Recall that $\CC^j_i$ is the collection computed by $\GreedySketch(U^{j-1}_i,\SS_i,\tau)$ on the
machine $i \in [p]$ in round $j$. We can assume that each $\card{\CC^j_i} < k$; otherwise consider the
smallest value of $j$ for which the for the first time there exists an $i \in [p]$ with $\card{\CC^j_i} \geq k$ (if for this value of $j$, there are more than one choice for $i$ choose the one with the largest size of $c(\CC^j_i)$): in 
Line~(\ref{line:isgreedy-add}), the coordinator would add all the sets in ${\CC^j_i}$ to $\XX^j$ making $\card{\XX^j} \geq k$, a contradiction with the assumption that $\card{\XX^r} < k$.

By the argument above, if there exists a machine $i \in [p]$, with $\card{c(\CC^1_i)} > \opt/4k^{1/r+1}$, we are already done. 
This is because the collection $\CC^1_i$ contains at most $k$ sets and hence $\CC^1_i$ is a valid $k$-cover in $\CC$ that covers $(\opt/4k^{1/r+1})$ elements, proving the lemma 
in this case. It remains to analyze the more involved case when none of the above happens.  

\begin{lemma}\label{lem:ub-k-hard}
	Suppose $\card{\XX^r} < k$ and $\card{c(\CC^1_i)} \leq \opt/4k^{1/r+1}$ for all $i \in [p]$; then, the optimal $k$-cover of $\CC$ covers $\paren{\frac{\opt}{4r \cdot k^{1/r+1}}}$ elements. 
\end{lemma}
\begin{proof}
	Recall that in each round $j \in [r]$, each machine $i \in [p]$ first computes a collection $\CC^{j}_i$ from the universe $U^{j-1}_i$ as its sketch (using \GreedySketch) and sends
	it to the coordinator; at the end of the round also this machine $i$ updates its target universe for the next round to $U^{j}_i \subseteq \CC^{j}_i$. 
	We first show that this target universe $U^{j}_i$ shrinks in each round by a large factor compared to $\CC^{j}_i$. 
	 \begin{claim}\label{clm:ub-k-apx-shrink}
	 	For any round $j \in [r]$ and any machine $i \in [p]$, $\card{U^j_i} \leq \paren{1/k^{1/r+1}} \cdot \card{c(\CC^{j}_i)}$. 
	 \end{claim}
	 \begin{proof}
	 	Consider any $i \in [p]$ and round $j \in [r]$;  by Line~(\ref{line:isgreedy-update}) of \ISGreedy, we know $U^j_i = c(\CC^j_i) \setminus c(\XX^j)$. Hence,
		it suffices to show that $\XX^j$ covers $(1-1/k^{1/r+1})$ fraction of $c(\CC^j_i)$. This is true because for any set $S \in \CC^j_i$ that is \emph{not} added to $\XX^j$, we have, 
		$\card{S \setminus c(\XX^j)} < \frac{1}{k^{1/r+1}} \cdot \card{S}$, meaning that at most $1/k^{1/r+1}$ fraction of any set $S \in \CC^j_i$ can remain uncovered by $\XX^j$ at the end of the round $j$. 
	 \end{proof}

	  By Claim~\ref{clm:ub-k-apx-shrink}, and the assumption on size of $\card{c(\CC^1_i)}$ in the lemma statement, we have, 
	  \begin{align}
		  \card{c(\CC^{r}_i)} &\leq \card{U^{r-1}_i} \leq \paren{\frac{1}{k^{1/r+1}}} \cdot \card{c(\CC^{r-1}_i)} \leq \paren{\frac{1}{k^{1/r+1}}} \cdot \card{U^{r-2}_i}  
		  \tag{since $U^{j}_i \subseteq c(\CC^j_i) \subseteq U^{j-1}_i$ by construction of \ISGreedy and  \GreedySketch} \\
		  &\leq \paren{\frac{1}{k^{1/r+1}}}^{r-1} \cdot \card{c(\CC^{1}_i)} \leq \paren{\frac{1}{k^{1/r+1}}}^{r-1} \cdot \frac{\opt}{4k^{1/r+1}} 
		  \tag{by expanding the bound on each $\card{U^{j}_i}$ recursively and using the bound on $\card{c(C^1_i)}$} \\  
		  &\leq \frac{\opt}{4k^{r/r+1}} \label{eq:ub-last-round}. 
	  \end{align}
	  
	  Fix any optimal solution $\OPT$. We make the sets in $\OPT$ \emph{disjoint} by arbitrarily assigning each element in $c(\OPT)$ to exactly one of the sets
	  that contains it. Hence, a set $O \in \OPT$ is a subset of one of the original sets in $\SS$; we slightly abuse the notation and say $O$ belongs to $\SS$ (or input of some machine)
	  to mean that the corresponding super set belongs to $\SS$.  In the following, we use Eq~(\ref{eq:ub-last-round}) to argue that any set $O \in \OPT$ has 
	  a ``good representative'' in the collection $\CC$. This is the key part of the proof of Lemma~\ref{lem:ub-k-hard} and the next two claims 
	  are dedicated to its proof. 
	  
	  We first show that for any set $O$ in the optimal solution that belonged to machine $i \in [p]$,  if $O$ was never picked in any $\XX^j$ during
	  the algorithm, then the universe $U^{j}_i$ at any step covers a large portion of $O$. For any $j \in [r]$ and $i \in [p]$, define $X^j := c(\XX^j)$ and $C^{j}_i = c(\CC^{j}_i)$. We have, 
	  
	  \begin{claim}\label{clm:ub-k-apx-intersection-1}
	  		For any set $O \in \OPT \setminus \CC$ and the parameter $\tau$ defined in \ISGreedy, if $O$ appears in the input of machine $i \in [p]$, then, for any $j \in [r]$, 
			\[
				\card{O \cap U^{j}_i} \geq {\card{O \setminus X^j} - j \cdot \tau}. 
			\]
	  \end{claim}
	  \begin{proof}
	  	The idea behind the proof is as follows. In each round $j$, among the elements already in $U^{j-1}_i$, at most $\tau$ elements of $O$ can be left uncovered by the 
		set $C^{j}_i$ as otherwise the \GreedySketch algorithm should have picked $O$ (a contradiction with $O \notin \CC$). Moreover, any element in $C^{j}_i$ but not $U^{j}_i$ 
		is covered by $c(\XX^{j})$ i.e., $X^{j}$ and hence can be accounted for in the term $\card{O \setminus X^j}$. 
		
		%The formal proof is deferred to Appendix~\ref{app:lem-ub-k-apx}.	
		We now formalize the proof. The proof is by induction. The base case for $j = 0$ is trivially true as $U^0_i = [n]$ and $X^0 = \emptyset$ (as $\XX^0 = \emptyset$). 
		Now assume inductively that this is the case for integers smaller than $j$ and we prove it 
		for $j$. By Line~(\ref{line:greedysketch-add}) of \GreedySketch, we know $\card{O \cap U^{j-1}_i \setminus C^j_i} < \tau$ 
		as otherwise the set $O$ would have been picked by $\GreedySketch(U_i^{j-1},\SS_i,\tau)$ in \ISGreedy, a contradiction with
		the fact that $O \notin \CC$. Using this plus the fact that $C^j_i = c(\CC^{j}_i) \subseteq U^{j-1}_i$, we have, 
		\begin{align}
			\card{O \cap C^j_i} = \card{O \cap C^j_i \cap U^{j-1}_i} \geq \card{O \cap U^{j-1}_i} - \card{O \cap U^{j-1}_i \setminus C^j_i} \geq \card{O \setminus X^{j-1}} - j \cdot \tau, \label{eq:ub-k-induction}
		\end{align}
		where the last inequality is by induction hypothesis on the first term and the bound of $\tau$ on the second term. 
		
		To continue, define $Y^j = X^{j} \setminus X^{j-1}$, i.e., the set of new elements covered by $\XX^j$ compared to $\XX^{j-1}$. By construction of the algorithm \ISGreedy,
		$U^{j}_i = C^{j}_i \setminus X^{j} = C^{j}_i \setminus Y^j$ as $U^{j-1}_i$ and consequently $C^{j}_i$ do not have any intersection with $X^{j-1}$. We now have, 	
		\begin{align*}
			\card{O \cap U^j_i} &= \card{O \cap \paren{C^{j}_i \setminus Y^j}} \geq \card{O \cap C^{j}_i} - \card{O \cap Y^j} \\
			&\Geq{Eq~(\ref{eq:ub-k-induction})} \card{O \setminus X^{j-1}} - j \cdot \tau - \card{O \cap Y^j} \\
			&= \card{O \setminus \paren{X^j \setminus Y^j}} - j\cdot \tau - \card{O \cap Y^j} \tag{by definition of $Y^j = X^j \setminus X^{j-1}$} \\
			&= \card{O \setminus X^j} - j \cdot \tau \tag{since $Y_j \subseteq X_j$},
		\end{align*}
		which proves the induction step. 
	  \end{proof}
	  
	  We next  argue that since any set $O \in \OPT$ that is located on machine $i$ is ``well represented'' in $U^{r}_i$ by Claim~\ref{clm:ub-k-apx-intersection-1} (if not already picked
	  in $\XX^r$), and since by Eq~(\ref{eq:ub-last-round}), 
	  size of $\CC^{r}_i$ and consequently the number of sets sent by machine $i$ in $\CC^{r}_i$ is small, there should exists a set in $\CC^{r}_i$ that also represents $O$ rather closely. Formally, 
	  
	  \begin{claim}\label{clm:ub-k-apx-intersection-2}
	  		For any set $O \in \OPT$, there exists a set $S_O \in \CC$ such that for the parameter $\tau$ defined in \ISGreedy,
			\[
				\card{O \cap S_O} \geq \frac{\card{O \setminus X^r} - r \cdot \tau}{r \cdot k^{1/r+1}}. 
			\]
	  \end{claim}
	  \begin{proof}
	  	Fix a set $O \in \OPT$ and assume it appears in the input of machine $i \in [p]$. The claim is trivially true if $O \in \CC$ (as we can take $S_O = O$). Hence, assume $O \in \OPT \setminus \CC$. 
		By Claim~\ref{clm:ub-k-apx-intersection-1} and the fact that $U^{r}_i \subseteq C^{r}_i$, at the end of the last round $r$, we have, 
		\begin{align*}
			\card{O \cap C^{r}_i} \geq \card{O \cap U^{r}_i} \Geq{Claim~\ref{clm:ub-k-apx-intersection-1}} {\card{O \setminus X^{r}} - r \cdot \tau}.
		\end{align*}
		Moreover, by Eq~(\ref{eq:ub-last-round}), $\card{C^r_i} \leq \opt/4k^{r/r+1}$. Since any set added to $\CC^{r}_i$ increases $C^{r}_i = c(\CC^{r}_i)$ by at least $\tau = \opt/4kr$ elements (by construction
		of \GreedySketch), we know that, 
		\begin{align*}
			\card{\CC^{r}_i} \leq \frac{\card{C^{r}_i}}{\opt/4kr} \Leq{Eq~(\ref{eq:ub-last-round})} r \cdot k^{1/r+1}.
		\end{align*}
		It is easy to see that there exists a set $S_O \in \CC^{r}_i$ that covers at least $1/\card{\CC^{r}_i}$ fraction of $O \cap C^{r}_i$; combining this with the equations above, 
		we obtain that, 
		\[
				\card{O \cap S_O} \geq \frac{\card{O \setminus X^r} - r \cdot \tau}{r \cdot k^{1/r+1}}. \qed
		\]
		
	  \end{proof}
	  
	  We are now ready to finalize the proof of Lemma~\ref{lem:ub-k-hard}. Define $\CC_O := \set{S_O \in \CC \mid O \in \OPT}$ for the sets $S_O$ defined in Claim~\ref{clm:ub-k-apx-intersection-2}. 
	  Clearly, $\CC_O \subseteq \CC$ and $\card{\CC_O} \leq k$.  Additionally, recall that $\card{\XX} < k$ by the assumption in the lemma statement. Consequently, both $\CC_O$ and 
	  $\XX$ are $k$-covers in $\CC$. In the following, we show that the best of these two collections covers $(\opt/4r\cdot k^{1/r+1})$ elements. 
	  \begin{align*}
	  	\card{c(\CC_O)} + \card{c(\XX^r)} &= \card{\bigcup_{O \in \OPT}{S_O}}+ \card{X^r} \geq \card{\bigcup_{O \in \OPT}\paren{O \cap S_O}}+ \card{X^r} \\
		&= \sum_{O \in \OPT} \card{O \cap S_O} + \card{X^r} 
		\tag{as by the discussion before Claim~\ref{clm:ub-k-apx-intersection-1} we assume the sets in $\OPT$ are disjoint} \\
		&\Geq{Claim~\ref{clm:ub-k-apx-intersection-2}} \sum_{O \in \OPT} \paren{\frac{\card{O \setminus X^r} - r \cdot \tau}{r \cdot k^{1/r+1}}} + \card{X^r} \\
		&= \frac{\card{\bigcup_{O \in \OPT} O \setminus X^r} - k \cdot r \cdot \tau}{r \cdot k^{1/r+1}} + \card{X^r} 
		\tag{again by the assumption on the disjointness of the sets in $\OPT$ and the fact that $\card{\OPT} = k$} \\
		&\geq \frac{\card{c(\OPT)} -\card{X^r} - \optp/4}{r \cdot k^{1/r+1}}  + \card{X^r} 
		\tag{as $\tau = \optp/4kr$} \\ 
		&\geq \frac{\card{c(\OPT)} - \opt/2}{r \cdot k^{1/r+1}} \geq {\frac{\opt}{2r \cdot k^{1/r+1}}}.
		\tag{as $\card{c(\OPT)} = \opt$ and $\optp \leq 2 \cdot \opt$}
	  \end{align*}
	  As a result, at least one of $\CC_O$ or $\XX^r$ is a $k$-cover that covers $(\opt/4r\cdot k^{1/r+1})$ elements, finalizing the proof. 
\end{proof}
	  Lemma~\ref{lem:ub-k-apx} now follows immediately from Claim~\ref{clm:ub-xx-k} and Lemma~\ref{lem:ub-k-apx}. 
\end{proof}

Theorem~\ref{thm:dist-upper-k} follows from Lemma~\ref{lem:ub-k-apx} as the coordinator can simply run any constant factor approximation algorithm for maximum coverage on the collection $\CC$ and obtains
the final result.

\subsection{An $(\frac{e}{e-1})$-Approximation Algorithm} \label{sec:dist-m}

We now prove that the \emph{round-communication} tradeoff for the distributed maximum coverage problem proven in Theorem~\ref{thm:dist-lower} is essentially 
tight. Theorem~\ref{thm:dist-lower} shows that using $k \cdot m^{O(1/r)}$ communication in $r$ rounds only allows for a relatively large
approximation factor of $k^{\Omega(1/r)}$. Here, we show that we can always obtain an (almost) $\paren{\frac{e}{e-1}}$-approximation (the optimal approximation ratio with 
sublinear in $m$ communication) in $r$ rounds using $k \cdot m^{\Omega(1/r)}$ (for some larger constant in the exponent). 

As stated in the introduction, our algorithm in this part is quite general and works for maximizing any monotone submodular function subject to a cardinality constraint (see Appendix~\ref{app:submodular} for definitions). 
Hence, in the following, we present our results in this more general form. 

\begin{theorem}\label{thm:dist-upper-e}
	There exists a randomized distributed algorithm for submodular maximization subject to cardinality constraint that for any ground set $V$ of size $m$, any monotone submodular function $f: 2^V \rightarrow \IR^{+}$, and any
	integer $r \geq 1$ and parameter $\eps \in (0,1)$, with high probability computes an $\paren{\frac{e}{e-1} + \eps}$-approximation in $r$ rounds while 
	communicating $O(k \cdot m^{O(1/\eps \cdot r)})$ items from $V$.
\end{theorem}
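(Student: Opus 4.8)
The plan is to simulate a \emph{thresholding greedy} algorithm for monotone submodular maximization in the coordinator model via a sample-and-prune scheme in the spirit of~\cite{KumarMVV13}, but with the per-round sampling rate increasing geometrically so that the simulation finishes within $r$ rounds. As a first reduction I would assume the algorithm is handed a value $v$ with $\opt\le v\le(1+\eps)\opt$: run the protocol in parallel for every $v$ in a geometric (ratio $1+\eps$) grid between $\max_{a\in V}f(a)$ and $k\cdot\max_{a\in V}f(a)$ — which bracket $\opt$ by monotonicity and subadditivity (Fact~\ref{fact:submodular-subadditive}), with $\max_{a}f(a)$ reported in the first round — and output the best feasible solution produced. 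This multiplies the communication by an $O(\tfrac{1}{\eps}\log k)$ factor, absorbed into the $m^{O(1/(\eps r))}$ bound.

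\textbf{Step 2 (the centralized reference algorithm and its guarantee).} The reference algorithm keeps a solution $S$ (initially $\emptyset$) and a threshold $\tau$ (initially $v/k$); while $\card{S}<k$ and $\tau\ge\eps v/k$, it adds to $S$ some $a$ with $f_S(a)\ge\tau$ if one exists, and otherwise multiplies $\tau$ by $(1-\eps)$. I would show $f(S)\ge(1-\tfrac1e-O(\eps))\opt$, which after rescaling $\eps$ gives the stated $(\tfrac{e}{e-1}+\eps)$ ratio. If the run stops with $\card{S}=k$: when $a_i$ is added at threshold $\tau$, either $\tau$ is still $v/k\ge\opt/k\ge(\opt-f(S_{i-1}))/k$, or $\tau$ was just lowered and hence $\tau\ge(1-\eps)\cdot(\text{current maximum marginal})\ge(1-\eps)(\opt-f(S_{i-1}))/k$ by submodularity; either way $f_{S_{i-1}}(a_i)\ge(1-\eps)(\opt-f(S_{i-1}))/k$, and unrolling this recursion gives $\opt-f(S)\le(1-\tfrac{1-\eps}{k})^k\opt\le e^{-(1-\eps)}\opt$. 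If the run stops because $\tau$ hit the floor, then no element has marginal $\ge\eps v/k$, so Fact~\ref{fact:submodular-greedy} applied to an optimal set yields $\opt\le f(S)+k\cdot\tfrac{\eps v}{k}\le f(S)+O(\eps)\opt$.

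\textbf{Step 3 (the distributed simulation).} Each machine $i$ maintains a set $V_i^{(1)}=V_i$ of surviving items and the coordinator maintains $(S,\tau)$. In round $j$: each machine puts each of its surviving items independently into the round's sample with probability $q_j$ and sends the sampled items to the coordinator; the coordinator continues the process of Step~2 using only the received items as its candidate pool until $\card{S}=k$ or $\tau$ reaches the floor, obtaining a state $(S^{(j)},\tau_j)$; it broadcasts $(S^{(j)},\tau_j)$ (only $O(k)$ items, since every machine has value-oracle access to $f$), and each machine sets $V_i^{(j+1)}=\{a\in V_i^{(j)}:f_{S^{(j)}}(a)\ge\tau_j\}$. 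The rates $q_1<\cdots<q_r$ are chosen to grow by a fixed factor $m^{\Theta(1/(\eps r))}$ with $q_r=1$ (feasible since $\eps\le1$), so the last round collects \emph{all} still-surviving items; I would then argue that the final $(S,\tau)$ is a valid complete run of the algorithm of Step~2 — the delicate point being that lowering $\tau$ based only on the sample in an early round must never discard (via the subsequent pruning) an item that the centralized run would later add, which is controlled by the schedule together with the shrinkage lemma below — so the guarantee of Step~2 transfers.

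\textbf{Step 4 (communication, the shrinkage lemma, and the main obstacle).} The communication in round $j$ is $O(q_j\cdot\card{\bigcup_i V_i^{(j)}})$ plus the $O(k)$-item broadcast, so everything reduces to a \emph{shrinkage lemma}: with high probability $\card{\bigcup_i V_i^{(j+1)}}=O\big((k/\eps+\log m)/q_j\big)$. The intuition is that $\bigcup_i V_i^{(j+1)}$ is precisely the set of surviving items whose marginal to $S^{(j)}$ is at least $\tau_j$, and the coordinator stopped lowering $\tau$ at $\tau_j$ only because \emph{no sampled} item met that threshold, so were this set much larger than $1/q_j$ the $q_j$-sample would almost surely have hit it and driven $\tau$ lower (the extra $1/\eps$ slack enters because thresholds shrink by only a $(1-\eps)$ factor). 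The main obstacle is that $\bigcup_iV_i^{(j+1)}$ is itself a function of the random sample through $S^{(j)}$, so a naive union bound over its possible values fails; I would handle this as in the sample-and-prune analysis of~\cite{KumarMVV13}, by exposing items in a uniformly random order and coupling the sampled greedy with the centralized greedy so that the two agree up to a stopping index that stochastically dominates a geometric random variable with parameter $\Theta(q_j)$, then closing with a Chernoff bound and a union bound over the at most $k$ greedy steps. Combining the lemma with $q_j\cdot\card{\bigcup_iV_i^{(j)}}\le q_j\cdot O\big((k/\eps+\log m)/q_{j-1}\big)=k\cdot m^{O(1/(\eps r))}$ bounds the communication of each round (hence of all $r$ rounds and all $O(\tfrac1\eps\log k)$ guesses of $v$), and a union bound over the $O(r\eps^{-1}\log k)$ invocations of the shrinkage lemma yields the high-probability guarantee.
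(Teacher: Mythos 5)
Your proposal assembles the right ingredients (a thresholding greedy, sample-and-prune, geometrically increasing sampling rates, and a shrinkage lemma proved by a "the sample would have hit a large surviving set" argument), and Steps 1, 2 and 4 are individually sound. But there is a genuine gap in Step 3, precisely at the point you flag as "the delicate point" and then defer to the shrinkage lemma: the shrinkage lemma controls \emph{communication}, not the correctness of lowering $\tau$. In your scheme the coordinator lowers $\tau$ whenever no \emph{sampled} item meets the current threshold, but many unsampled items may still have marginal above it (the shrinkage lemma only bounds their number by $O((k/\eps+\log m)/q_j)$, it does not make it zero). The approximation analysis of the thresholding greedy needs exactly the inequality $f_{S_{i-1}}(a_i)\ge\tau\ge(1-\eps)\cdot\max_{o\in\OPT\setminus S_{i-1}}f_{S_{i-1}}(o)$, and this requires certifying, at the moment the threshold is lowered past $\tau/(1-\eps)$, that \emph{no} item (sampled or not) has marginal $\ge\tau/(1-\eps)$. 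Without that certification the coordinator can fill its $k$ slots with items of marginal $(1-\eps)^{t}\tau_1$ for large $t$ while the high-marginal optimal items sit unsampled in the surviving pool, and the recursion $\opt-f(S)\le(1-\tfrac{1-\eps}{k})^k\opt$ no longer holds. (The per-round stopping rule is also underspecified: as written, "run until $\card{S}=k$ or $\tau$ reaches the floor" crashes $\tau$ to the floor already in round~1 on a sparse sample.)

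The fix, which is how the paper's \SPGreedy is structured, is to decouple the two schedules: there are $\ell=\Theta(1/\eps)$ threshold levels $\tau_j=\tfrac{\optp}{k}(1+\eps)^{-(j-1)}$ (with floor $\optp/(2ek)$ rather than $\eps v/k$, which both removes the extra $\log(1/\eps)$ levels and makes the below-threshold remainder exactly $\opt/e$), and \emph{each} level is given its own block of $s=\ceil{r/\ell}$ sampling rounds with rates $q_t=4k\log m/m^{1-t/s}$ ending at $q_s=1$. Because the last round of each block samples with probability $1$, every item whose marginal is at least $\tau_j$ is offered to the coordinator before $\tau$ is lowered to $\tau_{j+1}$, which is precisely the certification your scheme lacks; the shrinkage lemma (proved in the paper by a union bound over the $O(m^k)$ possible partial solutions rather than your random-order coupling, though either works) is then needed only to show that the probability-$1$ round, and all earlier rounds, communicate at most $O(k\cdot m^{1/s}\log m)$ items. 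With this restructuring the rest of your argument goes through.
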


\medskip
\noindent
\textbf{Remark:} We emphasize that the interesting case in Theorem~\ref{thm:dist-upper-e} is when $r = \Omega(1/\eps)$; otherwise, the communication cost guarantee of Theorem~\ref{thm:dist-upper-e} can be achieved by a trivial protocol that
communicates the whole input to the coordinator in just a single round. Consequently, in the following, we always assume that $r = \Omega(1/\eps)$.  

\medskip

Our algorithm follows the sample-and-prune technique of~\cite{KumarMVV13}. At each round, we sample a set of items from the machines and send them
to the coordinator. The coordinator then computes a greedy solution $X$ over the received sets and reports $X$ back to the machines. The machines then prune any item that 
cannot be added to this partial greedy solution $X$ and continue this process in the next rounds. At the end, the coordinator outputs $X$. By using a thresholding greedy algorithm
and a more careful analysis, we show that the dependence of the number of rounds on $\Omega(\log{\Delta})$ (where $\Delta$ is the ratio of maximum value of $f$ on any singleton set to its minimum
value) in~\cite{KumarMVV13} can be completely avoided, resulting in an algorithm with only constant number of rounds. 

We assume that the algorithm is given a value $\optp$ such that $\opt \leq \optp \leq 2 \cdot \opt$. In general, one can guess $\optp$ in powers of two in the range $\Delta$ to $k \cdot \Delta$
in parallel and solve the problem for all of them and return the best solution. This would increase the communication cost by only a factor of $\Theta(\log{k})$ (and one extra round of communication just to communicate $\Delta$ if it 
is unknown).  We now present our algorithm. 

\clearpage 

\textbox{Algorithm 1: \textnormal{Sample and Prune Greedy (\SPGreedy).}}{

\medskip

\textbf{Input:} A collection $V_i \subseteq V$ of items for each machine $i \in [p]$ and a value $\optp \in [\opt,2\cdot\opt]$.  

\smallskip

\textbf{Output:} A collection of $k$ items from $V$.  

\begin{enumerate} 
	\item Define the parameters $\ell:= \ceil{\lg_{(1+\eps)}{(2e)}} (= \Theta(1/\eps))$ and $s = \ceil{r/\ell}$. The algorithm consists of $\ell$ \emph{iterations} each with $s$ \emph{steps}. 
	
	\item\label{line:spgreedy-iteration} For $j = 1$ to $\ell$ iterations: 
	\begin{enumerate}
		\item Let $\tau_j = \frac{\optp}{k} \cdot \paren{\frac{1}{1+\eps}}^{j-1}$ and $X^{j,0} = X^{j-1,s}$ initially (we assume $X^{0,*} = \emptyset$). 
		\item\label{line:spgreedy-step} For $t = 1$ to $s$ steps: 
		\begin{enumerate}[(i)]
				\item\label{line:spgreedy-set-defined} Define $V^{j,t} = \set{a \in V \mid f_{X^{j,(t-1)}}(a) \geq \tau_j}$. 
				\item Each machine $i \in [p]$ samples each item in $V^{j,t} \cap V_i$ independently and with probability 
				%\begin{align*}
				$
					q_t := \begin{cases} 
					\frac{4k\log{m}}{m^{1-(t/s)}}~~~~&\text{if } t < s \\ 
					1 & \text{if } t=s
					\end{cases},$
				%\end{align*}
				 and sends them to the coordinator.  
				\item\label{line:spgreedy-added} The coordinator iterates over each received item $a$ (in an arbitrary order) and adds $a$ to $X^{j,t}$ iff ${f_{X^{j,t}}(a)} \geq \tau_j$. 
				\item The coordinator communicates the set $X^{j,t}$ to the machines. 
		\end{enumerate} 
	\end{enumerate} 
	\item The coordinator returns $X^{\ell,s}$ in the last step (if at any earlier point of the algorithm size of some $X^{*,*}$ is already $k$, the coordinator terminates the algorithm and outputs this set as the answer).
	\end{enumerate}
}

\SPGreedy requires $\ell = \Theta(1/\eps)$ iterations each consists of $s = \ceil{r/\ell}$ steps. Moreover, each step can be implemented in one round of communication. 
As such, the round complexity of this algorithm is simply $O(r)$ (as we assumed $r = \Omega(1/\eps)$). In the following, we prove a bound on the communication cost of this algorithm and then analyze its approximation guarantee. To do so, we need the following auxiliary lemma on the size
of each set $V^{j,t}$ in the algorithm. 

\begin{lemma}\label{lem:ub-e-set-cc}
		For any $j \in [\ell]$ and any $t \in [s]$, $\card{V^{j,t}} \leq m^{1-(t-1)/s}$ w.p. at least $1-1/m^{2k}$. 
\end{lemma}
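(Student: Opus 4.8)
The plan is to reduce the statement to a single union bound; the only genuine subtlety is that the ``surviving set'' $V^{j,t}$ is itself a function of the random sample drawn in step $t-1$. The case $t=1$ is immediate, since $V^{j,1}\subseteq V$ gives $\card{V^{j,1}}\le m=m^{1-(t-1)/s}$ with probability $1$. So fix $j\in[\ell]$, $t\in\set{2,\ldots,s}$, and condition on the whole execution of \SPGreedy up to the start of step $t-1$ of iteration $j$ --- equivalently, on the partial solution $X^{j,(t-2)}$ at that point, which already determines $V^{j,(t-1)}$ (and if $\card{X^{j,(t-2)}}\ge k$ the algorithm has terminated and there is nothing to prove). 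Conditioned on this, the only fresh randomness affecting $V^{j,t}$ is the step-$(t-1)$ sampling, which includes each item of $V^{j,(t-1)}$ independently with probability $q_{t-1}$; we may assume $q_{t-1}<1$, as otherwise all of $V^{j,(t-1)}$ is sampled and, by fact (ii) below, $V^{j,t}=\emptyset$. Since $2\le t\le s$, $q_{t-1}=4k\log m/m^{1-(t-1)/s}$, so with $L:=m^{1-(t-1)/s}$ we have $q_{t-1}\cdot L=4k\log m$.

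I would then record two deterministic facts, valid for every outcome of the step-$(t-1)$ sample. Fact (i): $V^{j,t}\subseteq V^{j,(t-1)}$, since $X^{j,(t-1)}\supseteq X^{j,(t-2)}$ and submodularity of $f$ only lowers marginals, so $f_{X^{j,(t-1)}}(a)\ge\tau_j$ forces $f_{X^{j,(t-2)}}(a)\ge\tau_j$. Fact (ii): no item sampled in step $t-1$ can lie in $V^{j,t}$; indeed, when the coordinator processes a sampled item $a$ it either adds it (so $f_{X^{j,(t-1)}}(a)=0$) or finds $f_Y(a)<\tau_j$ for the then-current partial solution $Y\subseteq X^{j,(t-1)}$, whence $f_{X^{j,(t-1)}}(a)\le f_Y(a)<\tau_j$ by submodularity. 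Now the circularity is broken as follows: conditioned on $X^{j,(t-2)}$, the solution $X^{j,(t-1)}$ equals $X^{j,(t-2)}$ together with at most $k$ sampled items (the algorithm caps the solution size at $k$), hence takes one of at most $\sum_{i=0}^{k}\binom{m}{i}\le 2m^k$ possible values; for each candidate value $Y$, on the event $\set{X^{j,(t-1)}=Y}$ the set $V^{j,t}$ is the fixed set $V^{j,t}(Y):=\set{a\in V: f_Y(a)\ge\tau_j}$, which by fact (i) is contained in $V^{j,(t-1)}$, and by fact (ii) the event entails that none of the items of $V^{j,t}(Y)$ was sampled --- an event of probability $(1-q_{t-1})^{\card{V^{j,t}(Y)}}$, since those items are sampled independently with probability $q_{t-1}$. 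Summing over the $\le 2m^k$ candidates $Y$ with $\card{V^{j,t}(Y)}>L$ and using $1-x\le e^{-x}$ with $q_{t-1}L=4k\log m$ then gives $\Pr[\,\card{V^{j,t}}>L \mid X^{j,(t-2)}\,]\le 2m^k(1-q_{t-1})^L\le 2m^k e^{-4k\log m}\le m^{-2k}$ (for $m\ge2$, $k\ge1$, with $\log$ the natural logarithm; a base-$2$ logarithm only helps), and since this holds for every value of $X^{j,(t-2)}$ it holds unconditionally.

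I expect the main obstacle to be exactly this circularity: since $V^{j,t}$ depends on the step-$(t-1)$ sample, one cannot directly say ``a large fixed target is missed by a random sample with tiny probability''; the remedy is to union-bound over the possible values of the coordinator's solution $X^{j,(t-1)}$, and this closes quantitatively only because the termination rule caps that solution at size $k$, so there are at most $2m^k$ of them --- dominated by the $m^{-4k}$ miss probability. The weaker a priori bound $\optp/\tau_j=\Theta(k)$ on the number of items that could be added during iteration $j$ carries too large a constant to make the count go through, so the hard cap of $k$ is essential here. The only remaining loose ends are the two degenerate cases already noted, $q_{t-1}\ge1$ and the algorithm having halted before step $(j,t)$, both of which make the bound trivially true.
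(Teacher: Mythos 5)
Your proposal is correct and follows essentially the same route as the paper: a union bound over the at most $O(m^k)$ possible values of the partial greedy solution at the end of step $t-1$, noting that each ``bad'' candidate (one whose surviving set exceeds $m^{1-(t-1)/s}$) can only be realized if the step-$(t-1)$ sample misses that entire surviving set, an event of probability at most $m^{-4k}$. Your treatment is somewhat more explicit than the paper's about the conditioning that breaks the circularity and about why no sampled item can remain in $V^{j,t}$, but these are elaborations of the same argument rather than a different one.
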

\begin{proof}
	Fix any iteration $j \in [\ell]$ and observe that $X^{j,0} \subseteq \ldots \subseteq X^{j,s}$. By submodularity of $f(\cdot)$, this implies that for any $a \in V$, 
	$f_{X^{j,0}}(a) \geq \ldots \geq f_{X^{j,s}}(a)$ and hence $V^{j,1} \supseteq V^{j,2} \supseteq \ldots \supseteq V^{j,s}$. 
	
	The bound in the lemma statement is trivially true for $t = 1$; hence, we prove it for any $t > 1$. To do so, we show that the collection $X^{j,t-1}$, computed at the end of the $(t-1)$-th step
	in iteration $j$, has the property that the corresponding collection $V^{j,t}$ (which is uniquely identified by $X^{j,t}$) has its size bounded as in the lemma statement. 
	
	Fix any set $A$ of up to $k$ items from $V$. We say that $A$ is \emph{bad} iff the set $V_A := \set{a \in V \mid f_A(a) \geq \tau_j}$ has 
	size more than $m^{1-(t-1)/s}$. For the set $X^{j,t}$ to be equal to $A$ at the end of the $(t-1)$-th step (in iteration $j$), necessarily no item from $V_A$ should be sampled
	by any of the machines in that round. As such, for any bad set $A \subseteq V$, 
	\begin{align*}
		\Pr\paren{X^{j,t} = A} \leq \paren{1-q_{t-1}}^{\card{V_A}} \leq \paren{1-\frac{4k\cdot\log{m}}{m^{1-(t-1)/s}}}^{m^{1-(t-1)/s}} \leq \exp\paren{-4k\cdot \log{m}} \leq m^{-4k}.
	\end{align*}
	Taking a union bound over $\sum_{i=1}^{k} {{m} \choose i} = O(m^{k})$ possible choices for a bad set $A$, the probability that any bad set $A$ is chosen as the set $X^{j,t}$ is 
	smaller than $1/m^{3k}$. Conditioned on this event, the set $V_{X^{j,t}}$ for the next round, i.e., the $t$-th round, has size at most $m^{1-(t-1)/s}$. Taking a union bound over all $j \in [\ell]$ and $t \in [s]$ 
	finalizes the proof. 
\end{proof}

It is now easy to bound the communication cost of this protocol. 

\begin{lemma}\label{lem:ub-e-cc}
	\SPGreedy communicates at most $O(r \cdot k \cdot m^{1/s} \cdot \log{m})$ items w.p. at least $1-1/m^{k}$.
\end{lemma}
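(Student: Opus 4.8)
The plan is to bound, for each of the $r$ rounds of the algorithm, the expected number of items sent to the coordinator, and then apply a concentration argument together with the high-probability size bound from Lemma~\ref{lem:ub-e-set-cc}. First I would fix an iteration $j \in [\ell]$ and a step $t \in [s]$. The only items communicated in this step are those sampled from $V^{j,t} \cap V_i$ across all machines $i \in [p]$, each independently with probability $q_t$. Since the $V_i$'s partition (a subset of) $V$, the total number of items communicated in this step is stochastically dominated by $\mathrm{Bin}(\card{V^{j,t}}, q_t)$. Conditioning on the event of Lemma~\ref{lem:ub-e-set-cc}, which holds w.p. at least $1 - 1/m^{2k}$, we have $\card{V^{j,t}} \leq m^{1-(t-1)/s}$, so the expected number of items sent in step $(j,t)$ is at most
\[
	m^{1-(t-1)/s} \cdot q_t \leq m^{1-(t-1)/s} \cdot \frac{4k\log m}{m^{1-(t/s)}} = 4k\log m \cdot m^{1/s},
\]
for $t < s$; for $t = s$ we have $q_s = 1$ and $\card{V^{j,s}} \leq m^{1-(s-1)/s} = m^{1/s}$, so the bound $m^{1/s}$ holds there too (even deterministically, on the conditioning event). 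Thus each step contributes $O(k \cdot m^{1/s} \cdot \log m)$ items in expectation.

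Next I would upgrade this expectation bound to a high-probability bound. For each step $(j,t)$ with $t < s$, the number of sampled items is a sum of independent indicator random variables, so a Chernoff bound applies: conditioned on the size bound $\card{V^{j,t}} \leq m^{1-(t-1)/s}$, the number of items sent exceeds, say, $8k \cdot m^{1/s}\log m$ with probability at most $\exp(-\Omega(k \cdot m^{1/s}\log m)) \leq m^{-2k}$ (using $m^{1/s} \geq 1$ and that the expectation is $\Theta(k \cdot m^{1/s}\log m) = \Omega(k\log m)$). For $t = s$ the count is at most $m^{1/s}$ deterministically on the conditioning event. Taking a union bound over all $\ell \cdot s = O(r)$ steps, and over the $O(r)$ events from Lemma~\ref{lem:ub-e-set-cc}, every step sends at most $O(k \cdot m^{1/s}\log m)$ items with probability at least $1 - O(r)\cdot m^{-2k} \geq 1 - 1/m^{k}$ (using that $m$ and $k$ are large enough and $r \leq m$). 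Summing over the $\ell s = O(r)$ steps gives a total of $O(r \cdot k \cdot m^{1/s} \cdot \log m)$ communicated items, as claimed. One should also note that the coordinator's broadcasts of $X^{j,t}$ to the machines send at most $k$ items each and contribute only $O(rk)$, which is absorbed.

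The main obstacle — though a mild one — is handling the conditioning cleanly: the size bound of Lemma~\ref{lem:ub-e-set-cc} on $\card{V^{j,t}}$ and the sampling randomness in step $(j,t)$ are not independent, since $V^{j,t}$ is determined by $X^{j,t-1}$, which itself depends on earlier sampling. The clean way around this is to condition on the entire history up through the end of step $(j,t-1)$ (which fixes $X^{j,t-1}$ and hence $V^{j,t}$), observe that on the good event this fixed set has size at most $m^{1-(t-1)/s}$, and then apply the Chernoff bound to the fresh independent coins of step $(j,t)$; the union bound over steps then goes through since each bad event has probability $\leq m^{-2k}$ regardless of the history. I would also remark that because $s = \ceil{r/\ell} = \Theta(\eps r)$, we have $m^{1/s} = m^{O(1/(\eps r))}$, so this lemma immediately yields the communication bound $O(k \cdot m^{O(1/(\eps r))})$ claimed in Theorem~\ref{thm:dist-upper-e} after absorbing the $\log m$ and $r$ factors into the exponent.
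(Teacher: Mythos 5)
Your proposal is correct and follows essentially the same route as the paper: condition on the size bound from Lemma~\ref{lem:ub-e-set-cc}, compute the per-step expectation $m^{1-(t-1)/s}\cdot q_t = 4k\log m\cdot m^{1/s}$, apply a Chernoff bound, union bound over the $O(r)$ steps, and absorb the coordinator's $O(k)$ items per step. Your explicit handling of the dependence between the size bound and the sampling coins (conditioning on the history to fix $V^{j,t}$ before applying Chernoff) is a detail the paper glosses over, but it is not a different argument.
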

\begin{proof}
		We condition on the event in Lemma~\ref{lem:ub-e-set-cc}. As such, for each iteration $j \in [\ell]$ and each step $t \in [s]$ in this iteration, $V^{j,t}$ is of size $m^{1-(t-1)/s}$ at most. 
		Consequently, the total number of items sampled by the machines in step $t$ is in expectation at most $m^{1-(t-1)/s} \cdot q_t = m^{1/s} \cdot 4k\log{m}$. This means that, by Chernoff
		bound, w.p. at least $1-1/m^{2k}$, at most $O(m^{1/s} \cdot k\log{m})$ items are communicated by each machine in this step. The coordinator also communicates at most $k$ items 
		to each machine in each step. The bound in the lemma statement now follows by taking a union bound over all $O(r)$ iterations and steps. 
\end{proof}

We now prove a bound on the quality of the returned solution. 

\begin{lemma}\label{lem:ub-e-apx}
	Suppose $X$ is the set returned by $\SPGreedy$; then, $f(X) \geq (1-1/e-\eps) \cdot \opt$. 
\end{lemma}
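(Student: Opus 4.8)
The plan is to prove the approximation bound \emph{deterministically}: the random sampling only affects the communication cost (Lemma~\ref{lem:ub-e-set-cc}), while the coordinator's greedy additions guarantee the returned set is always feasible and built with controlled marginals. I would first fix notation: let $X$ be the returned set, write $X=\{a_1,\dots,a_t\}$ in order of insertion, set $X_i:=\{a_1,\dots,a_i\}$ and $\delta_i:=f_{X_{i-1}}(a_i)$, so $f(X)=\sum_{i=1}^t\delta_i$ and $t=\card{X}\le k$ (with $t=k$ whenever the early-termination clause fires). The argument then splits on whether $t=k$ or $t<k$.

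The core step — the one doing the real work — is the per-element estimate $\delta_i \ge \dfrac{\opt-f(X_{i-1})}{(1+\eps)k}$ for every $i\in[t]$. To prove it, let $j$ be the iteration in which $a_i$ was inserted; since the coordinator only adds an item whose current marginal is $\ge\tau_j$, we have $\delta_i\ge\tau_j$. If $j=1$, then $\delta_i\ge\tau_1=\optp/k\ge\opt/k$, which already dominates the right-hand side because $f\ge 0$. If $j\ge 2$, I would use that, because $q_s=1$, in the last step of iteration $j-1$ the coordinator processed \emph{all} of $V^{j-1,s}$, so at the end of iteration $j-1$ every $a\in V$ satisfies $f_{Y}(a)<\tau_{j-1}=(1+\eps)\tau_j$, where $Y:=X^{j-1,s}$. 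Since $Y\subseteq X_{i-1}$, submodularity gives $f_{X_{i-1}}(o)\le f_Y(o)<(1+\eps)\tau_j\le(1+\eps)\delta_i$ for every $o\in\OPT$; combining with Fact~\ref{fact:submodular-greedy} and $\card{\OPT}=k$, $\opt=f(\OPT)\le f(X_{i-1})+\sum_{o\in\OPT}f_{X_{i-1}}(o)<f(X_{i-1})+(1+\eps)k\,\delta_i$, which rearranges to the claim.

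In the case $t=k$, the estimate yields the standard recursion $\opt-f(X_i)\le\bigl(1-\tfrac{1}{(1+\eps)k}\bigr)\bigl(\opt-f(X_{i-1})\bigr)$; starting from $\opt-f(X_0)\le\opt$ (as $f(X_0)=f(\emptyset)\ge 0$) and iterating $k$ times gives $\opt-f(X)\le\opt\bigl(1-\tfrac{1}{(1+\eps)k}\bigr)^k\le\opt\,e^{-1/(1+\eps)}\le\opt\,e^{-(1-\eps)}\le(1/e+\eps)\opt$, using $1-x\le e^{-x}$, $\tfrac{1}{1+\eps}\ge 1-\eps$, and convexity of $e^{\eps}$ on $[0,1]$; hence $f(X)\ge(1-1/e-\eps)\opt$.

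In the case $t<k$, the algorithm ran all $\ell$ iterations without ever reaching size $k$, so $X=X^{\ell,s}$, and again the $q_s=1$ step forces every $a\in V$ to have $f_X(a)<\tau_\ell$. Then Fact~\ref{fact:submodular-greedy} gives $\opt\le f(X)+\sum_{o\in\OPT}f_X(o)<f(X)+k\tau_\ell$, and $k\tau_\ell=\optp(1+\eps)^{-(\ell-1)}$. The choice $\ell=\ceil{\lg_{1+\eps}(2e)}$ guarantees $(1+\eps)^\ell\ge 2e$, so $(1+\eps)^{\ell-1}\ge 2e/(1+\eps)$ and, using $\optp\le 2\opt$, $k\tau_\ell\le\optp(1+\eps)/(2e)\le(1+\eps)\opt/e$ — this is exactly why the threshold range has to be stretched by the extra factor $2$. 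Therefore $f(X)>\opt-(1+\eps)\opt/e=(1-1/e-\eps/e)\opt\ge(1-1/e-\eps)\opt$, completing the proof. I expect the only delicate point to be the bookkeeping around the final ($q_s=1$) step of each iteration — checking that it genuinely leaves no item with marginal $\ge\tau_j$, so that the subsampled process behaves like an idealized thresholding greedy; the rest is the classical submodular greedy recursion.
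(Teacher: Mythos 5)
Your proposal is correct and follows essentially the same route as the paper: the same split on $\card{X}=k$ versus $\card{X}<k$, the same key per-element bound $f_{X^{<i}}(x_i)\ge\frac{\opt-f(X^{<i})}{(1+\eps)k}$ established via $\tau_j$ and the fully-sampled ($q_s=1$) last step of the preceding iteration, and the same geometric recursion. The only cosmetic difference is that for $j\ge 2$ you bound the marginals of all of $\OPT$ uniformly by $\tau_{j-1}$ and apply Fact~\ref{fact:submodular-greedy} directly, whereas the paper isolates the maximum-marginal element $o^*$ and combines Fact~\ref{fact:submodular-subadditive} with Fact~\ref{fact:submodular-greedy}; both yield the identical inequality.
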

\begin{proof}
	We first argue that if the set $X$ has size $<k$ then $f(X) \geq (1-1/e) \cdot \opt$ already; note that in this case, $X = X^{\ell,s}$. 
	Let $\OPT$ be an optimal solution and consider any item $o \in \OPT$ that was never picked by the coordinator to be added to $X$; this in particular 
	means that $o$ was not added to $X^{\ell,s}$ which implies,
	 \begin{align}
	 f_X(o) = f_{X^{\ell,s}}(o) < \tau_{\ell} = \frac{\optp}{k} \cdot \paren{\frac{1}{1+\eps}}^{\ell} = \frac{\optp}{2e \cdot k} \leq \frac{\opt}{e\cdot k} \label{eq:ub-m-small-items}.
	\end{align}
	The first inequality in Eq~(\ref{eq:ub-m-small-items}) holds because in step $s$ of each iteration, \emph{every} item $a \in V$ with $f_{X^{\ell,s-1}}(a) \geq f_{X^{\ell,s}}(a)$ (by submodularity) is sent to the coordinator and hence 
	if $f_{X^{\ell,s}}(o) \geq \tau_{\ell}$ the coordinator would be able to find it and add it to $X^{\ell,s}$. 
	The next two equalities are by the choices of $\tau_{\ell}$ and $\ell$, respectively, and the last inequality is true since $\optp \leq 2\opt$. Using this bound
	and the monotone submodularity of $f(\cdot)$, we can write, 
	\begin{align*}
		f(\OPT) \Leq{Fact~\ref{fact:submodular-greedy}} f(X) + \sum_{o \in \OPT \setminus X} f_X(o) \Leq{Eq~(\ref{eq:ub-m-small-items})} f(X) + \card{\OPT}\cdot \frac{\opt}{e \cdot k} = f(X) + \opt/e
	\end{align*}
	as $\card{\OPT} = k$, which finalizes the proof in this case. 
	
	We now consider the more involved case where the coordinator picks exactly $k$ items in $X$. To continue, we need the following definitions. 
	Let $x_1,\ldots,x_k$ be the items added to $X$ by the coordinator in this particular order. For any $i \in [k]$, define $X^{<i} = x_1,\ldots,x_{i-1}$, i.e., the first $i-1$ items added to $X$ (define $X^{<1} = \emptyset$).
	We have, 
	\begin{claim}\label{clm:ub-e-apx-contribution}
		For any $i \in [k]$, 
		\begin{align*}
			f_{X^{<i}}(x_i) \geq \frac{f(\OPT) - f(X^{<i})}{(1+\eps) \cdot k}.
		\end{align*}
	\end{claim}
	\begin{proof}
		For any item $x_i$ for $i \in [k]$, by construction of \SPGreedy, if $i$ is added in iteration $j \in [\ell]$ to $X$, then, 
		\begin{align}
			f_{X^{<i}}(x_i) \geq \tau_j \label{eq:ub-m-added}.
		\end{align}
		
		Suppose first that the item $x_i$ is added to $X$ in the first iteration. By the above equation, 
		\begin{align*}
			f_{X^{<i}}(x_i) \Geq{Eq~(\ref{eq:ub-m-added})} \tau_1 = \frac{\optp}{(1+\eps) \cdot k} \geq \frac{\opt}{(1+\eps) \cdot k},
		\end{align*}
		by the bounds on $\tau_1$ and $\optp$. This proves the lemma for any item $x_i$ that is added to $X$ in the first iteration. 
		Now suppose $x_i$ is added in the iteration $j > 1$. 
		
		Consider the item $o^* \in \OPT \setminus X^{<i}$ with the maximum marginal contribution to $f_{X^{<i}}$. Recall that since $f(\cdot)$ is submodular, by Fact~\ref{fact:submodular-subadditive}, $f_{X^{<i}}(\cdot)$ is 
		subadditive. We have,
		\begin{align}
			f_{X^{<i}}(o^*) &= \max_{o \in \OPT \setminus X^{<i}} f_{X^{<i}}(o) 
			\Geq{Fact~\ref{fact:submodular-subadditive}} \frac{1}{k} \cdot \sum_{o \in \OPT \setminus X^{<i}} f_{X^{<i}}(o) 
			\Geq{Fact~\ref{fact:submodular-greedy}} \frac{1}{k} \cdot \paren{f(\OPT) - f(X^{<i})} \label{eq:ub-m-o*}
		\end{align}
		On the other hand, we also know that $o^*$ does not belong to $X^{<i}$, meaning that it was not added to $X^{<i}$ at least by end of iteration $j-1$ (since $x_i$ is added to $X^{<i}$ in iteration $j$). 
		Hence, again by construction of \SPGreedy, similar to the case in Eq~(\ref{eq:ub-m-added}), 
		\begin{align}
		f_{X^{<i}}(o^*) < \tau_{j-1}. \label{eq:ub-m-not-added}
		\end{align} 
		
		Finally,
		\begin{align*}
			f_{X^{<i}}(x_i) \Geq{Eq~(\ref{eq:ub-m-added})} \tau_j \geq \frac{1}{(1+\eps)} \cdot \tau_{j-1} 
			\Geq{Eq~(\ref{eq:ub-m-not-added})} \frac{1}{(1+\eps)} \cdot f_{X^{<i}}(o^*) 
			\Geq{Eq~(\ref{eq:ub-m-o*})} \frac{f(\OPT) - f(X^{<i})}{(1+\eps) \cdot k},
		\end{align*}
		finishing the proof. 
	\end{proof}
	
	We can now finalize the proof of Lemma~\ref{lem:ub-e-apx} as follows,
	\begin{align*}
		f(\OPT) - f(X) &= f(\OPT) - f(X^{<k}) - f_{X^{<k}}(x_k) \tag{by definition of $f_{X^{<k}}(x_k)$} \\
		&\Leq{Claim~\ref{clm:ub-e-apx-contribution}} f(\OPT) - f(X^{<k}) - \frac{f(\OPT) - f(X^{<k})}{(1+\eps) \cdot k} \\
		&= \paren{1-\frac{1}{(1+\eps)\cdot k}} \cdot \paren{f(\OPT) - f(X^{<k})} \\
		&\leq \paren{1-\frac{1}{(1+\eps)\cdot k}}^{k} \cdot \paren{f(\OPT) - f(X^{<1})} \tag{by applying Claim~\ref{clm:ub-e-apx-contribution} recursively} \\
		&\leq \paren{1/e + \eps} \cdot f(\OPT)
	\end{align*}
	as $f(X^{<1}) = 0$ since $X^{<1} = \emptyset$ by definition. This implies that $f(X) \geq (1-1/e - \eps) \cdot \opt$. 
\end{proof}

Theorem~\ref{thm:dist-upper-e} now follows immediately from Lemma~\ref{lem:ub-e-cc} and Lemma~\ref{lem:ub-e-apx}. 

\medskip

We conclude this section by proving the following corollary of Theorem~\ref{thm:dist-upper-e} for the maximum coverage problem, which formalizes
the first part of Result~\ref{res:dist-upper}. The proof is a direct application of Theorem~\ref{thm:dist-upper-e} plus the known sketching methods for coverage functions in~\cite{McGregorV17,BateniEM17}
to further optimize the communication cost. %We defer the proof to Appendix~\ref{app:cor-dist-upper-e}. 

\begin{corollary}\label{cor:dist-upper-e}
	There exists a randomized distributed algorithm for the maximum coverage problem that for any integer $r \geq 1$, and any parameter $\eps \in (0,1)$, with high probability 
	computes an $\paren{\frac{e}{e-1} + \eps}$-approximation in $r$ rounds and $\Ot(\frac{k}{\eps^4} \cdot m^{O(1/\eps \cdot r)} + n)$ total communication. 
\end{corollary}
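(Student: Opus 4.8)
The plan is to observe that the algorithm of Theorem~\ref{thm:dist-upper-e}, applied to the coverage function $f(\CC)=|c(\CC)|$ (which is monotone submodular), already gives an $(\frac{e}{e-1}+\eps)$-approximation in $r$ rounds while communicating $O(k\cdot m^{O(1/\eps r)})$ \emph{sets}; the only thing the corollary adds is replacing the $O(n)$-bit cost of sending a set by a $\poly(1/\eps,\log m,\log n)$-bit \emph{coverage sketch}. Throughout we run \SPGreedy with accuracy parameter $\eps/4$ and, exactly as in Section~\ref{sec:dist-m}, assume a guess $\optp\in[\opt,2\opt]$; guessing $\optp$ in powers of two over $[1,n]$ and returning the best solution adds only an $O(\log n)$ factor, absorbed into the $\Ot(\cdot)$.

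The first step is to set up and analyze the sketch. Using public randomness, all machines and the coordinator agree on a subsampled universe $L\subseteq[n]$ obtained by including each element independently with probability $p:=\min\{1,\ \Theta(k\log m/(\eps^2\optp))\}$; machine $i$ then replaces each of its sets $S$ by $S\cap L$, and the protocol is run on the instance with coverage function $\tilde f(\CC):=|c(\CC)\cap L|$, again monotone submodular. The key estimate is a Chernoff-plus-union-bound argument of the type used in \cite{McGregorV17,BateniEM17}: since $\max_i|S_i|\le\opt$ and any subfamily of at most $k$ sets covers at most $\opt\le n$ elements, the choice of $p$ guarantees that with high probability, \emph{simultaneously} for every subfamily $\CC$ with $|\CC|\le k$ and $|c(\CC)|\ge\eps\,\opt/(2k)$, one has $\tfrac1p|c(\CC)\cap L|=(1\pm\eps/4)|c(\CC)|$, while subfamilies covering fewer than $\eps\,\opt/(2k)$ elements contribute only an $O(\eps\,\opt)$ additive error and may be ignored. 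In particular, every marginal value $f_X(a)=|S_a\setminus c(X)|$ that arises during \SPGreedy is preserved up to a multiplicative $(1\pm\eps/4)$ with an $O(\eps\optp/k)$ additive slack — and since each $X^{j,t}$ is one of at most $m^{O(k)}$ possible subfamilies, the union bound still goes through — which is exactly the precision that Lemma~\ref{lem:ub-e-apx} and Lemma~\ref{lem:ub-e-cc} require.

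With the sketch in place the rest is bookkeeping. Running \SPGreedy on $(\{S\cap L\},\tilde f)$ with parameter $\eps/4$ and the guess $p\optp$, Lemma~\ref{lem:ub-e-apx} gives an output $X$ whose $\tilde f$-value is at least a $(1-1/e-\eps/4)$ fraction of the optimum of $\tilde f$, which is itself at least $p(1-\eps/4)\opt$; undoing the sketch yields $f(X)\ge\tfrac1p(1-\eps/4)\tilde f(X)\ge(1-1/e-\eps)\opt$, i.e.\ an $(\frac{e}{e-1}+\eps)$-approximation after rescaling $\eps$. For communication, Lemma~\ref{lem:ub-e-cc} bounds the number of transmitted sets by $\Ot(k\cdot m^{O(1/\eps r)})$, and each transmitted set, once intersected with $L$ (and, to obtain the precise $1/\eps^4$ dependence, after the additional within-set subsampling of \cite{BateniEM17,McGregorV17}), has only $\poly(1/\eps)\cdot\polylog(m,n)$ relevant elements, each described in $O(\log n)$ bits; the additive $n$ term covers the degenerate regime $p=1$ (i.e.\ $\optp=\Ot(k/\eps^2)$) where $L$ is not a genuine subsample. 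Combining these bounds with the $O(\log n)$ guessing overhead gives the claimed $\Ot(\tfrac{k}{\eps^4}\cdot m^{O(1/\eps r)}+n)$ total communication. The main obstacle is the sketch-correctness step: arguing that universe subsampling preserves all relevant coverage values \emph{uniformly} over the adaptively chosen sets $X^{j,t}$ encountered during the protocol, which is handled by the $m^{O(k)}$-way union bound described above.
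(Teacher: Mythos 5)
Your first two steps match the paper's proof: apply Theorem~\ref{thm:dist-upper-e} to the coverage function, and use the universe-subsampling sketch of~\cite{McGregorV17,BateniEM17} (Lemma~\ref{lem:mcgregorv17} in the paper) with rate $\Theta(k\log m/(\eps^2\cdot\opt))$ so that each communicated set costs $\Ot(k/\eps^2)$ bits instead of $\Ot(n)$; note also that this lemma already holds simultaneously for all $k$-subfamilies, so your extra $m^{O(k)}$ union bound over the adaptively chosen partial solutions is not where the difficulty lies. The genuine gap is the last step. After universe subsampling you have $\Ot(k\cdot m^{O(1/\eps r)})$ transmitted sets, each of size $\Ot(k/\eps^2)$, i.e.\ total communication $\Ot(\tfrac{k^2}{\eps^2}\cdot m^{O(1/\eps r)})$ --- a factor of roughly $k$ more than the corollary claims. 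You attribute the missing improvement to an ``additional within-set subsampling of~\cite{BateniEM17,McGregorV17}'', but you never specify that sketch or why it preserves correctness; in particular, subsampling elements independently \emph{within} each set (as opposed to subsampling the universe once, consistently across all sets) does not preserve coverage values of unions, so this step cannot be waved through.

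The paper closes this gap with a protocol modification rather than a stronger sketch. It runs the algorithm with cardinality budget $(1-\eps)k$, and in each step each machine first sends $\eps k/(2r)$ uniformly random sets from its candidate collection to the coordinator; these are added to an auxiliary solution whose covered elements are broadcast back, and the machines delete those elements before transmitting the rest of their candidate collections. Any element occurring in more than an $O(r\log m/(\eps k))$ fraction of a machine's candidate sets is covered by the auxiliary solution w.h.p., so the surviving sets have average size $O(r\log m/\eps^3)=\Ot(1/\eps^4)$ (using $r=O(\log m/\eps)$), which yields the $\Ot(\tfrac{k}{\eps^4}\cdot m^{O(1/\eps r)})$ term; the auxiliary solution contains at most $\eps k$ sets, so its union with the main solution is still a feasible $k$-cover and the approximation analysis of Lemma~\ref{lem:ub-e-apx} goes through on the residual universe. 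Your proposal is missing this idea (or a worked-out substitute for it), so as written it only establishes the corollary with $k^2/\eps^2$ in place of $k/\eps^4$.
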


\newcommand{\VV}{\ensuremath{\mathcal{V}_{\textnormal{\textsf{S}}}}}
\newcommand{\YY}{\ensuremath{\mathcal{Y}}}
%\subsection{Proof of Corollary~\ref{cor:dist-upper-e}}\label{app:cor-dist-upper-e}

\begin{proof}
	Coverage functions are submodular (see Appendix~\ref{app:submodular}) and hence by assigning an item $a_S$ to $V$ for each set $S \in \SS$ in the input collection of the machines, 
	we can directly use the \SPGreedy algorithm for maximum coverage; we only need to communicate the set $S$ as a whole so that the value of $f(\cdot)$ on any set of items for $f$ (i.e., collection of sets in coverage problem) can 
	be computed locally, without any further communication. However, as each set requires (potentially) $\Theta(n)$ bits to represent, the total communication cost of this direct implementation
	is $\Ot(k \cdot m^{O(1/\eps \cdot r)} \cdot n)$, roughly a factor $n$ worse than the bounds in the corollary statement. 
	
	To achieve the bounds in Corollary~\ref{cor:dist-upper-e}, we can use a \emph{sketch} of each set $S \in \SS$ instead of communicating the whole set $S$. 
	In particular, it was shown in~\cite{McGregorV17} (see also~\cite{BateniEM17}) that\footnote{We note the result in~\cite{McGregorV17} works even when the sampling is performed using \emph{limited independence} rather than
	the full independence we state in the Lemma~\ref{lem:mcgregorv17}; however, as we do not need this additional feature, we state the simpler version.},
	
	\begin{lemma}[\!\cite{McGregorV17}]\label{lem:mcgregorv17}
		Suppose $U'$ is a subset of $[n]$ chosen by picking each
		element in $[n]$ independently and w.p. $q = O\paren{\frac{k\log{m}}{\eps^2 \cdot \opt}}$ and define $\SS' = \set{S \cap U \mid S \in \SS}$; then, with high probability, 
		for all collection of $k$ sets $S_1,\ldots,S_k$ in $\SS$ and their corresponding sets $S'_1,\ldots,S'_k$ in $\SS'$, 
		\[\card{S'_1 \cup \ldots \cup S'_k} = \card{S_1 \cup \ldots \cup S_k} \cdot q \pm \eps \cdot q \cdot \opt.\]  
	\end{lemma}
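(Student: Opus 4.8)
The plan is the standard subsampling argument: establish the estimate for a single fixed collection of $k$ sets via a concentration bound, and then union bound over all collections. First dispose of the degenerate case: if the prescribed rate satisfies $q \ge 1$, simply take $U' = [n]$, so that $\SS' = \SS$ and the identity holds exactly with zero error; hence assume from now on that $q < 1$, i.e.\ $\opt = \Omega\paren{k\log m/\eps^2}$. Fix any $k$ sets $S_1,\ldots,S_k \in \SS$ and let $T := S_1 \cup \cdots \cup S_k$ and $S'_i := S_i \cap U'$. Since $T$ is the union of $k$ input sets, $\card{T} \le \opt$ by the very definition of $\opt$ as the maximum $k$-cover. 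For each $e \in T$ let $X_e \in \{0,1\}$ be the indicator of $e \in U'$; the $X_e$ are independent $\mathrm{Bernoulli}(q)$ variables and $X := \sum_{e \in T} X_e = \card{T \cap U'} = \card{S'_1 \cup \cdots \cup S'_k}$ with $\Exp[X] = q\card{T}$.

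Now apply a Bernstein-type tail bound for a sum of independent $[0,1]$ random variables: the variance of $X$ is at most its mean $q\card{T} \le q\opt$, and the target deviation $\eps q\opt$ is of the same order, so Bernstein's inequality yields, for an absolute constant $c > 0$,
\[
	\Pr\paren{\card{X - q\card{T}} \ge \eps q \opt} \;\le\; 2\exp\paren{-c\,\eps^2 q \opt},
\]
where the step that collapses $\lambda^2/(\mathrm{var}+\lambda)$ into $\Omega(\eps^2 q\opt)$ uses precisely $\card{T}\le\opt$. Choosing $q = \Theta\paren{k\log m/(\eps^2\opt)}$ with a sufficiently large hidden constant makes the right-hand side at most, say, $m^{-4k}$. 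Finally, the union $S_1\cup\cdots\cup S_k$ depends only on the underlying (unordered, repetition-free) subcollection of $\SS$ of size at most $k$, of which there are at most $\binom{m}{k} \le m^{k}$; a union bound gives that with probability at least $1 - m^{-3k}$ the displayed estimate holds simultaneously for all such collections, which is exactly the assertion of the lemma (with room to spare in the failure probability).

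The argument is essentially routine; the only two points needing a little care are the degenerate regime $q \ge 1$ handled above, and the fact that the promised error is \emph{additive} ($\eps q\opt$) rather than multiplicative in $\card{T}$ — this is why one should invoke an additive/Bernstein-style bound, in which the deviation enters the exponent linearly, so that $\card{T}\le\opt$ suffices to make the tail $\exp(-\Omega(\eps^2 q\opt))$ uniformly over all collections, no matter how small $\card{T}$ happens to be. The concentration and union-bound bookkeeping aside, there is no real obstacle. (This lemma is due to~\cite{McGregorV17}; the sketch above is included only for completeness, and the same estimate is implicit in~\cite{BateniEM17}.)
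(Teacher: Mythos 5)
Your argument is correct. Note, however, that the paper does not prove this lemma at all: it is imported verbatim from~\cite{McGregorV17} (see also~\cite{BateniEM17}), so there is no in-paper proof to compare against. Your write-up is the standard sample-and-union-bound argument underlying the cited result -- the key observations being $\card{S_1\cup\cdots\cup S_k}\le \opt$ so that a Bernstein/additive Chernoff bound gives a uniform tail of $\exp(-\Omega(\eps^2 q\,\opt)) = m^{-\Omega(k)}$, followed by a union bound over the at most $m^k$ subcollections -- and it is sound, including the handling of the degenerate regime $q\ge 1$ and the reading of $q$ as $\Theta(k\log m/(\eps^2\opt))$ with a sufficiently large constant.
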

	
	By Lemma~\ref{lem:mcgregorv17}, we can first perform a sampling step to reduce the size of the universe, while ensuring that the returned solution on the subsampled
	universe is still an $\paren{\frac{e}{e-1}+O(\eps)}$ approximation of the original instance. As size of each set $S$ in the original
	instance is clearly $O(\opt)$, after the sampling, the set $\card{S'} = O(\frac{k}{\eps^2}\log{m})$ w.h.p. This step already ensures that we need at most $\Ot(k/\eps^2)$ bits to communicate each set
	as opposed to $\Ot(n)$. 
	
	To shave off another factor of $k$ in the communication, we need to modify $\SPGreedy$ slightly. Firstly, we run $\SPGreedy$ with the parameter $k' = (1-\eps)\cdot k$ instead of the original $k$. 
	Additionally, at each step $t$ in $\SPGreedy$, the machines compute the sample collection of sets to send to the coordinator as before. Let us denote this
	collection by $\VV^{t}$ ($\textnormal{\textsf{S}}$ stands for sample here). Instead of sending $\VV^{t}$ to the coordinator directly, the machines first sample a collection of 
	$\eps \cdot \frac{k}{2r}$ uniformly at random chosen sets from $\VV^{t}$ and communicate them to the coordinator.  The coordinator adds these sets to a collection $\YY$ (maintained throughout the algorithm
	similar to the partial solution $\XX$, corresponding to set of items $X$ in the submodular maximization notation), and communicates back the elements in these sets to each machine. 
	The machines then remove any element from the universe that is covered by these sets; then, they send the collection $\VV^{t}$ in $\SPGreedy$ (after removing the mentioned elements) to the coordinator. 
	The rest of the protocol is exactly as before. At the end, the coordinator outputs $\YY \cup \XX$ as the solution. 
	Notice that this change doubles the number of rounds in $\SPGreedy$ as each step now requires two rounds of communication. 
	
	The correctness of the algorithm follows exactly as before since $\YY$ can contain at most $\eps \cdot k$ sets and $\XX$ has size $k' = (1-\eps)\cdot k$ and hence $\YY \cup \XX$ is a valid $k$-cover of 
	the universe. Moreover, the same exact argument in Theorem~\ref{thm:dist-upper-e} ensures that  $\XX$ covers at least $(1-1/e - \eps)$ of the optimal $k'$-cover on $[n] \setminus c(\YY)$, and hence 
	is a $\paren{\frac{e}{e-1}+O(\eps)}$-approximate $k$-cover also over $[n] \setminus c(\YY)$ (as the best $k'$-cover for $k' = (1-\eps) \cdot k$ is a $(1+\eps)$ approximation of best $k$-cover). Since anything in
	$c(\YY)$ is covered by $\YY$, the algorithm achieves a $\paren{\frac{e}{e-1}+O(\eps)}$-approximation. 
	
	Finally,  we bound the communication cost of this algorithm. After sending the additional samples from each $\VV^{t}$, we know that 
	any element in the universe that appears in more than $(r/\eps k) \cdot O(\log{m})$ fraction of the candidate sets in $\VV^{t}$ is being covered by $\YY$ with high probability and hence 
	after removing $c(\YY)$ from the universe, size of each set in $\VV^{t}$ is now in average only $O(r/\eps^3 \cdot \log{m}) = \Ot(1/\eps^4)$ (since
	$r = O(\log{m}/\eps)$, as we never need to run the algorithm for more than that many rounds). This means that the total number of bits needed to communicate $\VV^{t}$ is now  $\Ot(1/\eps^4) \cdot \card{\VV^{t}}$
	bits, which finalizes the proof. 
\end{proof}

\section{Applications to Other Models of Computation}\label{sec:applications}

We discuss the applications of our results to maximum coverage (and submodular maximization) in the dynamic streaming model and the MapReduce framework introduced in Section~\ref{sec:results}. 
We finish the section by making a remark about the role of partitioning of the input in the distributed model. 

\subsection{Maximum Coverage in Dynamic Set Streams}\label{sec:dynamic}

We first define the dynamic set streaming model formally. The definition is a straightforward extension of the set streaming model
introduced by Saha and Getoor~\cite{SahaG09} (see also~\cite{EmekR14}) to dynamic streams similar to dynamic graph streams~\cite{AhnGM12}. 
Indeed, if we consider the maximum coverage as a \emph{hypergraph} problem, i.e., picking $k$ hyperedges to cover the most number of vertices (similar to~\cite{EmekR14} for streaming set cover),
then our notion of dynamic set streams is exactly the same as \emph{dynamic hypergraph streams} in~\cite{GuhaMT15}. 

\begin{definition}\label{def:dynamic-stream}
	A dynamic set stream $\FF_n = \langle a_1, a_2, \ldots, a_t \rangle$ defines a set-system $\SS$ over $[n]$. Each $a_i$ is a tuple $(S_i,\Delta_i)$ where $S_i \subseteq [n]$ and $\Delta_i \in \set{-1,+1}$. 
	The multiplicity of a set $S \subseteq [n]$ is defined as: 
	\begin{align*}
		\FF(S) := \sum_{a_i: S_i = S} \Delta_i.
	\end{align*}
	The multiplicity of every set is required to be always non-negative during the stream. We use the $(2^{n})$-dimensional vector $f$ to denote the vector of multiplicities of the sets seen in the stream.  
\end{definition}

All known algorithms for all problems in dynamic streams (not only dynamic set streams) have a similar
form: they first choose a (possibly random) integer matrix $A$ and maintain the \emph{linear sketch} $A \cdot f$ in the stream. At the end 
of the stream, they use $A \cdot f$ to compute the answer. It was shown by~\cite{LiNW14} that this is not a coincidence; any one pass streaming algorithm for approximating
any arbitrary function on multiplicity vector $f$ in the dynamic streaming model can be reduced to an algorithm which, before the stream begins, samples a matrix $A$ uniformly at random from a set of hardwired integer matrices, and then maintains the linear sketch $A \cdot f \mod q$, where $q = (q_1,\ldots,q_r)$ is a vector of positive integers and $r$ is the number of rows of $A$. The space complexity of this linear sketching algorithm is only 
larger by an additive factor of the space required to sample $A$ and $q$ (which is shown to be logarithmic in the dimension of the vector $f$ in~\cite{LiNW14}). 
This reduction was further extended by~\cite{AiHLW16} to algorithms which make any number of passes, showing the optimal algorithm is to \emph{adaptively} choose a new linear sketch at the beginning of
each pass based on the computation in previous passes. 

It is a well-known fact that any linear sketching algorithm that requires at most $p$ passes of adaptive sketching can be implemented in the communication model 
studied in this paper (see Appendix~\ref{app:communication}) with $p$ rounds of communication: each player simply computes
 the linear sketches on its input and writes that on the shared blackboard; by linearity of the sketches, the players
can then combine these sketches and obtain a linear sketch of the whole input. This allows the players to implement each round of adaptive sketching in one round of communication and compute the final answer. 
It is also easy to see that the \emph{per player} communication cost of this new algorithm is at most the size of the linear sketch. 
Combining this with the reduction of~\cite{AiHLW16} implies that if one can prove a lower bound on the per player communication complexity of a problem in the shared blackboard model, 
one also obtains a lower bound on the space complexity of dynamic streaming algorithms; notice that since in the communication model we can perform the sampling of $A$ and $q$ via public
randomness, free of communication charge, we do not even need to pay for the extra additive factor in space in the reduction; we refer the interested reader to~\cite{AiHLW16} for more details. 

The takeaway is that by applying the reduction of~\cite{AiHLW16} to dynamic streaming algorithms for maximum coverage problem and using our
lower bound in Theorem~\ref{thm:dist-lower} (which was proven in this more general communication model), we obtain that, 

\begin{corollary}\label{cor:dynamic-lower}
	No $p$-pass semi-streaming algorithm for the maximum coverage problem in the dynamic streaming model can approximate the value of optimal solution to a factor of $o(\frac{k^{1/2p}}{p\cdot\log{k}})$ with a sufficiently 
	large probability. 
\end{corollary}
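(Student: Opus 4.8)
The plan is to combine Theorem~\ref{thm:dist-lower} with the structural characterization of multi-pass dynamic streaming algorithms due to Ai, Hu, Li, and Woodruff~\cite{AiHLW16}. Suppose, towards a contradiction, that there is a $p$-pass semi-streaming algorithm $\mathcal{A}$ for maximum coverage using space $S = O(n\cdot\poly\set{\log{m},\log{n}})$ that approximates the optimum value to within a factor $o(k^{1/2p}/(p\log k))$ with probability at least $3/4$. Since the maximum coverage value of the final stream depends only on the multiplicity vector $f$ of the stream, the reduction of~\cite{AiHLW16} applies, and we may assume $\mathcal{A}$ has the form of a $p$-round \emph{adaptive linear sketch}: at the start of pass $\ell$ it selects (from public randomness together with the sketches gathered in passes $1,\dots,\ell-1$) an integer matrix $A_\ell$ with $O(S)$ rows and a modulus vector $q_\ell$, maintains $A_\ell\cdot f \bmod q_\ell$ during pass $\ell$, and after pass $p$ outputs the answer. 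The additive cost of sampling $A_\ell$ and $q_\ell$ incurred in~\cite{AiHLW16} is charged to public randomness, which is free in the communication model of Appendix~\ref{app:communication}.

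I would then simulate this sketching algorithm as a $p$-round protocol in the $p$-player shared-blackboard model. Take the hard instances from Theorem~\ref{thm:dist-lower} with $r = p$, in which the input sets are partitioned across the $p$ players, and present them to the stream as a sequence of insertions, so that $f = \sum_{i=1}^{p} f_i$ with $f_i$ the multiplicity vector contributed by player $i$'s sets. In round $\ell$, the public information — public randomness together with the blackboard contents of rounds $1,\dots,\ell-1$ — determines $A_\ell$ and $q_\ell$; each player $i$ computes $A_\ell\cdot f_i$ and writes it on the blackboard, and by linearity the parties recover $A_\ell\cdot f \bmod q_\ell$ and proceed. After $p$ rounds the referee applies the post-processing step of $\mathcal{A}$. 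This yields a $p$-round protocol that approximates the maximum coverage value to within $o(k^{1/2p}/(p\log k))$ with probability at least $3/4$ and has per-player communication $O(p\cdot S) = O(S\cdot\polylog(n))$ bits.

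Finally I would invoke Theorem~\ref{thm:dist-lower} with $r = p$ and a suitable integer $c$ — a constant multiple of $p$, chosen so that $c \ge 4p$ and $c = o(\log k/\log\log k)$. For such $c$ the theorem's approximation threshold satisfies $\frac{1}{2c}\cdot\frac{k^{1/2p}}{\log k} = \Omega(k^{1/2p}/(p\log k))$, so any protocol beating $o(k^{1/2p}/(p\log k))$ must use per-player communication $\Omega\bigl(\frac{k}{p^4}\cdot m^{c/((c+2)\cdot 4p)}\bigr)$. In the theorem's instances $m = \poly(n)$ and $k = \Theta(n^{2p/(2p+1)})$, so this bound is $\omega(n\cdot\polylog(n))$ — indeed the factor $m^{c/((c+2)\cdot 4p)} = n^{\Omega_p(1)}$ already dominates any polylogarithmic term — and it therefore exceeds $O(S\cdot\polylog(n))$, contradicting $S = O(n\cdot\poly\set{\log{m},\log{n}})$. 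Rescaling the constants then gives the stated bound.

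I expect the main obstacle to be the parameter bookkeeping rather than any conceptual difficulty: one must verify that the instances of Theorem~\ref{thm:dist-lower} embed faithfully into a dynamic (in fact insertion-only) stream, and that the relation between $c$ and $p$, the polynomial dependence of $m$ on $n$, and the comparison of the per-player lower bound against the semi-streaming space are all mutually consistent and hold uniformly across the relevant range of $p$. The linear-sketch simulation itself is routine once~\cite{AiHLW16} is applied, and the reason no logarithmic overhead is paid for randomness is precisely that the lower bound of Theorem~\ref{thm:dist-lower} was established in the public-coin communication model (Appendix~\ref{app:communication}).
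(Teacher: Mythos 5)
Your proposal is correct and follows exactly the paper's route: invoke the reduction of~\cite{AiHLW16} to turn a $p$-pass dynamic streaming algorithm into a $p$-round adaptive linear sketch, simulate it in the shared-blackboard model with per-player communication bounded by the sketch size (public randomness making the matrix-sampling free), and contradict Theorem~\ref{thm:dist-lower}. The only bookkeeping point you flagged that genuinely needs care is taking the constant multiple in $c = \Theta(p)$ strictly larger than $4$ (e.g.\ $c = 8p$), since at $c = 4p$ with $p=1$ the bound $k\cdot m^{c/((c+2)\cdot 4p)} = N^{2p+c/4}$ only matches $n = N^{2p+1}$ rather than exceeding $n\cdot\polylog(n)$.
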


We remark that one can obtain the same exact bounds in Theorem~\ref{thm:dist-lower} for the space complexity of dynamic streaming algorithms also; however, as our focus is on semi-streaming algorithms we provide the 
above theorem which is qualitatively similar but is easier to parse. 

We now turn to proving an upper bound for maximum coverage in dynamic streams using Theorem~\ref{thm:dist-upper-e}. We remark that the same argument holds also for maximizing any monotone submodular function subject 
to a cardinality constraint (exactly as in Section~\ref{sec:dist-m}); for brevity, in the following we only focus on the maximum coverage problem. 

We show that \SPGreedy can be implemented in dynamic streams. 
To do this, we need a primitive that allows for sampling a set from a dynamic stream uniformly at random. This can be achieved using $\ell_0$-samplers introduced in~\cite{FrahlingIS2008}. Since the dimension
of the multiplicity vector $f$ is $2^{n}$ and each set also requires $\Theta(n)$ bits to represent, a naive implementation of the best known streaming $\ell_0$-samplers due to~\cite{JowhariST2011} requires
$\Theta(n^2)$ space. However, using the fact there can only be $m$ non-zero entries in the vector $f$ at the end of the stream (as number of sets is at most $m$), we can implement the algorithm of~\cite{JowhariST2011}
with only $O(n \cdot \poly\set{\log{m},\log{n}})$ space (simply change the number of buckets in Theorem~2 in~\cite{JowhariST2011} from $n$ to $\Theta(\log{m})$). We refer to this primitive as a \emph{set sampler}.

Having the set sampler primitive; it is now easy to see that we can implement the \SPGreedy algorithm in dynamic stream. Each of the $s$ steps of $\SPGreedy$ (in any iteration) can be implemented
by making one pass over the stream and maintaining a set sampler over the collection of sets defined in Line~(\ref{line:spgreedy-set-defined}) of \SPGreedy; notice that whenever a set is updated in the stream we can decide
in $\Ot(n)$ space whether it belongs to this collection or not and hence send it to the set sampler primitive. The rest of the algorithm is exactly as in \SPGreedy and its modification in Corollary~\ref{cor:dist-upper-e}. 
By running \SPGreedy with $p = O(\log{m}/\eps)$ passes over the stream, we obtain an algorithm with space complexity of $\Ot(k/\eps^4 \cdot m^{O(1/\eps \cdot p)} + n) = \Ot(k/\eps^4 + n)$, i.e., a semi-streaming algorithm. 
Consequently, 

\begin{corollary}\label{cor:dynamic-upper}
	There exists a randomized semi-streaming algorithm for the maximum coverage problem that for any constant $\eps \in (0,1)$, with high probability, 
	computes an $\paren{\frac{e}{e-1} + \eps}$-approximation in $O(\log{m}/\eps)$ passes over the stream. 
\end{corollary}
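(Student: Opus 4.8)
The plan is to instantiate the distributed algorithm \SPGreedy of Theorem~\ref{thm:dist-upper-e} — together with the space‑saving modification used in the proof of Corollary~\ref{cor:dist-upper-e} — directly in the dynamic set streaming model, treating one pass over the stream as the streaming analogue of one round of the coordinator protocol. Recall that \SPGreedy runs in $\ell = \Theta(1/\eps)$ iterations, each with $s = \ceil{r/\ell}$ steps, where in step $(j,t)$ every item $a$ with $f_{X^{j,(t-1)}}(a) \ge \tau_j$ is sampled independently with probability $q_t$ and the sampled items are handed to a coordinator who greedily grows $X^{j,t}$. In the streaming setting the coordinator is simply the single processor, so all of its bookkeeping is free local computation; the only operation that must be implemented with small space is the sampling, which I would realize with one pass per step, for a total of $O(\ell\cdot s)=O(r)$ passes.

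First I would build a \emph{set sampler}: a primitive that, run over a dynamic set stream restricted to a streaming‑decidable family $V^{j,t}$ of sets, returns a (near‑)uniformly random set from the support of the multiplicity vector $f$ using $O(n\cdot\poly\set{\log m,\log n})$ space. This is an $\ell_0$‑sampler on the $2^n$‑dimensional vector $f$ in the spirit of \cite{FrahlingIS2008,JowhariST2011}; the naive instantiation costs $\Theta(n^2)$, but since at most $m$ coordinates of $f$ are ever nonzero one replaces the $\Theta(n)$ hash buckets of \cite{JowhariST2011} by $\Theta(\log m)$ buckets, bringing the cost down to $O(n\cdot\poly\set{\log m,\log n})$ per sampler (each coordinate, i.e.\ each set, taking $\Theta(n)$ bits). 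To obtain the $O(m^{1/s}k\log m)$ samples that one step of \SPGreedy needs in expectation, I would run that many independent copies of the set sampler in parallel, keeping or discarding samples so as to reproduce the independent $q_t$‑thinning of $V^{j,t}$ up to negligible statistical error.

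Second, I would check that step $(j,t)$ runs in one pass with only $\Ot(n)$‑bit‑per‑set overhead. When an update $(S,\Delta)$ arrives, deciding whether $S\in V^{j,t}$ amounts to testing $|S\setminus c(X^{j,(t-1)})|\ge\tau_j$; the partial solution $X^{j,(t-1)}$ was fixed at the end of the previous pass and contains at most $k$ sets, so the test uses $\Ot(n)$ extra space and the update can be forwarded to (or withheld from) the set samplers accordingly. Adapting the family $V^{j,t}$ of the next pass to the $X^{j,(t-1)}$ computed in the current one is exactly the adaptive multi‑pass linear‑sketching regime of \cite{AiHLW16} and causes no difficulty. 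Finally, taking $p = O(\log m/\eps)$ passes makes $m^{O(1/(\eps\cdot p))} = m^{O(1/\log m)} = O(1)$, so by the bound of Corollary~\ref{cor:dist-upper-e} the per‑pass space becomes $\Ot(k/\eps^4 + n)$ — the $k/\eps^4$ coming from the coverage‑sketching of \cite{McGregorV17,BateniEM17} that shrinks each stored set of size $O(\opt)$ to size $\Ot(k/\eps^4)$ — i.e.\ a semi‑streaming algorithm; and the $\paren{\frac{e}{e-1}+\eps}$ guarantee and the high‑probability bound are inherited verbatim from Lemma~\ref{lem:ub-e-apx} and Corollary~\ref{cor:dist-upper-e}, since the distribution of the sampled sets in the stream is, up to negligible error, identical to that in the coordinator model.

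The main obstacle I expect is the set sampler: making an $\ell_0$‑sampler act on the exponentially‑large vector $f$ while charging only $\poly\set{\log m,\log n}$ rather than $\poly(n)$ overhead, and verifying that running many independent copies in parallel faithfully simulates the independent $q_t$‑sampling that Lemma~\ref{lem:ub-e-set-cc} and Lemma~\ref{lem:ub-e-cc} rely on. Everything else — the per‑pass membership test, counting the passes, and substituting $p=O(\log m/\eps)$ — is routine once that primitive and the sketching reduction of \cite{McGregorV17} are in place.
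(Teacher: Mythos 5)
Your proposal matches the paper's proof essentially step for step: implementing each step of \SPGreedy (with the modification from Corollary~\ref{cor:dist-upper-e}) in one pass, using an $\ell_0$-sampler-based set sampler whose space is brought down from $\Theta(n^2)$ to $O(n\cdot\poly\set{\log m,\log n})$ by exploiting that $f$ has at most $m$ nonzero coordinates, testing membership in $V^{j,t}$ locally against the stored partial solution, and setting $p=O(\log m/\eps)$ so that $m^{O(1/\eps\cdot p)}=O(1)$. The paper treats the faithful simulation of the independent $q_t$-sampling (which you flag as the main obstacle) as immediate from the set-sampler primitive and does not elaborate further, so there is no substantive difference between the two arguments.
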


Corollary~\ref{cor:dynamic-upper} can also can be stated for dynamic streaming algorithms with different space bounds corresponding to Corollary~\ref{cor:dist-upper-e}; however, for brevity, we 
only focused on semi-streaming algorithms. % in Theorem~\ref{thm:dynamic-upper}. 

\paragraph{Constant Pass Algorithms.} We remark that our second algorithm in Result~\ref{res:dist-upper} does not admit a linear sketching implementation; in fact, using Corollary~\ref{cor:dynamic-lower}, it is easy to prove that the \GreedySketch subroutine used by each machine \emph{cannot} be implemented in dynamic streams in less than logarithmic number of passes over the stream. 
As a result, we do not know if one can achieve a {non-trivial} semi-streaming algorithm for maximum coverage
in dynamic streams in \emph{constant passes} over the stream. In particular, can we match the lower bound in Corollary~\ref{cor:dynamic-lower} for any number of passes $p$?
We leave this as an intriguing open question. 

\paragraph{Application to Set Cover in Dynamic Streams.} 
We finish this section by stating that our algorithm in Corollary~\ref{cor:dynamic-upper} can also be used to obtain the first dynamic streaming algorithm for the set cover problem. The algorithm is as follows. Guess the value of optimal solution $\optp$ for set cover in powers of two in parallel and perform the following procedure.  Run the algorithm in Corollary~\ref{cor:dynamic-upper} 
for each guess with the parameter $k = \optp$ and remove all covered elements from the universe; repeat this process until there is no uncovered element left; return the collection of all sets computed over different passes
as a set cover. It is easy to see that $O(\log{n})$ iteration of this algorithm suffices to cover all the elements (hence the $O(\log{n})$ factor in the approximation ratio) 
and each iteration can be implemented in $O(\log{m})$ passes by Corollary~\ref{cor:dynamic-upper} ($O(\log{m}\cdot\log{n})$ passes in total). 
As a result, 

\begin{corollary}\label{cor:set-cover}
	There exists a randomized semi-streaming algorithm for the set cover problem that with high probability computes an $O(\log{n})$-approximation 
	in $O(\log{m}\cdot\log{n})$ passes over the stream. 
\end{corollary}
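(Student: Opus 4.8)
The plan is to reduce set cover to repeated invocations of the maximum coverage algorithm of Corollary~\ref{cor:dynamic-upper}, following the standard greedy-covering framework adapted to dynamic streams. First I would guess the optimal set cover size $\optp$ among the powers of two in $[1,m]$ and run all $O(\log m)$ guesses in parallel; since each run is semi-streaming, the parallel composition still uses $\Ot(n)$ space. For a fixed guess $\optp$, the algorithm proceeds in $O(\log n)$ sequential \emph{phases}: in each phase it runs the $(\frac{e}{e-1}+\eps)$-approximation maximum coverage algorithm of Corollary~\ref{cor:dynamic-upper} with parameter $k=\optp$ on the sub-instance induced by the currently uncovered elements, adds the (at most) $\optp$ returned sets to the output cover, and marks their elements as covered. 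The key point for the dynamic-stream implementation is that restricting to the uncovered universe does not require modifying the stream: the set of already-covered elements is explicitly known from the outputs of the previous phases, so while maintaining the set-sampler sketches in the next phase each update $(S_i,\Delta_i)$ can simply be pre-processed by intersecting $S_i$ with the current uncovered universe (in $\Ot(n)$ space) before feeding it to the sampler.

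The correctness argument is the usual one. If $\optp \geq \opt$, then at the start of any phase there exist $\optp$ input sets whose union covers the entire remaining universe $U'$; hence the optimal $\optp$-cover of $U'$ has value $\card{U'}$, and by Corollary~\ref{cor:dynamic-upper} the phase covers at least a $(1-1/e-\eps)$ fraction of $U'$ with high probability. Thus after $t$ phases the number of uncovered elements is at most $(1/e+\eps)^t \cdot n$, so $t = O(\log n)$ phases drive this below $1$ and the accumulated sets form a valid set cover. The smallest guess with $\optp \in [\opt, 2\,\opt)$ then outputs at most $\optp \cdot O(\log n) = O(\opt \cdot \log n)$ sets, i.e.\ an $O(\log n)$-approximation; we return the smallest cover found over all guesses. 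For the pass complexity, each phase is one invocation of Corollary~\ref{cor:dynamic-upper} with a constant $\eps$, costing $O(\log m)$ passes, and the $O(\log n)$ phases run one after another, for a total of $O(\log m \cdot \log n)$ passes; the parallel guesses do not add passes.

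The main thing to be careful about is the failure-probability bookkeeping across the many invocations: there are $O(\log m)$ guesses times $O(\log n)$ phases, i.e.\ $O(\log m \log n)$ runs of the maximum coverage subroutine, each succeeding only with high probability, so I would boost the per-run success probability to $1-1/\poly(mn)$ (costing only an extra $O(\log(mn))$ factor in space, still within $\Ot(n)$) and union bound. A second, minor point is that Corollary~\ref{cor:dynamic-upper} internally needs an estimate of its own optimum: inside each phase this optimum is the value of the best $\optp$-cover of the residual universe, which lies between $\card{U'}/\optp$ and $\card{U'}$, so the standard geometric guessing of that internal estimate applies verbatim and only multiplies space by another $O(\log n)$. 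Neither issue affects the asymptotics, so the stated bounds follow.
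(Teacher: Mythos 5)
Your proposal is correct and follows essentially the same route as the paper: guess the set cover optimum in powers of two in parallel, and for each guess repeatedly invoke the dynamic-stream maximum coverage algorithm of Corollary~\ref{cor:dynamic-upper} with $k=\optp$ on the residual uncovered universe, so that $O(\log n)$ phases of $O(\log m)$ passes each suffice. The paper states this argument only in a few sentences, and your additional bookkeeping (residual-universe implementation, failure-probability union bound, internal optimum guessing) correctly fills in the details it leaves implicit.
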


\subsection{Maximum Coverage in the MapReduce Framework}\label{sec:map-reduce}

We now present our results for maximum coverage and submodular maximization in the MapReduce framework described in Section~\ref{sec:results}. 

Recall that in the sketch-and-update approach (described in Section~\ref{sec:results}) in the MapReduce framework, in each round, every machine is sending a message directly to a designated central machine for combining the 
sketches. By definition of the MapReduce framework, the total messages received by the central machine can only be proportional to its memory which is of size $O(s)$. This enforces an upper bound on the total 
communication of $O(s)$ in each round by the machines. It is thus easy to see that efficient MapReduce algorithms in the sketch-and-update framework immediately imply communication efficient protocols in the distributed 
coordinator model (note that this is in general is not true for every MapReduce algorithm). As a result, we can interpret Theorem~\ref{thm:dist-lower} as proving a lower bound for sketch-and-update algorithms in the MapReduce 
framework. 

\begin{corollary}\label{cor:mapreduce-lower}
	For any $\delta \in (0,1)$, any MapReduce algorithm in the sketch-and-update framework described in Section~\ref{sec:results} that uses $s = m^{\delta}$ space per machine 
	and computes a constant factor approximation to maximum coverage requires $\Omega(\frac{1}{\delta})$ rounds of communication. 
\end{corollary}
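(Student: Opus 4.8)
The plan is to reduce any sketch-and-update MapReduce algorithm to a bounded-round protocol in the distributed coordinator model and then apply Theorem~\ref{thm:dist-lower}. As noted in the paragraph preceding the corollary, in a sketch-and-update algorithm every machine, in each round, sends a single sketch to the designated central machine, and since this sketch must fit in the machine's memory it has size $O(s)$; the combined sketch that the central machine computes and broadcasts back likewise has size $O(s)$. Designating the central machine as the coordinator --- whose local computation and whose messages back to the machines are free of charge in the coordinator model --- an $r$-round sketch-and-update algorithm with $s$ space per machine becomes an $r$-round coordinator-model protocol in which every machine communicates at most $O(r\cdot s)$ bits. (The same reasoning bounds the \emph{total} communication by $O(r\cdot s)$, since the central machine receives only $O(s)$ bits per round.)

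Suppose, for contradiction, that for some $\delta$ below a sufficiently small absolute constant there is a sketch-and-update MapReduce algorithm with $s=m^{\delta}$ space per machine that computes an $\alpha$-approximation to maximum coverage, for a fixed constant $\alpha\geq 1$, using at most $r\leq\frac{1}{5\delta}$ rounds on every instance with $m$ sets (if the worst-case round complexity is unbounded the corollary is immediate). Instantiate Theorem~\ref{thm:dist-lower} with $c:=4r$ (so $c\geq 4r$, and $c$ is a positive integer), taking the number of sets $m$ arbitrarily large; the theorem then fixes $n=m^{1/\Theta(c)}$ and $k=\Theta(n^{2r/(2r+1)})$, so $\log k=\Theta((\log m)/r)$. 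Two side conditions are easily met for $m$ large relative to $r$: first $c=4r=o(\log k/\log\log k)$, since this amounts to $r^{2}\log\log m=o(\log m)$; second the gap between \Yes and \No instances, namely $\frac{1}{2c}\cdot\frac{k^{1/2r}}{\log k}=\frac{1}{8r}\cdot\frac{k^{1/2r}}{\log k}$, exceeds $\alpha$, since $k^{1/2r}$ is polynomial in $m$ whereas $8r\log k$ is polylogarithmic. Consequently the MapReduce algorithm (after $O(1)$ independent repetitions to boost its success probability to $3/4$) distinguishes the \Yes and \No cases of $\coverage{N}{r}$, so the coordinator-model protocol obtained from it does too.

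By Theorem~\ref{thm:dist-lower} this protocol must use $\Omega\!\big(\frac{k}{r^{4}}\cdot m^{c/((c+2)\cdot 4r)}\big)$ communication per machine; with $c=4r$ the exponent equals $\frac{1}{4r+2}$, and since $k\geq 1$ this is at least $\frac{C}{r^{4}}\,m^{1/(4r+2)}$ for a universal constant $C>0$. But the reduction shows every machine communicates at most $O(r\cdot m^{\delta})$ bits, so $m^{\delta}=\Omega\!\big(r^{-5}\,m^{1/(4r+2)}\big)$; as $m$ may be taken arbitrarily large with $r$ fixed, this is possible only if $\delta\geq \frac{1}{4r+2}$, i.e.\ $r\geq \frac{1}{4\delta}-\frac12$. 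For $\delta$ small this exceeds the assumed bound $r\leq\frac{1}{5\delta}$, a contradiction; hence $r=\Omega(1/\delta)$ (the bound being trivial when $1/\delta=O(1)$).

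The reduction from sketch-and-update to the coordinator model is immediate, so the only real work is the parameter bookkeeping in the middle step: choosing $c=\Theta(r)$ so that the lower-bound exponent $\frac{c}{(c+2)\cdot 4r}$ is a positive constant strictly above $\delta$ while respecting $c\geq 4r$ and $c=o(\log k/\log\log k)$, and checking that for $m$ large the fixed approximation ratio $\alpha$ lies below the super-constant gap $\frac{1}{8r}k^{1/2r}/\log k$ so that Theorem~\ref{thm:dist-lower} actually applies. This is the step I expect to be the most delicate, although it is purely calculational. One can improve the constant hidden in $\Omega(1/\delta)$ by instead comparing the \emph{total} communication $O(r\cdot s)$ of a sketch-and-update algorithm against the total lower bound $\Omega(w_r/r^{4})$ of Theorem~\ref{thm:dgl}, which avoids any dependence on the number of machines and yields $r\geq \frac{1}{\delta}-O(1)$.
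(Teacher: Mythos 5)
Your proposal is correct and follows essentially the same route as the paper: the paper's entire argument is the observation that a sketch-and-update algorithm yields a coordinator-model protocol with $O(s)$ communication per round, followed by an (implicit) application of Theorem~\ref{thm:dist-lower}; you have simply made the parameter choice $c=4r$ and the resulting bookkeeping explicit, and your concluding remark about comparing against the \emph{total} communication bound matches how the paper's $\Omega(1/\delta)$ constant actually arises.
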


Moreover, both algorithms in Result~\ref{res:dist-upper} can be implemented in the MapReduce model. In particular, we state the following corollary of Theorem~\ref{thm:dist-upper-e} for submodular maximization
which also subsumes the results for coverage maximization. 

\begin{corollary}\label{cor:mapreduce-upper}
	Let $V$ be a universe of $m$ items and $f : 2^{V} \rightarrow \IR^{+}$ be a monotone submodular function. 
	For any $\eps,\delta \in (0,1)$, there exists an $\paren{\frac{e}{e-1} + \eps}$-approximation randomized algorithm for maximizing $f$ subject to a cardinality constraint in the MapReduce framework that
	uses $p = O(m^{1-\delta/\eps})$ machines each with $s = O(m^{\delta/\eps})$ memory and computes the answer in $O(\frac{1}{\eps \cdot \delta})$ rounds. 
\end{corollary}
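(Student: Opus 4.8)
The plan is to instantiate the distributed protocol \SPGreedy\ of Theorem~\ref{thm:dist-upper-e} inside the MapReduce framework, choosing the round parameter so that the communication guarantee of that theorem matches the per-machine memory budget. Concretely, I would run \SPGreedy\ with $r := \Theta(1/(\eps\delta))$ rounds (tuning the hidden constant as indicated below), with the $p = O(m^{1-\delta/\eps})$ machines of the MapReduce instance playing the role of the $p$ players and one designated machine playing the role of the coordinator. Since the input consists of $m$ items and the memory is $s = O(m^{\delta/\eps})$, we have $p\cdot s = O(m) = O(N)$, and the $O(m/p) = O(m^{\delta/\eps})$ items initially assigned to each machine fit in its memory; we also guess $\optp$ in powers of two in $O(\log k)$ parallel threads, which inflates memory and communication by only an $O(\log k)$ factor. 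Because \SPGreedy\ is a sketch-and-update algorithm --- in every step each machine sends a subsampled set of items to the coordinator, who greedily extends a partial solution and broadcasts it back --- each of its steps is realized by one MapReduce round (the broadcast set, of size at most $k$, being charged to the machines' receive budget), so the total number of rounds is $O(r) = O(1/(\eps\delta))$; note $r = \Omega(1/\eps)$ as required for Theorem~\ref{thm:dist-upper-e} since $\delta < 1$.

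It then remains to check the memory constraints. By Lemma~\ref{lem:ub-e-set-cc} and Lemma~\ref{lem:ub-e-cc}, in each step the candidate set $V^{j,t}$ has size at most $m^{1-(t-1)/s'}$ (where $s' = \ceil{r/\ell}$ and $\ell = \Theta(1/\eps)$) and, with high probability, the machines jointly send only $O(k\cdot m^{1/s'}\cdot\log m)$ items to the coordinator. Since $1/s' = \Theta(\ell/r) = \Theta(1/(\eps r))$, choosing the constant in $r = \Theta(1/(\eps\delta))$ large enough makes $m^{1/s'} = m^{O(\delta)}$ small enough that $k\cdot m^{1/s'}\cdot\log m = O(m^{\delta/\eps}) = O(s)$; hence the coordinator's receive buffer, and every machine's send and receive traffic, stays within $O(s)$. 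Thus all memory requirements and the constraint $p\cdot s = O(N)$ of the MapReduce model are met, with $p,s \le N^{1-\Omega(1)}$ provided $\delta$ is bounded away from $\eps$.

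Finally, the approximation guarantee is inherited verbatim from the distributed analysis: Lemma~\ref{lem:ub-e-apx} shows that the set $X$ output by \SPGreedy\ satisfies $f(X) \ge (1-1/e-\eps)\cdot\opt$, i.e.\ it is an $\paren{\frac{e}{e-1}+O(\eps)}$-approximation, and rescaling $\eps$ by a constant yields the stated bound. Since coverage functions are submodular (Appendix~\ref{app:submodular}), the same statement covers maximum coverage.

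\textbf{Main obstacle.} The delicate point is the bookkeeping that reconciles three requirements simultaneously: the coordinator's receive buffer must hold an entire round's traffic within $O(s)$ memory, both $p$ and $s$ must be $N^{1-\Omega(1)}$ with $p\cdot s = O(N)$, and the round count must be $O(1/(\eps\delta))$. This forces a careful choice of the constant in $r = \Theta(1/(\eps\delta))$ relative to the implicit constant in the exponent of the communication bound of Lemma~\ref{lem:ub-e-cc}, and an interpretation of the hypotheses in which $\delta$ is a sufficiently small constant fraction of $\eps$, so that the extra $k\cdot\log k\cdot\log m$ factors are absorbed into the $m^{\delta/\eps - O(\delta)}$ slack.
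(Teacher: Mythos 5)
Your proposal is correct and matches the paper's (essentially unstated) argument: the paper derives this corollary directly from Theorem~\ref{thm:dist-upper-e} by observing that \SPGreedy is a sketch-and-update protocol whose per-round traffic $O(k\cdot m^{O(1/(\eps r))}\cdot\log m)$ fits in $s=O(m^{\delta/\eps})$ memory once $r=\Theta(1/(\eps\delta))$, exactly the instantiation you give. Your bookkeeping of the coordinator's receive buffer and of the implicit constraints (e.g.\ $k\le s$, $\delta/\eps$ bounded away from $0$ and $1$) is in fact more explicit than the paper's.
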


As stated in Section~\ref{sec:results}, our bounds in Corollary~\ref{cor:mapreduce-upper} matches the best known bounds of~\cite{BarbosaENW16} with the additional benefit of having sublinear in $m$ communication. 
We again remark that the algorithm in~\cite{BarbosaENW16} is however more general in that it supports a larger family of constraints beside the cardinality constraint we studied in this paper.

\subsection{Adversarial vs Random Partitions}\label{sec:partition} 

We considered \emph{adversarial} input partitions in this work, meaning that the input across the machines is distributed adversarially. 
Several recent work have studied optimization problems in the distributed model when the input is \emph{randomly} partitioned~\cite{MirrokniZ15,BarbosaENW15,AssadiK17}. 
For maximum coverage (and submodular maximization), it was shown previously that under this assumption one can achieve a constant factor approximation using $\Ot(n)$ communication per machine
in only \emph{one} round of communication~\cite{MirrokniZ15,BarbosaENW15}. Comparing this with Theorem~\ref{thm:dist-lower} implies that an approximation factor that can
 be achieved in only one round of communication and $\Ot(n)$ 
communication under randomized partitions, \emph{cannot} be achieved in $o(\frac{\log{n}}{\log\log{n}})$ rounds of communication and $\poly(n)$ communication in adversarial partitions! 

We remark that separations on the round complexity of 
randomized and adversarial partitions were known for some problems before (see e.g.,~\cite{MunroP80,GuhaM09,ChakrabartiJP08} for median estimating). The striking gap between these
two cases for the distributed maximum coverage problem is another nice illustration of this phenomenon.

%%
%%
%%\begin{theorem}\label{thm:dist-lower}
%%	For integers $1 \leq r \leq c \leq o\paren{\frac{\log{k}}{\log\log{k}}}$, any $r$-round protocol for the maximum coverage problem that can approximate the value of optimal solution
%%	to within a factor of better than $\paren{\frac{1}{2c} \cdot \frac{k^{1/2r}}{\log{k}}}$ w.p. at least $3/4$ requires $\Omega\paren{\frac{k}{r^4} \cdot m^{\frac{c}{(c+2)\cdot 4r}}}$ communication per machine.
%%	The lower bound applies to instances with $m$ sets, $n = m^{1/\Theta(c)}$ elements, and $k = \Theta(n^{r/(r+2)})$. 
%%\end{theorem} 

\subsection*{Acknowledgements} 
The first author is grateful to Alessandro Epasto for bringing~\cite{EpastoLVZ17} to his attention and to David Woodruff for 
a helpful discussion on the implication of the results in~\cite{AiHLW16} for proving multi-pass dynamic streaming lower bounds. 
We also thank Paul Liu and Jan Vondrak for helpful comments on the presentation of the paper.

\bibliographystyle{abbrv}
\bibliography{general}

\clearpage

\appendix

\newcommand{\rA}{\rv{A}}
\newcommand{\rB}{\rv{B}}
\newcommand{\rC}{\rv{C}}
\newcommand{\rD}{\rv{D}}

\section{Tools From Information Theory}\label{app:info}

The proof of the following basic properties of entropy and mutual information
can be found in~\cite{ITbook} (see Chapter~2).

\begin{fact}\label{fact:it-facts}
  Let $\rA$, $\rB$, and $\rC$ be three (possibly correlated) random variables.
   \begin{enumerate}
  \item \label{part:uniform} $0 \leq \HH(\rA) \leq \card{\rA}$, and $\HH(\rA) = \card{\rA}$
    iff $ A$ is uniformly distributed over its support.
  \item \label{part:info-zero} $\mi{\rA}{\rB \mid \rC} \geq 0$. The equality holds iff $\rA$ and
    $\rB$ are \emph{independent} conditioned on $\rC$. 
  \item \label{part:cond-reduce} $\HH(\rA \mid  \rB, \rC) \leq \HH(\rA \mid \rB)$.  The equality holds iff $\rA \perp \rC \mid \rB$.
  \item \label{part:chain-rule} $\mi{\rA, \rB}{\rC} = \mi{\rA}{\rC} + \mi{\rB}{\rC \mid \rA}$ (\emph{chain rule of mutual information}).
  \item \label{part:data-processing} Suppose $f(\rA)$ is a deterministic function of $\rA$, then $\mi{f(\rA)}{\rB \mid \rC} \leq \mi{\rA}{\rB \mid \rC}$ (\emph{data processing inequality}). 
   \end{enumerate}
\end{fact}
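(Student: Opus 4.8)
Items (1)--(5) are the foundational inequalities of Shannon information theory, and the plan is to derive all five from a single workhorse — the non-negativity of relative entropy (Gibbs' inequality) — together with the chain rule for entropy. So the first step is to record the lemma: for any two distributions $\mu,\nu$ on a common finite space, $\DD{\mu}{\nu}\geq 0$, with equality iff $\mu=\nu$. This follows by applying Jensen's inequality to the strictly convex function $t\mapsto -\log t$ evaluated at $\sum_x\mu(x)\Paren{-\log\tfrac{\nu(x)}{\mu(x)}}$; the equality case is precisely the equality case of Jensen for a strictly convex function, i.e. $\nu(x)/\mu(x)$ constant on $\supp{\mu}$, which forces $\mu=\nu$. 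I will also use without further comment the chain rule for conditional entropy, $\en{\rA,\rB\mid\rC}=\en{\rA\mid\rC}+\en{\rB\mid\rA,\rC}$, immediate from $\Pr(\rA,\rB\mid\rC)=\Pr(\rA\mid\rC)\Pr(\rB\mid\rA,\rC)$ and linearity of expectation, and the definitional identity $\mi{\rA}{\rB\mid\rC}=\en{\rA\mid\rC}-\en{\rA\mid\rB,\rC}$.

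\textbf{Items (1) and (2).} For (1), the bound $\en{\rA}\geq 0$ holds termwise since $-p\log p\geq 0$ for $p\in[0,1]$. For the upper bound I would compute $\DD{\distribution{\rA}}{\mathrm{Unif}(\supp{\rA})}=\log\card{\supp{\rA}}-\en{\rA}$, so the lemma gives $\en{\rA}\leq\log\card{\supp{\rA}}\leq\log\card{\Omega_{\rA}}=\card{\rA}$, with the first inequality tight iff $\rA$ is uniform on its support and the second iff $\supp{\rA}=\Omega_{\rA}$; hence $\en{\rA}=\card{\rA}$ iff $\rA$ is uniform over $\Omega_{\rA}$. For (2), I would use $\mi{\rA}{\rB\mid\rC}=\Exp_{c}\Bracket{\DD{\distribution{\rA,\rB\mid\rC=c}}{\distribution{\rA\mid\rC=c}\otimes\distribution{\rB\mid\rC=c}}}$; non-negativity is then the lemma applied pointwise in $c$, and the expectation vanishes iff every summand does, i.e. iff $\distribution{\rA,\rB\mid\rC=c}$ is a product measure for every $c$ in the support of $\rC$ — exactly conditional independence of $\rA$ and $\rB$ given $\rC$.

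\textbf{Items (3), (4), (5).} For (3), the definitional identity gives $\en{\rA\mid\rB}-\en{\rA\mid\rB,\rC}=\mi{\rA}{\rC\mid\rB}$, which is $\geq 0$ by (2), with equality iff $\rA\perp\rC\mid\rB$ (again by (2)). For (4), I would expand $\mi{\rA,\rB}{\rC}=\en{\rA,\rB}-\en{\rA,\rB\mid\rC}$ and apply the entropy chain rule to both terms, $\en{\rA,\rB}=\en{\rA}+\en{\rB\mid\rA}$ and $\en{\rA,\rB\mid\rC}=\en{\rA\mid\rC}+\en{\rB\mid\rA,\rC}$; regrouping the four terms gives $\Paren{\en{\rA}-\en{\rA\mid\rC}}+\Paren{\en{\rB\mid\rA}-\en{\rB\mid\rA,\rC}}=\mi{\rA}{\rC}+\mi{\rB}{\rC\mid\rA}$. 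For (5), I would apply the conditional form of the chain rule (item (4) with everything conditioned on $\rC$) to $\mi{\rA,f(\rA)}{\rB\mid\rC}$ in two ways. Expanding on $\rA$ first, $\mi{\rA,f(\rA)}{\rB\mid\rC}=\mi{\rA}{\rB\mid\rC}+\mi{f(\rA)}{\rB\mid\rA,\rC}$, where the last term is $0$ by (2) because $f(\rA)$ is constant given $\rA$, hence independent of $\rB$. Expanding on $f(\rA)$ first, $\mi{\rA,f(\rA)}{\rB\mid\rC}=\mi{f(\rA)}{\rB\mid\rC}+\mi{\rA}{\rB\mid f(\rA),\rC}\geq\mi{f(\rA)}{\rB\mid\rC}$ by (2). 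Equating the two expansions yields $\mi{f(\rA)}{\rB\mid\rC}\leq\mi{\rA}{\rB\mid\rC}$.

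\textbf{Main difficulty.} There is essentially no difficulty of substance here: once the relative-entropy lemma and the entropy chain rule are in place, each item is a one-line consequence, and this is classical textbook material (see~\cite{ITbook}, Chapter~2). The only point that warrants care is the equality characterizations, which rest on the strict-convexity equality case of Jensen's inequality and on correctly restricting to supports (zero-probability atoms); my plan is therefore to prove the strict-Jensen/relative-entropy lemma once, equality case included, and then read off each equality statement in (1)--(3) mechanically.
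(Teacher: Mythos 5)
The paper offers no proof of this fact at all—it simply defers to Chapter~2 of \cite{ITbook}—so there is nothing to compare against beyond the standard textbook treatment, which is exactly what you reproduce: every item derived from Gibbs' inequality ($\DD{\mu}{\nu}\geq 0$ with equality iff $\mu=\nu$) together with the entropy chain rule, and all five items, including the equality characterizations and the two-way expansion of $\mi{\rA,f(\rA)}{\rB \mid \rC}$ for data processing, are handled correctly. If anything, your statement of the equality case in item~(1)—that $\HH(\rA)=\card{\rA}$ forces $\rA$ to be uniform over the \emph{entire} domain $\Omega_{\rv{A}}$, since being uniform over a proper subset only gives $\HH(\rA)=\log\card{\supp{\rA}}<\card{\rA}$—is more precise than the paper's own phrasing ``uniformly distributed over its support.''
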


We also use the following two standard propositions, regarding the effect of conditioning on mutual information.

\begin{proposition}\label{prop:info-increase}
  For variables $\rA, \rB, \rC, \rD$, if $\rA \perp \rD \mid \rC$, then, $\mi{\rA}{\rB \mid \rC} \leq \mi{\rA}{\rB \mid  \rC,  \rD}$.
\end{proposition}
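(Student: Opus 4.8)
The plan is to apply the chain rule of mutual information (\itfacts{chain-rule}), conditioned throughout on $\rC$, to the joint quantity $\mi{\rA}{\rB,\rD \mid \rC}$ and to expand it in the two possible orders. First I would write
\[
  \mi{\rA}{\rB,\rD \mid \rC} = \mi{\rA}{\rD \mid \rC} + \mi{\rA}{\rB \mid \rC,\rD},
\]
peeling off $\rD$ first, and also
\[
  \mi{\rA}{\rB,\rD \mid \rC} = \mi{\rA}{\rB \mid \rC} + \mi{\rA}{\rD \mid \rC,\rB},
\]
peeling off $\rB$ first. Equating the two right-hand sides gives the identity that drives the whole argument.

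Next I would invoke the hypothesis $\rA \perp \rD \mid \rC$: by \itfacts{info-zero} this is exactly the statement $\mi{\rA}{\rD \mid \rC} = 0$. Substituting this into the first expansion collapses it to $\mi{\rA}{\rB,\rD \mid \rC} = \mi{\rA}{\rB \mid \rC,\rD}$, and then combining with the second expansion yields
\[
  \mi{\rA}{\rB \mid \rC,\rD} = \mi{\rA}{\rB \mid \rC} + \mi{\rA}{\rD \mid \rC,\rB}.
\]
Finally, the leftover cross term $\mi{\rA}{\rD \mid \rC,\rB}$ is nonnegative, once more by \itfacts{info-zero}, so discarding it only decreases the right-hand side and gives $\mi{\rA}{\rB \mid \rC,\rD} \ge \mi{\rA}{\rB \mid \rC}$, which is the claim.

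There is no genuine obstacle here; this is a routine chain-rule manipulation. The only points that require a little care are (i) using the \emph{conditional} form of the chain rule so that $\rC$ sits in the conditioning of every term, and (ii) making sure the independence hypothesis is applied to the cross term $\mi{\rA}{\rD \mid \rC}$ that genuinely vanishes, and not confused with the other cross term $\mi{\rA}{\rD \mid \rC,\rB}$, which in general need not be zero and is precisely the nonnegative slack that makes the inequality (rather than an equality) come out.
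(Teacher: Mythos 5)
Your argument is correct, but it takes a different (equally elementary) route from the paper. You expand $\mi{\rA}{\rB,\rD\mid\rC}$ by the chain rule in both orders, kill the term $\mi{\rA}{\rD\mid\rC}$ using the hypothesis, and identify the nonnegative slack $\mi{\rA}{\rD\mid\rC,\rB}$; every step is justified by the facts the paper cites (the conditional form of the chain rule follows immediately from \itfacts{chain-rule}). The paper instead works directly with the entropy-difference definition: it writes $\mi{\rA}{\rB\mid\rC}=\HH(\rA\mid\rC)-\HH(\rA\mid\rC,\rB)$, uses $\rA\perp\rD\mid\rC$ to replace $\HH(\rA\mid\rC)$ by $\HH(\rA\mid\rC,\rD)$, and then uses the fact that conditioning on $\rD$ can only decrease $\HH(\rA\mid\rC,\rB)$ to arrive at $\mi{\rA}{\rB\mid\rC,\rD}$. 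The two proofs are essentially dual presentations of the same cancellation; yours has the small advantage of exhibiting the exact gap between the two sides as $\mi{\rA}{\rD\mid\rC,\rB}$ (so one also reads off when equality holds), while the paper's version avoids introducing the three-variable mutual information $\mi{\rA}{\rB,\rD\mid\rC}$ and stays entirely within the list of entropy facts it states in the appendix.
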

 \begin{proof}
  Since $\rA$ and $\rD$ are independent conditioned on $\rC$, by
  \itfacts{cond-reduce}, $\HH(\rA \mid  \rC) = \HH(\rA \mid \rC, \rD)$ and $\HH(\rA \mid  \rC, \rB) \ge \HH(\rA \mid  \rC, \rB, \rD)$.  We have,
	 \begin{align*}
	  \mi{\rA}{\rB \mid  \rC} &= \HH(\rA \mid \rC) - \HH(\rA \mid \rC, \rB) = \HH(\rA \mid  \rC, \rD) - \HH(\rA \mid \rC, \rB) \\
	  &\leq \HH(\rA \mid \rC, \rD) - \HH(\rA \mid \rC, \rB, \rD) = \mi{\rA}{\rB \mid \rC, \rD}. \qed
	\end{align*}
	
\end{proof}

\begin{proposition}\label{prop:info-decrease}
  For variables $\rA, \rB, \rC,\rD$, if $ \rA \perp \rD \mid \rB,\rC$, then, $\mi{\rA}{\rB \mid \rC} \geq \mi{\rA}{\rB \mid \rC, \rD}$.
\end{proposition}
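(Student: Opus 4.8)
\textbf{Proof plan for Proposition~\ref{prop:info-decrease}.}

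The plan is to mirror the argument used for Proposition~\ref{prop:info-increase}, but now exploiting the independence hypothesis $\rA \perp \rD \mid \rB, \rC$ in the reverse direction. First I would expand both mutual informations as differences of conditional entropies: $\mi{\rA}{\rB \mid \rC} = \HH(\rA \mid \rC) - \HH(\rA \mid \rB, \rC)$ and $\mi{\rA}{\rB \mid \rC, \rD} = \HH(\rA \mid \rC, \rD) - \HH(\rA \mid \rB, \rC, \rD)$. The hypothesis $\rA \perp \rD \mid \rB, \rC$ gives, via \itfacts{cond-reduce} (the equality case), that $\HH(\rA \mid \rB, \rC) = \HH(\rA \mid \rB, \rC, \rD)$, so the two ``subtracted'' terms are in fact equal. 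For the ``leading'' terms, \itfacts{cond-reduce} applied without any independence assumption gives the inequality $\HH(\rA \mid \rC, \rD) \leq \HH(\rA \mid \rC)$.

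Putting these together:
\begin{align*}
	\mi{\rA}{\rB \mid \rC} &= \HH(\rA \mid \rC) - \HH(\rA \mid \rB, \rC) \\
	&\geq \HH(\rA \mid \rC, \rD) - \HH(\rA \mid \rB, \rC) \\
	&= \HH(\rA \mid \rC, \rD) - \HH(\rA \mid \rB, \rC, \rD) = \mi{\rA}{\rB \mid \rC, \rD},
\end{align*}
where the inequality is \itfacts{cond-reduce} and the second equality is the equality case of \itfacts{cond-reduce} justified by the hypothesis.

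There is essentially no obstacle here: the only subtlety is making sure I invoke the right direction of \itfacts{cond-reduce} in each place — the plain inequality for the term that does not involve $\rB$, and the equality case (which requires the stated conditional independence $\rA \perp \rD \mid \rB, \rC$) for the term that does involve $\rB$. An alternative, equally short route is to note $\mi{\rA}{\rB, \rD \mid \rC} = \mi{\rA}{\rD \mid \rC} + \mi{\rA}{\rB \mid \rC, \rD}$ by the chain rule, and separately $\mi{\rA}{\rB, \rD \mid \rC} = \mi{\rA}{\rB \mid \rC} + \mi{\rA}{\rD \mid \rB, \rC} = \mi{\rA}{\rB \mid \rC} + 0$, where the last term vanishes by hypothesis and \itfacts{info-zero}; comparing the two expansions and using $\mi{\rA}{\rD \mid \rC} \geq 0$ (again \itfacts{info-zero}) yields the claim. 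I would present the entropy-difference version since it most directly parallels the preceding proposition.
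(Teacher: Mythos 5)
Your proof is correct and is essentially identical to the paper's: both expand the two mutual informations as entropy differences, use the equality case of \itfacts{cond-reduce} under the hypothesis $\rA \perp \rD \mid \rB,\rC$ to equate $\HH(\rA \mid \rB,\rC)$ with $\HH(\rA \mid \rB,\rC,\rD)$, and use the plain monotonicity $\HH(\rA \mid \rC,\rD) \leq \HH(\rA \mid \rC)$ for the leading term. The chain-rule alternative you sketch is also valid but is not the route the paper takes.
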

 \begin{proof}
 Since $\rA \perp \rD \mid \rB,\rC$, by \itfacts{cond-reduce}, $\HH(\rA \mid \rB,\rC) = \HH(\rA \mid \rB,\rC,\rD)$. Moreover, since conditioning can only reduce the entropy (again by \itfacts{cond-reduce}), 
  \begin{align*}
 	\mi{\rA}{\rB \mid  \rC} &= \HH(\rA \mid \rC) - \HH(\rA \mid \rB,\rC) \geq \HH(\rA \mid \rD,\rC) - \HH(\rA \mid \rB,\rC) \\
	&= \HH(\rA \mid \rD,\rC) - \HH(\rA \mid \rB,\rC,\rD) = \mi{\rA}{\rB \mid \rC,\rD}. \qed
 \end{align*}
 
\end{proof}

For two distributions $\mu$ and $\nu$ over the same probability space, the \emph{Kullback-Leibler divergence} between $\mu$ and $\nu$ is defined as $\DD{\mu}{\nu}:= \Ex_{a \sim \mu}\Bracket{\log\frac{\Pr_\mu(a)}{\Pr_{\nu}(a)}}$.
We have,
%For our proofs, we need the following relation between mutual information and KL-divergence. 
\begin{fact}\label{fact:kl-info}
	For random variables $\rA,\rB,\rC$, 
	\[\mi{\rA}{\rB \mid \rC} = \Ex_{(b,c) \sim \distribution{\rB,\rC}}\Bracket{ \DD{\distribution{\rA \mid \rC=c}}{\distribution{\rA \mid \rB=b,\rC=c}}}.\] 
\end{fact}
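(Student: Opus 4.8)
The plan is to prove this standard characterization of conditional mutual information as an expected Kullback--Leibler divergence by unfolding everything into sums over the joint law of $(\rA,\rB,\rC)$ and recombining. Write $p$ for this joint law and $p(a\mid c)$, $p(a\mid b,c)$, $p(b,c)$ for the relevant marginals and conditionals. I would start from the entropy form of conditional mutual information, $\mi{\rA}{\rB\mid\rC} = \en{\rA\mid\rC} - \en{\rA\mid\rB,\rC}$ (which is just the chain rule \itfacts{chain-rule} together with the definitions of entropy and mutual information), and expand each conditional entropy so that both become expectations over the \emph{same} distribution $(a,b,c)\sim p$: namely $\en{\rA\mid\rC} = -\Ex_{(a,c)\sim\distribution{\rA,\rC}}[\log p(a\mid c)] = -\Ex_{(a,b,c)\sim p}[\log p(a\mid c)]$ (the integrand not depending on $b$) and $\en{\rA\mid\rB,\rC} = -\Ex_{(a,b,c)\sim p}[\log p(a\mid b,c)]$.

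Subtracting, $\mi{\rA}{\rB\mid\rC} = \Ex_{(a,b,c)\sim p}\!\big[\log\tfrac{p(a\mid b,c)}{p(a\mid c)}\big]$. I would then re-group the outer average by first drawing $(b,c)\sim\distribution{\rB,\rC}$ and then drawing $a$ from the appropriate conditional law of $\rA$; the resulting inner sum over $a$ of the logarithm of a ratio of two conditional laws of $\rA$ — one obtained by conditioning on $\rC$ alone, the other on $(\rB,\rC)$ — is, by the definition of $\DD{\cdot}{\cdot}$ recorded just above the statement, exactly $\DD{\cdot}{\cdot}$ evaluated on those two conditional distributions, and reading this back gives the displayed identity $\mi{\rA}{\rB\mid\rC} = \Ex_{(b,c)\sim\distribution{\rB,\rC}}\big[\DD{\distribution{\rA\mid\rC=c}}{\distribution{\rA\mid\rB=b,\rC=c}}\big]$. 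An equivalent route starts from the one-line form $\mi{\rA}{\rB\mid\rC} = \Ex_{c}\big[\DD{\distribution{(\rA,\rB)\mid\rC=c}}{\distribution{\rA\mid\rC=c}\otimes\distribution{\rB\mid\rC=c}}\big]$ and then uses $\log\tfrac{p(a,b\mid c)}{p(a\mid c)p(b\mid c)}=\log\tfrac{p(a\mid b,c)}{p(a\mid c)}$ before re-summing over $a$ and $b$.

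The one step worth real care — and the step I expect to be the main pitfall — is the bookkeeping in this regrouping: one must track which conditional distribution of $\rA$ supplies the sampling measure for the inner $a$-average and which one appears inside the logarithm, and then verify that these match the two arguments of $\DD{\cdot}{\cdot}$ in the order written in the statement. This is the classical place where an inadvertent swap of the two KL-arguments (or an overall sign) creeps in, so it is the identification worth checking line by line; everything else is a routine rearrangement of terms, which are absolutely summable once grouped, justified only by the definitions of entropy, mutual information, and KL divergence already set up in this appendix and by \itfacts{chain-rule}.
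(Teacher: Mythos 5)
The paper gives no proof of Fact~\ref{fact:kl-info} at all: it is quoted as a standard identity (the appendix defers such facts to Chapter~2 of the information-theory text), so the only question is whether your derivation is sound. Your route is the standard one, and everything up to the regrouping is fine: $\mi{\rv{A}}{\rv{B}\mid\rv{C}}=\en{\rv{A}\mid\rv{C}}-\en{\rv{A}\mid\rv{B},\rv{C}}=\Ex_{(a,b,c)}\Bracket{\log\frac{p(a\mid b,c)}{p(a\mid c)}}$. The gap is precisely at the step you yourself flag as the pitfall, and it is not resolved in your favor. Once $(b,c)$ is fixed, the inner variable $a$ is distributed according to $\distribution{\rv{A}\mid \rv{B}=b,\rv{C}=c}$, while the integrand is $\log\frac{p(a\mid b,c)}{p(a\mid c)}$; with the paper's definition $\DD{\mu}{\nu}=\Ex_{a\sim\mu}\bracket{\log\frac{\Pr_\mu(a)}{\Pr_\nu(a)}}$, the inner sum is therefore $\DD{\distribution{\rv{A}\mid \rv{B}=b,\rv{C}=c}}{\distribution{\rv{A}\mid \rv{C}=c}}$ — the two arguments in the \emph{opposite} order from the one displayed in the Fact. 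Since KL divergence is asymmetric you cannot simply ``read this back'' as the printed identity; indeed, read literally with the paper's definition of $\mathbb{D}$, the printed order is not an identity at all. (Take $\rv{B}=\rv{A}$ a uniform bit and $\rv{C}$ trivial: the left side is one bit, whereas $\Ex_b\Bracket{\DD{\distribution{\rv{A}}}{\distribution{\rv{A}\mid \rv{B}=b}}}$ is infinite, because the point mass $\distribution{\rv{A}\mid\rv{B}=b}$ assigns zero probability to one outcome.) So your last line asserts an equality that your own computation contradicts, and no bookkeeping can make it come out in the displayed order.

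What your argument actually establishes is the corrected statement $\mi{\rv{A}}{\rv{B}\mid\rv{C}}=\Ex_{(b,c)}\Bracket{\DD{\distribution{\rv{A}\mid \rv{B}=b,\rv{C}=c}}{\distribution{\rv{A}\mid \rv{C}=c}}}$ (note that this direction is well defined: the support of $\distribution{\rv{A}\mid\rv{B}=b,\rv{C}=c}$ is contained in that of $\distribution{\rv{A}\mid\rv{C}=c}$), and this is also the form in which the paper actually \emph{uses} the fact — in the proof of Lemma~\ref{lem:dgl-istar-close} the first argument of $\mathbb{D}$ is the posterior $\distribution{\rIstar_r\mid \Prot_1,\labeling,j}$ and the second is $\distribution{\rIstar_r\mid\labeling,j}$ — so the order printed in Fact~\ref{fact:kl-info} is best read as a typo. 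To repair your write-up, carry out the final identification explicitly and state the conclusion with the KL arguments swapped (flagging the discrepancy with the printed statement), rather than asserting agreement with the display.
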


We denote the \emph{total variation distance} between two distributions $\mu$ and $\nu$ over the same probability space $\Omega$ by $\tvd{\mu}{\nu} = \frac{1}{2} \cdot \sum_{x \in \Omega} \card{\Pr_{\mu}(x) - \Pr_{\nu}(x)}$. 

The following Pinskers' inequality bounds the total variation distance between two distributions based on their KL-divergence, 

\begin{fact}[Pinsker's inequality] \label{fact:pinskers}
	For any two distributions $\mu$ and $\nu$, $\tvd{\mu}{\nu} \leq \sqrt{\frac{1}{2} \cdot \DD{\mu}{\nu}}$. 
\end{fact}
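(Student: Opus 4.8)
The plan is to obtain Pinsker's inequality from the \emph{data-processing} (coarsening) property of KL divergence together with a one-dimensional calculus estimate for two-point distributions. First I would reduce to the binary case. Fix $\mu,\nu$ on $\Omega$ and let $A := \set{x \in \Omega : \Pr_\mu(x) \geq \Pr_\nu(x)}$, the event on which $\mu$ dominates $\nu$; by definition of total variation distance, $\tvd{\mu}{\nu} = \Pr_\mu(A) - \Pr_\nu(A)$. Let $\mu_A,\nu_A$ denote the push-forwards of $\mu,\nu$ under the partition $\set{A,\neg A}$, i.e. the Bernoulli distributions with $\Pr_{\mu_A}(A)=\Pr_\mu(A)$ and $\Pr_{\nu_A}(A)=\Pr_\nu(A)$. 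Then $\tvd{\mu_A}{\nu_A} = \Pr_\mu(A)-\Pr_\nu(A) = \tvd{\mu}{\nu}$, while $\DD{\mu_A}{\nu_A} \leq \DD{\mu}{\nu}$; the latter I would prove directly from the log-sum inequality (equivalently, convexity of $t\mapsto t\log t$ via Jensen applied to the likelihood ratios), grouping the terms of $\DD{\mu}{\nu}$ over $x\in A$ and over $x\in\neg A$ and using $\sum_{x\in A}\Pr_\mu(x)\log\frac{\Pr_\mu(x)}{\Pr_\nu(x)} \geq \Pr_\mu(A)\log\frac{\Pr_\mu(A)}{\Pr_\nu(A)}$, and likewise for $\neg A$. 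Hence it suffices to prove $\tvd{\mu_A}{\nu_A}^2 \leq \tfrac12\DD{\mu_A}{\nu_A}$, i.e. the inequality for two-point distributions.

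For the binary case, write $p := \Pr_\mu(A)$ and $q := \Pr_\nu(A)$; by the choice of $A$ we have $p \geq q$, and we may assume $0 < q \leq p \leq 1$ (if $q=0<p$ the right-hand side is infinite, and if $p=q$ both sides vanish). We must show $(p-q)^2 \leq \tfrac12\Paren{p\ln\tfrac pq + (1-p)\ln\tfrac{1-p}{1-q}}$. Fix $p$ and set $g(q) := p\ln\tfrac pq + (1-p)\ln\tfrac{1-p}{1-q} - 2(p-q)^2$ for $q \in (0,p]$; then $g(p)=0$ and
\[
  g'(q) = -\frac pq + \frac{1-p}{1-q} + 4(p-q) = \frac{q-p}{q(1-q)} + 4(p-q) = (p-q)\Paren{4 - \frac{1}{q(1-q)}}.
\]
Since $q(1-q) \leq \tfrac14$, the factor $4 - \tfrac{1}{q(1-q)}$ is $\leq 0$, and $p-q \geq 0$, so $g'(q) \leq 0$ on $(0,p]$; as $g$ is non-increasing and vanishes at $q=p$, we get $g(q) \geq 0$ for all $q \leq p$, which is exactly the desired binary inequality. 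Combining with the reduction of the previous paragraph yields $\tvd{\mu}{\nu} \leq \sqrt{\tfrac12\DD{\mu}{\nu}}$.

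The only real subtlety — and it is mild — is establishing the coarsening bound $\DD{\mu_A}{\nu_A} \leq \DD{\mu}{\nu}$; once one invokes the log-sum inequality, everything else is the routine one-variable computation above. I would also note a bookkeeping point about the logarithm base: the constant $\tfrac12$ is correct when $\DD{\cdot}{\cdot}$ uses the natural logarithm (nats), so throughout this proof ``$\log$'' should be read as $\ln$; with $\log_2$ the sharp constant is $\tfrac{\ln 2}{2}$ instead, and the argument is otherwise unchanged.
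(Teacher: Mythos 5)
Your proof is correct and complete. The paper itself states Pinsker's inequality as a standard fact without proof (it is imported from the information theory literature along with the other facts in Appendix A), so there is no in-paper argument to compare against; what you give is precisely the classical textbook proof — reduce to a two-point distribution via the log-sum (data-processing) inequality applied to the partition $\set{A,\neg A}$ with $A$ the set where $\mu$ dominates, then verify the Bernoulli case by the monotonicity of $g$. All steps check out, including the edge cases and the remark that the constant $\tfrac12$ is the natural-log normalization (with $\log_2$ the statement only becomes weaker, so the Fact as written remains true either way).
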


Finally, 
\begin{fact}\label{fact:tvd-small}
	Suppose $\mu$ and $\nu$ are two distributions for an event $\event$, then, $\Pr_{\mu}(\event) \leq \Pr_{\nu}(\event) + \tvd{\mu}{\nu}$. 
\end{fact}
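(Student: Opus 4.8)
The plan is to derive this elementary inequality directly from the definition of total variation distance recorded just above the statement, namely $\tvd{\mu}{\nu} = \frac{1}{2}\sum_{x\in\Omega}\card{\Pr_{\mu}(x)-\Pr_{\nu}(x)}$, by way of the standard ``positive part'' characterization of $\tvd{\mu}{\nu}$.

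First I would expand $\Pr_{\mu}(\event)-\Pr_{\nu}(\event)=\sum_{x\in\event}\bigl(\Pr_{\mu}(x)-\Pr_{\nu}(x)\bigr)$ and discard every summand $x$ with $\Pr_{\mu}(x)\le\Pr_{\nu}(x)$, which only enlarges the sum. Writing $\Omega^{+}:=\set{x\in\Omega:\Pr_{\mu}(x)>\Pr_{\nu}(x)}$, this gives $\Pr_{\mu}(\event)-\Pr_{\nu}(\event)\le\sum_{x\in\Omega^{+}}\bigl(\Pr_{\mu}(x)-\Pr_{\nu}(x)\bigr)=:P$. It therefore suffices to show $P=\tvd{\mu}{\nu}$, after which rearranging yields the claim.

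To see $P=\tvd{\mu}{\nu}$, set $\Omega^{-}:=\Omega\setminus\Omega^{+}$ and $N:=\sum_{x\in\Omega^{-}}\bigl(\Pr_{\nu}(x)-\Pr_{\mu}(x)\bigr)$, so that $P+N=\sum_{x\in\Omega}\card{\Pr_{\mu}(x)-\Pr_{\nu}(x)}=2\tvd{\mu}{\nu}$. Since $\mu$ and $\nu$ are probability distributions, $\sum_{x\in\Omega}\Pr_{\mu}(x)=\sum_{x\in\Omega}\Pr_{\nu}(x)=1$, and hence $P-N=\sum_{x\in\Omega}\bigl(\Pr_{\mu}(x)-\Pr_{\nu}(x)\bigr)=0$. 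Combining the two displayed relations gives $P=N=\tvd{\mu}{\nu}$, and substituting back gives $\Pr_{\mu}(\event)\le\Pr_{\nu}(\event)+\tvd{\mu}{\nu}$.

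There is no real obstacle here: the only mild point of care is that the sums run over the (at most countable) sample space $\Omega$ on which $\mu$ and $\nu$ are jointly defined, so that the splitting into $\Omega^{+}$ and $\Omega^{-}$ and the rearrangements are legitimate; in every application of this fact in the paper the distributions are discrete with finite support, so this is automatic, and one may replace $\sum$ by $\int$ in the general continuous case without changing anything.
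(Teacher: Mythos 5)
Your proof is correct: the paper states Fact~\ref{fact:tvd-small} without proof as a standard property of total variation distance, and your argument is the standard derivation, restricting the difference $\Pr_{\mu}(\event)-\Pr_{\nu}(\event)$ to the set where $\mu$ dominates $\nu$ and then using $\sum_x \Pr_\mu(x) = \sum_x \Pr_\nu(x) = 1$ to identify that restricted sum with $\tvd{\mu}{\nu}$. All steps check out, including the remark that the distributions in question are discrete, so nothing further is needed.
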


\section{Further Discussion on Our Framework}\label{app:discussions}

We discuss further extensions to our framework for proving communication lower bounds for bounded round protocols  
introduced in Section~\ref{sec:dgl} including how to use the framework to obtain lower bounds for search problems and the connection of this framework to
previous results in~\cite{DobzinskiNO14,Konrad15,AssadiKLY16,AssadiKL17,AlonNRW15,Assadi17ca}. 

\paragraph{Search Problems.} We can also use our framework to prove a lower bound for a search problem $\PP : \set{0,1}^{s} \mapsto R_s$ (for some range $R_s$); for example, think of $\PP$ as finding edges of an approximate 
matching. The framework is as before for the most part. In a search problem, we do not have \Yes and \No instances, rather all instances are sampled from the same distribution, and the goal of the players is to find a 
suitable answer in $R_s$, e.g., a large matching in the example above. In the following, we discuss the changes needed in our framework to be able to prove lower bounds for search problems as well.

Instead of having $g_r$ copies of the same special instance, we sample the special instance 
of each group \emph{independently} from $\dist_{r}$ (note that this is \emph{not} possible for a decision problem because we need all special instances to be either a \Yes instance or a \No instance which correlates them).
We also change the definition of $\gamma$-preserving property slightly so that it ensures that to solve $\PP_{s_r}$, \emph{at least one group} of players need to solve $\PP_{s_{r-1}}$ on their special
instance w.p. $1-\gamma$. With this property, we can do the embedding of $(r-1)$-round instances in $r$-round instances in Lemma~\ref{lem:dgl-round-elimination} as before with a slight change; the players in $\PP_{s_{r-1}}$ only 
need to embed their input in one group of the distribution $\dist_r$ (as opposed to ``copying'' themselves $g_r$ times) and can sample the input for rest of the groups using public randomness. 

The proof is similar as before with one crucial change. Since special instances are sampled independently, one can in fact show a stronger result than the one in Lemma~\ref{lem:dgl-info-special} for decision problems; in particular, 
for any group $i \in [g_r]$, one can now show that, 
\begin{align*}
	\mi{I^{i}_{\jstar}}{\rProt_1 \mid \rPhi,\rJ} \leq \sum_{q \in P_i} {\card{\rProt_{1,q}}}/{w_r}. 
\end{align*}
In other words, as only the players in group $P_i$ can communicate information about the special instance of this group and hence the information revealed about this special instance is bounded by the message length of \emph{this} 
particular group, and not all players. This improves the per player communication lower bound by a factor of $g_r$ in each round, which is crucial for some application, e.g., in~\cite{AlonNRW15}. 
The rest of the proof is as before. 

\paragraph{Connection to Previous Work.} The framework introduced in Section~\ref{sec:dgl} plus the extension for search problems
subsumes the communication lower bounds in~\cite{AlonNRW15,Assadi17ca}, the lower bound for super constant estimation algorithms of matching size in \emph{dense} graphs in~\cite{AssadiKL17},
and the lower bound for combinatorial auctions in~\cite{DobzinskiNO14}. We again emphasize that our framework only facilitates proving the communication lower bound in those arguments; to obtain the desired 
bound on the approximation ratio subject to this communication lower bound, one still needs to instantiate the framework with suitable packing and labeling functions that are designed specifically for the problem at hand
at in each of these results separately. 

To obtain the lower bounds in~\cite{Konrad15,AssadiKLY16}, the $(1+\eps)$-approximation lower bounds in~\cite{AssadiKL17}, and the lower bound for unit-demand auctions (matching markets)
in~\cite{DobzinskiNO14}, we need to modify the framework as follows: in our framework, we use the fact that a protocol that reveals $o(1)$ bits of information in distribution $\dist_0$, cannot solve 
the problem $\PP_0$ with probability more than $1/2+o(1)$, which is always true (by Lemma~\ref{lem:dgl-istar-close} and Fact~\ref{fact:tvd-small}). 
However, in some scenarios, if we insists on only revealing $o(1)$ bits of information about $\dist_0$, we cannot hope to achieve
any meaningful lower bounds for $\dist_1$ (as $w_1$ cannot be sufficiently large in the parameters of the problem); this is the case for the aforementioned results. 
To achieve those (and similar) simultaneous lower bounds, we need to apply Lemma~\ref{lem:dgl-info-special} with larger values of $\card{\Prot_1}$ and obtain that, for some suitably chosen value of $t$, only $o(t)$ bits of information
are revealed about the instance of $\dist_0$; this allows for proving a communication lower bound of $\Omega(w_1 \cdot t)$ instead of  $\Omega(w_1)$ which follows directly from our framework. 
However, in this case, one needs to also argue that revealing $o(t)$ bits of information about $\dist_0$ still does not allow for solving this problem with a sufficiently large probability. 
This step is again problem specific and was shown to be correct for~\cite{Konrad15,AssadiKLY16} using a combinatorial argument, in~\cite{DobzinskiNO14} using a reduction to set disjointness
in communication complexity~\cite{Razborov92}, and in~\cite{AssadiKL17} using an information complexity lower bound for the boolean hidden hypermatching problem~\cite{VerbinY11}. 
We remark that except for the argument in~\cite{AssadiKL17}, the aforementioned results were
proven using different combinatorial arguments; our framework suggests a unified approach for proving all these lower bounds.

\end{document}